\newcommand{\marina}[1]{}
\newcommand{\hamed}[1]{}
\newcommand{\hsinhao}[1]{}
\newcommand{\blackbox}[1]{\textsc{#1}}
\newtheorem{theorem}{Theorem}
\newtheorem{definition}{Definition}
\newtheorem{lemma}{Lemma}
\newtheorem{remark}{Remark}
\newtheorem{observation}{Observation}
\newtheorem{proposition}{Proposition}
\newenvironment{customthm}[1]
  {\innercustomthm}
  {\endinnercustomthm}
\newcommand{\C}{\mathcal{C}}
\newcommand{\R}{\mathcal{R}}
\newcommand{\cycleconj}{\textsf{1-vs-2Cycle}}
\newcommand{\compress}{\blackbox{Compress}}
\newcommand{\rake}{\blackbox{Rake}}
\newcommand{\bigsmall}{$\alpha$-Big-Small tree}
\newcommand{\children}{\Gamma}
\newcommand{\leaves}{\mathcal{L}}
\newcommand{\degr}{deg}
\newcommand{\machines}{\mathcal{P}}
\newcommand{\memory}{\mathcal{S}}
\newcommand{\hashtable}{\mathcal{H}}
\newcommand{\degwei}{degree-weighted}
\title{Adaptive Massively Parallel Constant-round Tree Contraction}
\date{}
\author{Anonymous Authors}
\author{
MohammadTaghi Hajiaghayi\footnote{Supported by the NSF BIGDATA Grant No. 1546108, NSF SPX Grant No. 1822738, and NSF AF Grant No. 2114269.}\\
University of Maryland\\
\texttt{hajiagha@cs.umd.edu}
\and
Marina Knittel\footnote{Supported by the NSF BIGDATA Grant No. 1546108, NSF SPX Grant No. 1822738, ARCS Endowment Award, and Ann G. Wylie Fellowship.}\\
University of Maryland\\
\texttt{mknittel@cs.umd.edu}
\and
Hamed Saleh\\
University of Maryland\\
\texttt{hamed@cs.umd.edu}
\and
Hsin-Hao Su\footnote{Supported by NSF Grant No. CCF-2008422.}\\
Boston College\\
\texttt{suhx@bc.edu}
}
\renewcommand{\varepsilon}{\epsilon}
\begin{document}

\maketitle

\begin{abstract}
Miller and Reif's FOCS'85~\cite{miller1989parallel} classic and fundamental {\em tree contraction} algorithm  is a broadly applicable technique for the parallel solution of a large number of tree problems. Additionally it is also used as an algorithmic design technique for a large number of parallel graph algorithms. In all previously explored models of computation, however, tree contractions have only been achieved in $\Omega(\log n)$ rounds of parallel run time. In this work, we not only introduce a generalized tree contraction method but also show it can be computed highly efficiently in $O(1/\epsilon^3)$ rounds in the Adaptive Massively Parallel Computing (AMPC) setting, where each machine has $O(n^\epsilon)$ local memory for some $0 < \epsilon < 1$. AMPC is a practical extension of Massively Parallel Computing (MPC) which utilizes distributed hash tables~\cite{bateni2017affinity,behnezhad2019massively, kiveris2014connected}. In general, MPC is an abstract model for MapReduce, Hadoop, Spark, and Flume which are currently widely used across industry and has been studied extensively in the theory community in recent years. Last but not least, we show that our results extend to multiple problems on trees, including but not limited to maximum and maximal matching, maximum and maximal independent set, tree isomorphism testing, and more.
\end{abstract}

\section{Introduction}


In this paper, we study and extend Miller and Reif's fundamental FOCS'85 ~\cite{miller1985parallel,miller1991parallel,miller1989parallel} $O(\log n)$-round parallel {\em tree contraction} method. Their work leverages PRAM, a model of computation in which a large number of processors operate synchronously under a single clock and are able to randomly access a large shared memory. In PRAM, tree contractions require $n$ processors. Though the initial study of tree contractions was in the CRCW (concurrent read from and write to shared memory) PRAM model, this was later extended to the stricter EREW (exclusive read from and write to shared memory) PRAM model~\cite{dekel1986parallel} as well, and then to work-optimal parallel algorithms with $O(n/\log n)$ processors~\cite{gazit1988optimal}. Since then, a number of additional works have also built on top of Miller and Reif's tree contraction algorithm~\cite{acar2004dynamizing,cole1988the,gibbons1989optimal}.  Tree-based computations have a breadth of applications, including natural graph problems like matching and bisection on trees, as well as problems that can be formulated on tree-like structures including expression simplification. 

The tree contraction method in particular is an extremely broad technique that can be applied to many problems on trees. Miller and Reif~\cite{miller1989parallel} initially motivated their work by showing it can be used to evaluate arithmetic expressions. They additionally studied a number of other applications~\cite{miller1991parallel}, using tree contractions to construct the first polylogarithmic round algorithm for tree isomorphism and maximal subtree isomorphism of unbounded degrees, compute the 3-connected components of a graph, find planar embeddings of graphs, and compute list-rankings. An incredible amount of research has been conducted to further extend the use of tree contractions for online evaluation of arithmetic circuits~\cite{miller1988efficient}, finding planar graph separators~\cite{gazit1987a}, approximating treewidth~\cite{bodlaender2016a}, and much more  \cite{atallah1989constructing,goodrich1996sorting,grohe2006testing,jez2016approximation,miller1987a,papadopoulos2015practical}. This work extends classic tree contractions to the adaptive massively parallel setting.

The importance of large-scale data processing has spurred a large interest in the study of massively parallel computing in recent years. Notably, the \textit{Massively Parallel Computation} (MPC) model has been studied extensively in the theory community for a range of applications \cite{ahn2015access,andoni2014parallel,andoni2018parallel,andoni2019log,assadi2019coresets,assadi2019sublinear,assadi2019massively,bateni2017affinity,bateni2018massively,behnezhad2019exponentially,behnezhad2019near,behnezhad2019streaming,boroujeni2018approximating,czumaj2018round,ghaffari2018improved,hajiaghayi2020matching,harvey2018greedy,lacki2020walking,nanongkai2019equivalence,roughgarden2016shuffles,yaroslavtsev2018massively}, many with a particular focus on graph problems. MPC is famous for being an abstraction of MapReduce~\cite{karloff2010a}, a popular and practical programming framework that has influenced other parallel frameworks including Spark~\cite{zaharia2016apache}, Hadoop~\cite{hadoop}, and Flume~\cite{chambers2010flumejava}. At a high level, in MPC, data is distributed across a range of low-memory machines which execute local computations in rounds. At the end of each round, machines are allowed to communicate using messages that do not exceed their local space constraints. In the most challenging space-constrained version of MPC, we restrict machines to $O(n^\epsilon)$ local space for a constant $0 < \epsilon < 1$ and $\widetilde{O}(n+m)$ total space (for graphs with $m$ edges, or just $\widetilde{O}(n)$ otherwise).

The computation bottleneck in practical implementations of massively parallel algorithms is often the amount of communication. Thus, work in MPC often focuses on \textit{round complexity}, or the number of rounds, which should be $O(\log n)$ at a baseline. More ambitious research often strives for sublogarithmic or even constant round complexity, though this often requires very careful methods.
Among others, a specific family of graph problems known as \textit{Locally Checkable Labeling} (LCL) problems -- which includes vertex coloring, edge coloring, maximal independent set, and maximal matching to name a few -- admit highly efficient MPC algorithms, and have been heavily studied during recent years~\cite{behnezhad2019exponentially,assadi2019sublinear,assadi2019coresets,behnezhad2019bmassively,ghaffari2020improved,ghaffari2019sparsifying,czumaj2018round}. Another consists of DP problems on sequences including edit distance~\cite{boroujeni2018approximating} and longest common subsequence~\cite{hajiaghayi2019massivelyb}, as well as pattern matching~\cite{hajiaghayi2019massivelya}. The round complexity of aforementioned MPC algorithms can be interpreted as the parallelization limit of the corresponding problems.

While MPC is generally an extremely efficient model, it is theoretically limited by the widely believed \cycleconj{} conjecture~\cite{ghaffari2019conditional}, which poses that distinguishing between a graph that is a single $n$-cycle and a graph that is two $n/2$-cycles requires $\Omega(\log n)$ rounds in MPC. This has been shown to imply lower bounds on MPC round complexity for a number of other problems, including connectivity~\cite{behnezhad2019near}, matching~\cite{ghaffari2019conditional,nanongkai2019equivalence}, clustering~\cite{yaroslavtsev2018massively}, and more \cite{andoni2019log,ghaffari2019conditional,lacki2020walking}. To combat these conjectured bounds, Behnezhad et al. \cite{behnezhad2019massively} developed a stronger and practically-motivated extension of MPC, called \textit{Adaptive Massively Parallel Computing} (AMPC). AMPC was inspired by two results showing that adding distributed hash tables to the MPC model yields more efficient algorithms for finding connected components~\cite{kiveris2014connected} and creating hierarchical clusterings~\cite{bateni2017affinity}. AMPC models exactly this: it builds on top of MPC by allowing in-round access to a distributed read-only hash table of size $O(n+m)$.
See Section~\ref{sec:ampc} for a formal definition.

In their foundational work, Behnezhad et al. \cite{behnezhad2019massively} design AMPC algorithms that outperform the MPC state-of-the-art on a number of problems. This includes solving minimum spanning tree and 2-edge connectivity in $\log\log_{m/n}(n)$ AMPC rounds (outperforming $O(\log n)$ and $O(\log D\log\log_{m/n} n)$ MPC rounds respectively), and solving maximal independent set, 2-\textsf{Cycle}, and forest connectivity in $O(1)$ AMPC rounds (outperforming $\widetilde{O}(\sqrt{\log n})$, $O(\log n)$, and $O(\log D\log\log_{m/n}n)$ MPC rounds respectively). Perhaps most notably, however, they proved that the \cycleconj{} conjecture does not apply to AMPC by finding an algorithm to solve connectivity in $O(\log\log_{m/n}n)$ rounds. This was later improved to be $O(1/\epsilon)$ by Behnezhad et al.~\cite{behnezhad2020parallel}, who additionally found improved algorithms for AMPC minimum spanning forest and maximum matching. Charikar,
 Ma, and Tan~\cite{charikar2020unconditional} very recently show that connectivity in the AMPC model requires $\Omega(1/\epsilon)$ rounds unconditionally, and thus the connectivity result of Behnezhad et al.~\cite{behnezhad2020parallel} is indeed tight. 
 

A notable drawback of the current work in AMPC is that there is no generalized framework for solving multiple problems of a certain class. Such methods are important for providing a deeper understanding of how the strength of AMPC can be leveraged to beat MPC in general problems, and often leads to solutions for entirely different problems. Studying Miller and Reif~\cite{miller1989parallel}'s tree contraction algorithm in the context of AMPC provides exactly this benefit. We get a generalized technique for solving problems on trees, which can be extended to a range of applications.

Recently, Bateni et al.~\cite{bateni2018mapreduce} introduced a generalized method for solving ``polylog-expressible'' and ``linear-expressible'' dynamic programs on trees  in the MPC model. This was heavily inspired by tree contractions, and also is a significant inspiration to our work. Specifically, their method solves minimum bisection, minimum $k$-spanning tree, maximum weighted matching, and a large number of other problems in $O(\log n)$ rounds. We  extend these methods, as well as the original tree contraction methods, to the AMPC model to create more general techniques that solve many problems in $O_\epsilon(1)$ rounds.

\subsection{The AMPC Model}\label{sec:ampc}
The AMPC model, introduced by Behnezhad et. al~\cite{behnezhad2019massively}, is an extension of the standard MPC model with additional access to a \textit{distributed hash table}. In MPC, data is initially distributed across machines and then computation proceeds in rounds where machines execute local computations and then are able to share small messages with each other before the next round of computation. A distributed hash table stores a collection of key-value pairs which are accessible from every machine, and it is required that both key and value have a constant size. Each machine can adaptively query a bounded sequence of keys from a centralized distributed hash table during each round, and write a bounded number of key-value pairs to a distinct distributed hash table which is accessible to all machines in the next round. The distributed hash tables can also be utilized as the means of communication between the machines, which is implicitly handled in the MPC model, as well as a place to store the initial input of the problem. It is straight-forward to see how every MPC algorithm can be implemented within the same guarantees for the round-complexity and memory requirements in the AMPC model.


\begin{definition}
Consider a given graph on $n$ vertices and $m$ edges. In the \textbf{AMPC model}, there are $\machines$ machines each with \textbf{sublinear} local space $\memory  = O(n^\epsilon)$ for some constant $0 < \epsilon < 1$, and the total memory of machines is bounded by $\widetilde{O}(n + m)$. In addition, there exist a collection of distributed hash tables $\hashtable_0,\hashtable_1,\hashtable_2,\ldots$, where $\hashtable_0$ contains the initial input.

The process consists of several rounds. During round $i$, each machine is allowed to make at most $O(\memory)$ read queries from $\hashtable_{i-1}$ and to write at most $O(\memory)$ key-value pairs to $\hashtable_i$. Meanwhile, the machines are allowed to perform an arbitrary amount of computation locally. Therefore, it is 
possible for machines to decide what to query next after observing the result of previous queries. In this sense, the queries in this model are \textbf{adaptive}.

\end{definition}


\subsection{Our Contributions}

The goal of this paper is to present a framework for solving various problems on trees with constant-round algorithms in AMPC.
This is a general strategy, where we intelligently shrink the tree iteratively via a \emph{decomposition} and \emph{contraction} process.  Specifically, we follow Miller and Reif's~\cite{miller1989parallel} two-stage process, where we first \emph{compress} each connected component in our decomposition\footnote{Each group in our decomposition may consists of multiple connected components on the tree.}, and then \emph{rake} the leaves by contracting all leaves of the same parent together. We repeat until we are left with a single vertex, from which we can extract a solution. To retrieve the solution when the output corresponds to many vertices in the tree (i.e., maximum matching istead of maximum matching value), we can undo the contractions in reverse order and populate the output as we gradually reconstruct the original tree.

The decomposition strategy must be constructed very carefully such that we do not lose too much information to solve the original problem and each connected component must fit on a single machine with $O(n^\epsilon)$ local memory. 
To compress, we require oracle access to a black-box function, a \emph{connected contracting function}, which can efficiently contract a connected component into a vertex while also retaining enough information to solve the original problem. To rake leaves, we require oracle access to another block-box function, a \emph{sibling contracting function}, which executes the same thing but on a set of leaves that share a parent. These two black-box functions are problem specific (e.g., we need a different set of functions for maximum matching and maximum independent set). In this paper, 
we only require contracting functions to accept $n^\epsilon$ vertices as the input subgraphs, and we always run these black-box functions locally on a single machine. Thus, we can compress any arbitrary collection of disjoint components of size at most $n^\epsilon$ in $O(1)$ AMPC rounds. See Section \ref{subsec:contracting} for formal definitions.

This general strategy actually works on a special class of structures, called \emph{\degwei{} trees} (defined in \S\ref{sec:preliminaries}). Effectively, these are trees $T = (V, E, W)$ with a multi-dimensional weight function where $W(v) \in\{0,1\}^{\widetilde{O}(\deg(v))}$ stores a vector of bits proportional in size to the degree of the vertex $v \in V$. When we use our contracting functions, we use $W$ to store data about the set of vertices we are contracting. This is what allows our algorithms to retain enough information to construct a solution to the entire tree $T$ when we contract sets of vertices. Note that the degree of the surviving vertex after contraction could be much smaller than the total degree of the original set of vertices. 

Our first algorithm works on trees with bounded degree, more precisely, trees with maximum degree at most $n^\epsilon$. The reason this is easier is because when an internal connected component is contracted, we often need to encode the output of the subproblem at the root (e.g., the maximum weighted matching on the rooted subtree) in terms of the children of this component post-contraction. In high degree graphs, 
it may have many children after being contracted, and therefore require a large encoding (i.e., one larger than $O(n^\epsilon)$) and thus not fit on one machine. 

In this algorithm, we find that if the degree is bounded by $n^\epsilon$ and we compress sufficiently small components, then the algorithm works out much more smoothly. The underlying technique that allows us to contract the tree into a single vertex in $O(1/\epsilon)$ iterations is a decomposition of vertices based on their preorder numbering. The surprising fact is that each group in this decomposition contains at most one non-leaf vertex after contracting connected components. Thus, an additional single rake stage is sufficient to collapse any tree with $n$ vertices to a tree with at most $n^{1-\epsilon}$ vertices in a single iteration. However, we need $O(1/\epsilon)$ AMPC rounds at the beginning of each iteration to find the decomposition associated with the resulting tree after contractions performed in the previous iteration. This becomes $O(1/\epsilon^2)$ AMPC rounds across all iterations. See Section \ref{sec:boundeddegree} for the proofs and more details.

This is a nice independent result, proving a slightly more efficient $O(1/\epsilon^2)$-round algorithm on degree bounded trees. Additionally, many problems on larger degree trees can be represented by lower degree graphs. For example, both the original Miller and Reif~\cite{miller1985parallel} tree contraction and the Betani et al.~\cite{bateni2018mapreduce} framework consider only problems in which we can replace each high degree vertex by a balanced binary tree, reducing the tree-based computation on general trees to a slightly different computation on binary trees.
Equally notably, it is an important subroutine in our main algorithm.

\begin{restatable}{theorem}{boundedmain}
\label{thm:boundedmain}
Consider a \degwei{} tree $T = (V, E, W)$ and a problem $P$. Given a connected contracting function on $T$ with respect to $P$, one can compute $P(T)$ in $O(1/\epsilon^2)$ AMPC rounds with $O(n^\epsilon)$ memory per machine and $\widetilde{O}(n)$ total memory if $\deg(v)\leq n^\epsilon$ for every vertex $v\in V$.
\end{restatable}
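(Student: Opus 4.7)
The plan is to iterate a two-phase reduction (a \compress{} phase followed by a single \rake{} phase) that shrinks the tree from $n$ vertices down to $n^{1-\epsilon}$ vertices per iteration, so that after $O(1/\epsilon)$ iterations only a single vertex remains. Within each iteration I would spend $O(1/\epsilon)$ AMPC rounds, yielding the overall $O(1/\epsilon^2)$ bound. The centerpiece of each iteration is a decomposition of $V$ into contiguous blocks of size at most $n^\epsilon$ according to a preorder numbering obtained by rooting $T$ arbitrarily and running an Euler-tour / list-ranking routine. List ranking on a forest in AMPC is known to cost $O(1/\epsilon)$ rounds, which is exactly the per-iteration budget I am aiming for.

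Once the preorder numbering is in place, I would partition the vertices into $n^{1-\epsilon}$ groups of $n^\epsilon$ consecutive preorder indices. The first key structural claim is that, because preorder lists every maximal subtree in a contiguous interval, each group induces a subgraph whose connected components together contain at most one vertex whose subtree in $T$ is \emph{not} entirely inside the group; all other vertices of the group have their entire subtree contained in the group and therefore become leaves of the tree obtained after contracting each connected component of the group to a single vertex. This is the non-obvious combinatorial fact highlighted in the paper, and I would prove it by a direct induction on preorder indices, showing that if a group $G$ intersects the ancestor chain of its minimum-indexed vertex $v$, then only the ``top'' vertex of $G$ (the one closest to the root) can have descendants escaping $G$.

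Given this structural property, the \compress{} phase places each group on a single machine (using $O(n^\epsilon)$ local memory, since a group has $n^\epsilon$ vertices and the degree bound $\deg(v)\le n^\epsilon$ keeps the total boundary information manageable) and invokes the connected contracting function supplied by the hypothesis. After this, every group collapses to a single ``spine'' vertex (the at-most-one non-leaf) together with a collection of leaves; a single \rake{} step, which identifies leaves sharing a parent and applies the contracting function to each such star (again on one machine, by the degree bound), then removes all these leaves. A straightforward counting argument shows that the number of surviving vertices is at most $n/n^\epsilon=n^{1-\epsilon}$, because each surviving vertex is charged to a distinct group. The \degwei{} invariant is preserved because a contracted vertex inherits a weight vector of length proportional only to its new degree, and the contracting function stores the information about the contracted subtree implicitly in this vector for later reconstruction.

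The main obstacle I anticipate is the structural claim about preorder blocks: one must argue carefully that the ``non-leaf escape vertex'' is unique per group under contraction, because this is what keeps a single rake step sufficient and therefore keeps each iteration at $O(1/\epsilon)$ rounds rather than $O(\log n)$. The remaining ingredients --- implementing list ranking, distributing groups across machines via hash-table lookups, and bookkeeping to reverse the contractions when the problem $P$ requires a full solution rather than just a value --- are essentially standard AMPC primitives and I would defer them to routine lemmas. Summing $O(1/\epsilon)$ rounds per iteration over $O(1/\epsilon)$ iterations gives the claimed $O(1/\epsilon^2)$ round complexity, while the $\widetilde{O}(n)$ total memory bound follows because the vertex count decays geometrically and each vertex carries only $\widetilde{O}(\deg(v))$ weight.
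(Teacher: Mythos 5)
Your proposal tracks the paper's proof closely: same preorder decomposition, same key structural lemma that each group contributes at most one non-leaf to the dependency tree, same per-iteration budget of $O(1/\epsilon)$ rounds for recomputing the preorder numbering, and same $O(1/\epsilon)$ iterations to collapse the tree. The one place where you diverge is not a matter of style but a genuine gap: you partition by \emph{vertex count} (``groups of $n^\epsilon$ consecutive preorder indices''), whereas the paper partitions by \emph{degree sum}, requiring $\sum_{v\in V_i}\deg(v)\le n^\epsilon$ (Observation~\ref{obs:groups}). Your parenthetical claim that ``a group has $n^\epsilon$ vertices and the degree bound $\deg(v)\le n^\epsilon$ keeps the total boundary information manageable'' does not hold: a block of $n^\epsilon$ vertices each with degree close to $n^\epsilon$ carries weight vectors totaling $\widetilde{\Theta}(n^{2\epsilon})$ bits, which does not fit in $O(n^\epsilon)$ local memory. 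Worse, contracting such a block could produce a vertex of degree $\Theta(n^{2\epsilon})$, breaking the $\deg(v)\le n^\epsilon$ invariant that your next iteration relies on. Both problems disappear if you instead greedily cut the preorder sequence whenever the running degree sum would exceed $n^\epsilon$; since $\sum_v \deg(v)\le n$, you still get $O(n^{1-\epsilon})$ groups, so Lemma~\ref{lemma:preorder-decomposition} still yields the $n\to n^{1-\epsilon}$ shrinkage per phase. Everything else in your argument then goes through as you describe.

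One smaller remark: your sketch of the one-non-leaf-per-group claim (``only the top vertex of $G$, the one closest to the root, can have descendants escaping $G$'') is slightly misphrased. The correct statement, and what the paper proves, is that among the connected components induced by a group, only the component containing the \emph{largest preorder index} in the group can be dependent; the distinguishing feature is the preorder index, not proximity to the global root. The induction you outline would need to be anchored on that invariant rather than on ancestor-chain membership of the minimum-indexed vertex.
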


\begin{remark}\label{remark:median} It may be tempting to suggest that in most natural problems the input tree can be transformed into a tree with degree bounded by $n^\epsilon$. However, we briefly pose the \textsf{MedianParent} problem, where leaves are given values and parents are defined recursively as the median of their children. By transforming the tree to make it degree bounded, we lose necessary information to find the median value among the children of a high degree vertex. 
\end{remark}

Next, we move onto our main result: a generalized tree contraction algorithm that works on any input tree with arbitrary structure. Building on top of Theorem~\ref{thm:boundedmain}, we can create a natural extension of tree contractions. Recall that the black-box contracting functions encode the data associated with a contracted vertex in terms of its children post-contraction. Thus, allowing high degree vertices introduces difficulties working with contracting functions. In particular, it is not possible to store the weight vector $W(v)$ of a high degree vertex $v$ inside the local memory of a single machine. The power of this algorithm is its ability to implement \compress{} and \rake{} for $n^\epsilon$-tree-contractions in $O(1/\epsilon^3)$ rounds. 

The most significant novelty of our main algorithm is the handling of high degree vertices. To do this, we first handle all maximal connected components of low degree vertices using the algorithm from Theorem~\ref{thm:boundedmain} as a black-box. This compresses each such component into one vertex without needing to handle high degree vertices. By contracting these components, we obtain a special tree called \emph{Big-Small-tree} (defined formally in \S\ref{sec:main}) which exhibits nice structural properties. Since the low degree components are maximal, the degree of each vertex in every other layer is at least $n^\epsilon$, implying an $O(1/\epsilon)$ upper-bound on the depth of Big-Small-trees. Hence, after a single rake stage, the number of high degree vertices drops by a factor of $n^\epsilon$.

In order to rake the leaves of high degree vertices, we have to carefully apply our sibling contracting functions in a way that can be implemented efficiently in AMPC. Unlike Theorem~\ref{thm:boundedmain} in which having access to a connected contracting function is sufficient, here we also require a sibling contracting function. Consider a star tree with its center at the root. Without a sibling contracting function, we are able to contract at most $O(n^\epsilon)$ vertices in each round since the components we pass to the contracting functions must be disjoint. But having access to a sibling contracting function, we can rake up to $O(n)$ leaf children of a high degree vertex in $O(1/\epsilon)$ rounds. For more details about the algorithm and proofs see Section~\ref{sec:main}.



\begin{restatable}{theorem}{main}
\label{thm:main}
Consider a \degwei{} tree $T = (V, E, W)$ and a problem $P$. Given a connected contracting function and a sibling contracting function on $T$ with respect to $P$, one can compute $P(T)$ in $O(1/\epsilon^3)$ AMPC rounds with $O(n^\epsilon)$ memory per machine and $\widetilde{O}(n)$ total memory.
\end{restatable}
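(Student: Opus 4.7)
The plan is to bootstrap on Theorem~\ref{thm:boundedmain} via an outer loop that peels off high-degree vertices. Call a vertex \emph{small} if $\deg(v)\le n^\epsilon$ and \emph{big} otherwise. One outer iteration consists of two phases. First, a \emph{compress phase} invokes the algorithm of Theorem~\ref{thm:boundedmain} with the connected contracting function on every maximal connected subtree of small vertices, collapsing each into one supernode in $O(1/\epsilon^2)$ AMPC rounds. Second, a \emph{rake phase} contracts the current leaf layer into its parents via the sibling contracting function. After the compress phase the tree has the Big-Small structure in which small supernodes never neighbor each other, so every root-to-leaf path alternates (up to an initial segment) between big vertices and small supernodes.

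The depth of a Big-Small tree is $O(1/\epsilon)$: along any root-to-leaf path, every two levels contain a big vertex whose branching factor is at least $n^\epsilon$, so the number of descendants multiplies by $n^\epsilon$ every two levels, forcing depth $O(\log_{n^\epsilon} n) = O(1/\epsilon)$. A handshake argument bounds the total number of big vertices by $2n^{1-\epsilon}$. Because a single rake removes an entire leaf layer of the Big-Small tree and thereby drops the big-vertex count by at least a factor of $n^\epsilon$, the outer loop terminates after $O(1/\epsilon)$ iterations; a final invocation of Theorem~\ref{thm:boundedmain} handles the remaining all-small residue.

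The delicate point is implementing the rake phase, because a single big vertex may have $\Theta(n)$ leaf children while both the sibling contracting function and a single machine accept only $n^\epsilon$ inputs at a time. I would partition the leaf children of each big vertex into buckets of size $n^\epsilon$, apply the sibling contracting function in parallel on each bucket, and iterate on the resulting aggregates; since the child count per parent shrinks by a factor of $n^\epsilon$ per round, one full rake completes in $O(1/\epsilon)$ AMPC rounds. Combining with the $O(1/\epsilon^2)$-round compress phase yields $O(1/\epsilon^2)$ rounds per outer iteration, and $O(1/\epsilon)$ iterations give the claimed $O(1/\epsilon^3)$ total.

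The main obstacle is the memory accounting around big vertices, whose weight vectors $W(v)$ have size $\widetilde{O}(\deg(v))$ and therefore cannot reside on a single machine. I would store $W(v)$ distributed across the AMPC hash table, indexed by child, so that each invocation of the sibling contracting function on a bucket of $n^\epsilon$ children adaptively queries only the $n^\epsilon$ coordinates of $W(v)$ it needs. The hard part of the proof is verifying that this bookkeeping preserves the semantics of the contracting functions, that the partial aggregates returned by successive rounds of bucketing recombine correctly into the post-contraction weight vector, and that total memory stays within $\widetilde{O}(n)$ across all $O(1/\epsilon)$ outer iterations; granted these invariants, the round count follows immediately from the two-phase budget above.
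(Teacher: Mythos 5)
Your phase-count argument has a genuine gap. You claim the depth of a Big-Small tree is $O(1/\epsilon)$ on the grounds that a big vertex with branching factor $\geq n^\epsilon$ forces the number of descendants to multiply by $n^\epsilon$ every two levels. This is false: the branching factor at a big vertex $v$ controls how many children $v$ has, but most of those children may be leaf small components contributing nothing to further depth. Concretely, take a ``caterpillar'' Big-Small tree $s_0\to b_1\to s_1\to b_2\to\cdots\to b_k\to s_k$ where each big vertex $b_i$ has exactly $n^\epsilon+1$ children, all leaf small components except one ($s_i$) continuing the chain; this is a valid Big-Small tree (every other layer has degree $>n^\epsilon$, and no two small components are adjacent), has total size $\Theta(k n^\epsilon)$, so one can take $k=\Theta(n^{1-\epsilon})$, giving depth $\Theta(n^{1-\epsilon})\gg O(1/\epsilon)$. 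Since your step ``a single rake removes an entire leaf layer and thereby drops the big-vertex count by $n^\epsilon$'' is justified by the erroneous depth bound, the termination argument as written does not go through.

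The paper avoids depth entirely. Its key lemma (Lemma~\ref{lemma:reduceverts}) shows by induction on the number of big vertices that at least an $\alpha/(\alpha+4)$ fraction of \emph{all} nodes of the Big-Small tree are leaves. Raking therefore shrinks the \emph{total} node count by a factor of $4/(\alpha+4)=O(n^{-\epsilon})$ per phase, which directly gives $O(\log_\alpha n)=O(1/\epsilon)$ phases (Lemma~\ref{lemma:rounds}). Your ``big-vertex count shrinks by $n^\epsilon$'' claim is actually \emph{true}, but for a different reason: after a rake, a big vertex stays big only if it has $>n^\epsilon$ non-leaf small-component children, each of which contains at least one big descendant; in the tree on big vertices alone this means branching $>n^\epsilon$, and a tree with $L$ leaves has at most $(L-1)/(n^\epsilon-1)$ such nodes. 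Either that counting argument or the paper's leaf-fraction argument would repair the gap; the appeal to bounded depth will not. The rest of your proposal (running \textsf{BoundedTreeContract} for the compress stage, bucketed application of the sibling contracting function for the rake stage in $O(1/\epsilon)$ rounds, and distributing $W(v)$ for high-degree $v$ across the hash table) matches the paper's implementation and its $O(1/\epsilon^2)$-per-phase accounting, so once the termination argument is fixed the $O(1/\epsilon^3)$ total follows.
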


Theorem~\ref{thm:boundedmain} and Theorem~\ref{thm:main} give us general tools that have the power to create efficient AMPC algorithms for any problem that admits a connected contracting function and a sibling contracting function. Intuitively, they reduce constant-round parallel algorithms for a specific problem on trees to designing black-box contracting functions that are sequential.
We should be careful in designing contracting functions to make sure that the amount of data stored in the surviving vertex does not asymptotically exceed its degree in the contracted tree. Also note that a connected contracting function works with unknown values that depend on the result of other components. 

Satisfying these conditions is a factor that limits the extent of problems that can be solved using our framework. For example, the framework of Bateni et. al~\cite{bateni2018mapreduce} works on a wider range of problems on trees since their algorithm, roughly speaking, tolerates exponential growth of weight vectors using a careful decomposition of tree. Indeed, they achieve these benefits at the cost of an inherent requirement for at least $O(\log n)$ rounds due to the divide-and-conquer nature of their algorithm. However, their framework comes short on addressing problems such as \textsf{MedianParent} (defined in Remark~\ref{remark:median}) that are not reducible to binary trees.
Nonetheless, we show several techniques for designing contracting functions that satisfy these conditions, in particular:

\begin{enumerate}
    \item In Section~\ref{sec:2contract}, we prove a general approach for designing a connected contracting function and a sibling contracting function given a PRAM algorithm based on the original Miller and Reif~\cite{miller1985parallel} tree contraction. We do this by observing that in almost every conventional application of Miller and Reif's framework, the length of data stored at each vertex remains constant throughout the algorithm.
    \item Storing a minimal tree representation of a connected component contracted into $v$ in the weight vector $W(v)$ enables us to simplify a recursive function defined on the subtree rooted at $v$ in terms of yet-unknown values of its children, while keeping the length of $W(v)$ asymptotically proportional to $deg(v)$. For instance, see Section~\ref{sec:maximummatching} which utilizes this approach in the context of maximum weighted matching.
\end{enumerate}

Ultimately, this is a highly efficient generalization of the powerful tree contraction algorithm. To illustrate the versatility of our framework, we show that it gives us efficient AMPC algorithms for many important applications of frameworks such as Miller and Reif~\cite{miller1989parallel}'s and Bateni et al.~\cite{bateni2018mapreduce}'s by constructing sequential black-box contracting functions. In doing so, we utilize a diverse set of techniques, including the ones mentioned above, that are of independent interest and can be applied to a broad range of problems on trees.



\begin{restatable}{theorem}{applications}
\label{thm:applications}
Algorithms~\ref{alg:boundedtreecontract} and~\ref{alg:main} can solve, among other applications, dynamic expression evaluation, tree isomorphism testing, maximal matching, and maximal independent set in $O(1/\epsilon^2)$ AMPC rounds, and maximum weighted matching and maximum weighted independent set in $O(1/\epsilon^3)$ AMPC rounds. All algorithms use $O(n^\epsilon)$ memory per machine and $\widetilde{O}(n)$ total memory.
\end{restatable}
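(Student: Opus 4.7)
The plan is to prove the theorem by exhibiting, for each listed problem, a connected contracting function (and, where the theorem claims $O(1/\epsilon^3)$ rounds, also a sibling contracting function) that satisfies the requirements of Theorem~\ref{thm:boundedmain} or Theorem~\ref{thm:main}, and then to invoke the corresponding algorithm as a black box. For a problem to slot into the framework, each contracting function must run sequentially on a single machine within $O(n^\epsilon)$ local memory, and the weight vector at any contracted vertex must remain $\widetilde{O}(\deg(v))$ where $\deg(v)$ is the degree after contraction; otherwise the degree-weighted invariant on the shrinking tree breaks and the $O(n^\epsilon)$-memory guarantee fails.

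For dynamic expression evaluation and tree isomorphism testing I would appeal directly to the general translation of Miller--Reif style PRAM tree contractions into contracting functions sketched in Section~\ref{sec:2contract}, combined with the standard reduction of high-degree vertices to binary trees. In the expression-evaluation case, the data stored at a contracted component is an affine function $f(x)=ax+b$ in the unknown value of the component's root; two such functions compose to another affine function along a path, so the output size per surviving vertex is constant. In the isomorphism case, each component carries a canonical AHU-style label of its shape, again constant size. For maximal matching and maximal independent set, each contracted vertex only needs a few bits describing how its internal leaves/boundary vertices should behave relative to the unknown status of the surviving root; after the binary-tree reduction each weight vector is constant-size per vertex, so Theorem~\ref{thm:boundedmain} gives the $O(1/\epsilon^2)$ bound.

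For maximum weighted matching and maximum weighted independent set I would follow the second design technique spelled out just above the theorem statement, namely storing a minimal tree representation of the contracted component inside the surviving vertex's weight vector. The classical tree DP tracks, at each vertex $v$, two values --- value when $v$ is matched/unmatched (resp.\ in/out of the independent set) --- as a function of the yet-unknown status of the boundary children that were not absorbed into the component. A connected contracting function walks the component sequentially, eliminates all internal vertices via the standard DP recurrence, and leaves an expression whose size is proportional to the number of boundary children, i.e.\ $O(\deg(v))$ in the contracted tree. A sibling contracting function is easier still: contributions of leaves sharing a parent separate additively in the DP, so one sequential aggregation sweep suffices. Plugging both into Theorem~\ref{thm:main} yields the $O(1/\epsilon^3)$ bound.

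The main obstacle, problem by problem, is proving the size bound on the weight vector of a contracted vertex rather than implementing the DP itself. A naive connected contracting function might record a full DP table for every internal vertex of the component, which would make $|W(v)|$ scale with the component size rather than with $\deg(v)$ and thus violate the degree-weighted invariant. The heart of each application proof is therefore to argue that the recurrence collapses: internal variables can be eliminated on the fly during the sequential sweep, and what survives depends only on boundary children. For the LCL-type problems this collapse is trivial because only constantly many bits per boundary vertex are ever relevant; for weighted matching and weighted independent set it follows from the bilinear structure of the DP. Once these per-problem size bounds are in place, the memory and round-complexity guarantees follow immediately from the statements of Theorems~\ref{thm:boundedmain} and~\ref{thm:main}, completing the proof.
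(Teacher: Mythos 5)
Your framing---build a connected (and, where needed, sibling) contracting function whose weight vector stays $\widetilde{O}(\deg(v))$ post-contraction, then invoke Theorem~\ref{thm:boundedmain} or~\ref{thm:main}---is exactly the paper's strategy, and your MWM/MWIS sketch (trim internal leaves into running constants, collapse chains into a small edge tuple, so a compressed component has size tracking its boundary children) matches the paper's Connected Contraction Process and Lemma~\ref{lem:maxmatch}. However, three of your problem-specific constructions do not actually meet the size constraints you invoke, and that is where the substance of the application proofs lives.

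For expression evaluation, an affine form $f(x)=ax+b$ is not closed under composition once $\times$, $\div$, and exponentiation act with the unknown on the ``wrong'' side: $x\mapsto a/x$ and $x\mapsto a^x$ are not affine, so a compressed chain containing such operators leaves the affine family. The paper sidesteps the closed-form issue entirely by storing a minimal tree representation of the compressed component (no internal leaves, no unary internal vertices), whose size is $O(\deg)$ regardless of the operator set; you either need that or a demonstrated closed family such as M\"obius transforms for the operators you claim. For tree isomorphism, ``constant-size AHU-style labels'' do not exist---an AHU canonical string for a subtree grows with the subtree---so the degree-weighted invariant fails. The paper instead uses Miller--Reif's randomized scheme: associate to each vertex a polynomial $Q_v=\prod_{u}(x_{h_v}-Q_u)$, substitute random values modulo a random prime, and reduce isomorphism testing to evaluating a numerical expression of bounded bit-length; the randomized hashing is what makes per-vertex storage constant. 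For maximal matching and MIS, you need the degree reduction to be \emph{semantics-preserving}: if a high-degree vertex is naively replaced by a binary tree of ordinary vertices, the new internal vertices can themselves enter the independent set or matching and change the problem. The paper's Lemma~\ref{lem:MISBtransform} introduces \emph{bypass vertices} whose membership is forced by their children and which never count toward the solution, and proves that an MISB on the expanded tree projects back to an MIS on the original tree. That lemma is the missing step in your argument, and without it (or an equivalent) the $O(1/\epsilon^2)$ claim for maximal matching/MIS is not justified.
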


\subsection{Paper Outline}
The work presented in this paper is a constant-round generalized technique for solving a large number of graph theoretic problems on trees in the AMPC model. In Section~\ref{sec:preliminaries}, we go over some notable definitions and conventions we will be using throughout the paper. This includes the introduction of a generalized weighted tree, a formalization of the general tree contraction process, the definition of contracting functions, and a discussion of a tree decomposition method we call the \emph{preorder decomposition}. In the Section~\ref{sec:contributions}, we go over our main results, algorithms, and proofs. The first result (\S\ref{sec:boundeddegree}) is an algorithm for executing a tree contraction-like process which solves the same problems on trees of bounded maximum degree. The second result (\S\ref{sec:main}) utilizes the first result as well as additional novel techniques to implement generalized tree contractions. We additionally show (\S\ref{sec:2contract}) that our algorithms can also implement Miller and Reif's standard notion of tree contractions, and (\S\ref{sec:reconstruct}) we show how to efficiently reconstruct a solution on the entire graph by reversing the tree contracting process. Finally, in Section~\ref{sec:applications}, we apply our algorithms to solve a number of popular problems on trees.
\section{Preliminaries}\label{sec:preliminaries}
In this work, we are interested in solving problems on trees $T=(V,E)$ where $|V| = n$. Our algorithms iteratively transform $T$ by contracting components in an intelligent way that: (1) components can be stored on a single machine, (2) the number of iterations required to contract $T$ to a single vertex is small, and (3) at each step of the process, we still have enough information to solve the initial problem on $T$.

To achieve (3), we must retain some information about an original component after we contract it. For instance, consider computing all maximal subtree sizes. For a connected component $S$ with $r= \text{lca}(S)$\footnote{lca is the least common ancestor function.}, the contracted vertex $v_S$ of $S$ might encode $|S|$ and a list of its leaves (when viewing $S$ as a tree itself). It is not difficult to see that this would be sufficient knowledge to compute all maximal subtree sizes for the rest of the vertices in $T$ without considering all individual vertices in $S$. Data such as this is encoded as a multi-dimensional weight function which maps vertices to binary vectors. We will specifically consider trees where the dimensionality of the weight function is bounded by the degree of the vertex.

We note that in this paper, when we refer to the degree of a vertex in a rooted tree, we ignore parents. Therefore, $\deg(v)$ is the number of children a vertex has.

\begin{definition}
A \textbf{\degwei{} tree} is a tree $T = (V, E, W)$ with vertex set $V$, edge set $E$, and vertex weight vector function $W$ such that for all $v\in V$, $W(v) \in \{0,1\}^{\widetilde{O}(\deg(v))}$.\footnote{$\widetilde{O}(f(n)) = O(f(n)\log n)$.}
\end{definition}

Notationally, we let $w(v) = \dim(W(v)) = \widetilde{O}(\deg(v))$ be the length of the weight vectors. Additionally, note that a tree $T=(V,E)$ is a \degwei{} tree where $W(v) = \emptyset$ for all $v\in V$.

In order to implement our algorithm, we also require specific \emph{contracting functions} whose properties allow us to achieve the desired result (\S\ref{subsec:contracting}). In addition, we will introduce a specific tree decomposition method, called a \emph{preorder decomposition}, that we will efficiently implement and leverage in our final algorithms (\S\ref{subsec:decompose}). 

\subsection{Tree Contractions and Contracting Functions}\label{subsec:contracting}

Our algorithms provide highly efficient generalizations to Miller and Reif's~\cite{miller1989parallel} tree contraction algorithms. At a high level, their framework provides the means to compute a global property with respect to a given tree in $O(\log{n})$ \emph{phases}. In each \emph{phase}, there are two \emph{stages}:

\begin{itemize}
    \item \compress{} stage: Contract around half of the vertices with degree $1$ into their parent.
    \item \rake{} stage: Contract all the leaves (vertices with degree $0$) into their parent.
\end{itemize}

Repeated application of \compress{} and \rake{} alternatively results in a tree which has only one vertex. 
Intuitively, the \compress{} stage aims to shorten the long \emph{chains}, maximal connected sequences of vertices whose degree is equal to $1$, and the \rake{} stage cleans up the leaves.
Both stages are necessary in order to guarantee that $O(\log{n})$ phases are enough to end up with a single remaining vertex~\cite{miller1989parallel}. 

In the original variant, every odd-indexed vertex of each \emph{chain}
is contracted in a \compress{} stage. In some randomized variants, each vertex is selected with probability $1/2$ independently, and an independent set of the selected vertices is contracted. In such variants, contracting two consecutive vertices in a chain is avoided in order to efficiently implement the tree contraction in the PRAM model. However, this restriction is not imposed in the AMPC model, and hence we consider a more \emph{relaxed variant} of the \compress{} stage where each maximal chain is contracted into a single vertex.

We introduce a more generalized version of tree contraction called \emph{$\alpha$-tree-contractions}. Here, the \rake{} stage is the same as before, but in the \compress{} stage, every maximal subgraph containing only vertices with degree less than $\alpha$ is contracted into a single vertex.

\begin{definition}\label{def:alpha-tc}
In an $\alpha$-tree-contraction of a tree $T=(V,E)$, we repeat two stages in a number of phases until the whole tree is contracted into a single vertex:
\begin{itemize}
    \item \blackbox{Compress} stage: Contract every maximal connected component $S$ containing only vertices with degree less than $\alpha$, i.e., $\degr(v)<\alpha\;\;\forall{v \in S}$, into a single vertex $S'$.
    \item \blackbox{Rake} stage: Contract all the leaves into their parent.
\end{itemize}
\end{definition}

Notice that the relaxed variant of Miller and Reif's \compress{} stage is the special case when $\alpha=2$. Our goal will be to implement efficient $\alpha$-tree-contractions where $\alpha=n^\epsilon$.


In order to implement \compress{} and \rake{}, we need fundamental tools for contracting a single set of vertices into each other. We call these \textit{contracting functions}. In the \compress{} stage, we must contract connected components. In the \rake{} stage, we must contract leaves with the same parent into a single vertex. These functions run locally on small sets of vertices.

\begin{definition}\label{def:contractingfunction}
Let $P$ be some problem on \degwei{} trees such that for some \degwei{} tree $T$, $P(T)$ is the solution to the problem on $T$. A \textbf{contracting function} on $T$ with respect to $P$ is a function $f$ that replaces a set of vertices in $T$ with a single vertex and incident edges to form a \degwei{} tree $T'$ such that $P(T) = P(T')$\footnote{With some nuance, it depends on the format of the problem. For instance, when computing the value of the maximum independent set, the single values $P(T)$ and $P(T')$ should be the same. When computing the maximum independent set itself, uncontracted vertices must have the same membership in the set, and contracted vertices represent their roots.}. There are two types:
\begin{enumerate}
\item $f$ is a \textbf{connected contracting function} if $f$ contracts\footnote{Consider a connected component $S$ with a set of external neighbors $N(S) = \{v\in V\setminus S:\exists u\in S (v,u) \in E\}$. Then contracting $S$ means replacing $S$ with a single vertex with neighborhood $N(S)$.} connected components into a single vertex of $T$.
\item $f$ is a \textbf{sibling contracting function} if $f$ is defined on sets of leaf siblings (i.e., leaves that share a parent $p$) of $T$, and the new vertex is a leaf child of $p$.
\end{enumerate}
\end{definition}

Since the output of the contracting function is a \degwei{} tree, it implicitly must create a weight $W(v)$ for any newly contracted vertex $v$.


\subsection{Preorder Decomposition}\label{subsec:decompose}

A \textit{preorder decomposition} (formally defined shortly) is a strategy for decomposing trees into a disjoint union of (possibly not connected) vertex groups. In this paper, we will show that the preorder decomposition exhibits a number of nice properties (see \S\ref{sec:contributions}) that will be necessary for our tree contraction algorithms. Ultimately, we wish to find a decomposition of vertices $V_1, V_2, \ldots, V_k \subseteq V$ of a given tree $T=(V,E)$ ($\cup_{i=1}^k{V_i} = V$ and $V_i \cap V_j = \phi \;\;\forall{i,j}: i \neq j$) so that for all ${i \in [k]}$, after contracting each connected component contained in the same vertex group, the maximum degree is bounded by some given $\lambda$. Obviously, this won't be generally possible (i.e., consider a large star), but we will show that this holds when the maximum degree of the input tree is bounded as well.

The preorder decomposition is depicted in Figure~\ref{fig:preorder}. Number the vertices by their index in the preorder traversal of tree $T$, i.e., vertices are numbered $1,2,\ldots, n$ where vertex $i$ is the $i$-th vertex that is visited in the preorder traversal starting from vertex $1$ as root. In a preorder decomposition of $T$, each group $V_i$ consists of a consecutive set of vertices in the preorder numbering of the vertices. More precisely, let $l_i$ denote the index of the vertex $v \in V_i$ with the largest index, and assume $l_0 = 0$ for consistency. In a preorder decomposition, group $V_i$ consists of vertices $l_{i-1} + 1, l_{i-1} + 2, \ldots, l_{i}$.

\begin{definition}\label{def:preorder-dependency}
Given a tree $T=(V,E)$, a \emph{``preorder decomposition''} $V_1,V_2,\ldots,V_k$ of $T$ is defined by a vector $l \in \mathbb{Z}^{k+1}$, such that $0 = l_0 < l_1 < \ldots < l_k = n$, as $V_i = \{l_{i-1}+1,l_{i-1}+2,\ldots,l_i\}\;\; \forall{i \in [k]}$. See Subfigure~\ref{fig:preorder} for an example.
\end{definition}

Assume we want each $V_i$ in our preorder decomposition to satisfy $\sum_{v\in V_i} \deg(v) \leq \lambda$ for some $\lambda$. As long as $\deg(v) \leq \lambda$ for all $v\in V$, we can greedily construct components $V_1,\ldots,V_k$ according to the preorder traversal, only stopping when the next vertex violates the constraint. Since $\sum_{v\in V} \deg(v) \leq n$, it is not hard to see that this will result in $O(n/\lambda)$ groups that satisfy the degree sum constraint.

\begin{observation}\label{obs:groups}
Consider a given tree $T = (V, E)$. For any parameter $\lambda$ such that $\deg(v) \leq \lambda$ for all $v \in V$, there is a preorder decomposition $V_1, V_2, \ldots, V_k$ such that $\forall{i \in [k]}$, $\sum_{v\in V_i} \deg(v)\leq \lambda$, and $k = O(n/\lambda)$.
\end{observation}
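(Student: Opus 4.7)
The plan is to construct the preorder decomposition greedily and then charge against the total degree sum of the tree. First, I would fix a preorder traversal of $T$ and process the vertices $1, 2, \ldots, n$ in this order. I maintain a current group and a running counter $s$ initialized to $0$. For each vertex $v$ visited (in preorder), I check whether $s + \deg(v) \leq \lambda$: if so, I append $v$ to the current group and update $s \leftarrow s + \deg(v)$; otherwise, I close the current group, open a new one containing just $v$, and set $s = \deg(v)$. Because every vertex satisfies $\deg(v) \leq \lambda$ by assumption, the single-vertex initialization is always feasible, so the procedure never gets stuck. By construction the groups form a partition of $V$ into preorder-consecutive intervals, which matches Definition~\ref{def:preorder-dependency}, and every group satisfies $\sum_{v\in V_i}\deg(v) \leq \lambda$.

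It remains to bound $k$. The key observation is that each group $V_i$ with $i < k$ is closed precisely when the next vertex in preorder, call it $u_{i+1}$ (the first vertex of $V_{i+1}$), would have pushed the running sum above $\lambda$; that is,
\[
\Bigl(\sum_{v \in V_i} \deg(v)\Bigr) + \deg(u_{i+1}) \;>\; \lambda.
\]
Summing this inequality over $i = 1, 2, \ldots, k-1$ gives
\[
(k-1)\lambda \;<\; \sum_{i=1}^{k-1} \sum_{v \in V_i} \deg(v) \;+\; \sum_{i=1}^{k-1} \deg(u_{i+1}) \;\leq\; 2\sum_{v \in V} \deg(v),
\]
since the first double sum is bounded by $\sum_{v \in V}\deg(v)$ and the vertices $u_2, \ldots, u_k$ are distinct, so the second sum is also bounded by $\sum_{v \in V}\deg(v)$. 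Recalling that $\deg(v)$ counts children only, we have $\sum_{v \in V}\deg(v) = n - 1$ in the rooted tree $T$. Therefore $(k-1)\lambda < 2(n-1)$, which rearranges to $k < 2(n-1)/\lambda + 1 = O(n/\lambda)$.

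There is no real obstacle; the only subtlety is the charging argument used to get an upper bound on $k$, since each individual group only guarantees $S_i \leq \lambda$, not $S_i = \Omega(\lambda)$. The cleanest way to handle this is the double-counting inequality above, which exploits the "almost full" property of every closed group by adding in the degree of the first vertex of the next group. Since the running sum is reset after each group closure and the hypothesis $\deg(v)\leq\lambda$ is used only to ensure greedy progress, the argument is essentially tight, and no further case analysis is needed.
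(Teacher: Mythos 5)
Your proof is correct and follows the same greedy construction the paper uses; the paper states the bound $k=O(n/\lambda)$ as a one-line "it is not hard to see," and your double-counting argument (charging each closed group together with the next group's first vertex) simply spells out that step.
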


The \emph{dependency tree} $T'=(V', E')$, as seen in Figure~\ref{fig:dependency} of a decomposition is useful notion for understanding the structure of the resulting graph. In $T'$, vertices represent connected components within groups, and there is an edge between vertices if one contains a vertex that is a parent of a vertex in the other. This represents our contraction process and will be useful for bounding the size of the graph after each step.

\begin{definition}\label{def:dependency-tree}
Given a tree $T=(V, E)$ and a decomposition of vertices $V_1,V_2,\ldots,V_k$, the \textbf{dependency tree} $T'=(V',E')$ of $T$ under this decomposition is constructed by contracting each connected component $C_{i,j}$ for all $j \in [c_i]$ in each group $V_i$. We call a component contracted to a leaf in $T'$ an \textbf{independent} component, and a component contracted to a non-leaf vertex in $T'$ a \textbf{dependent} component.
\end{definition}

\begin{figure}[ht]
    \centering
    \begin{subfigure}[b]{0.50\textwidth}
        \centering
        \includegraphics[scale=0.05]{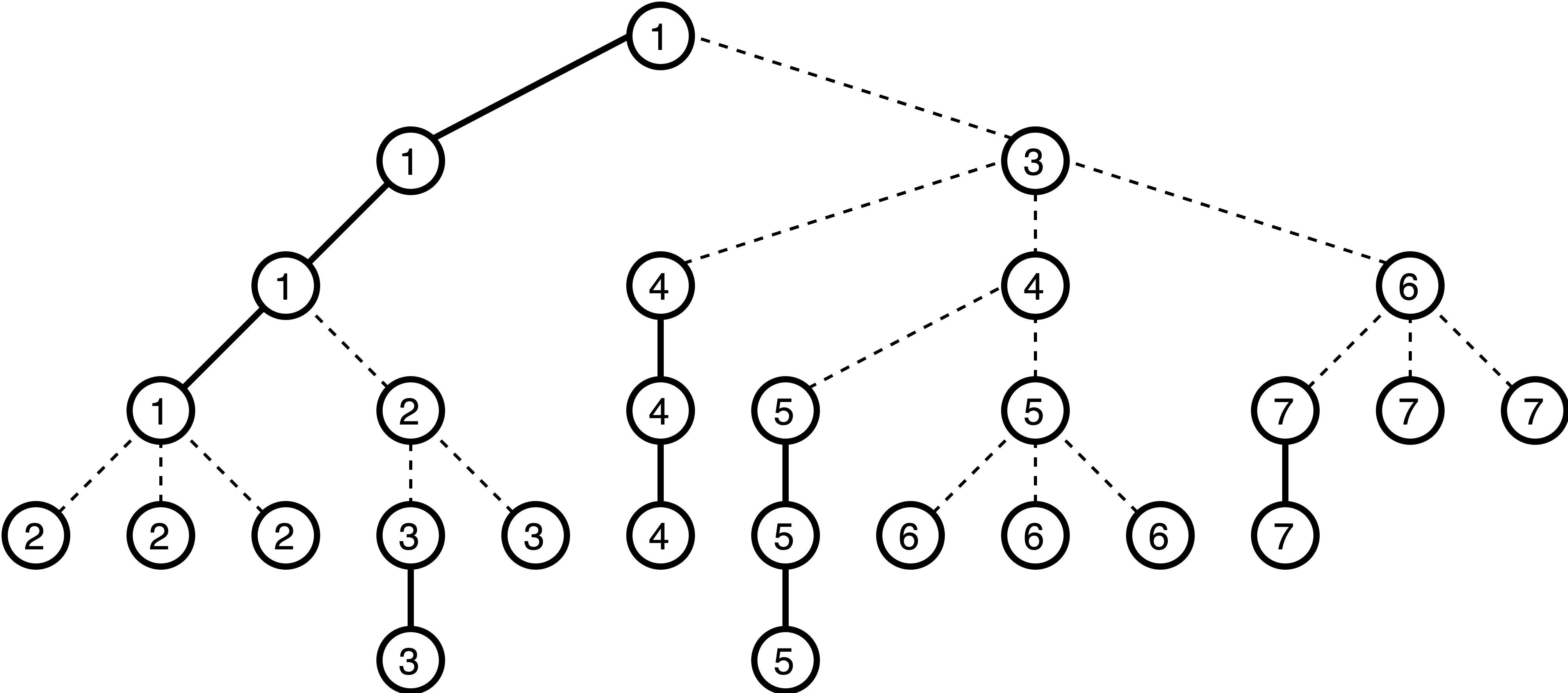}
        \caption{An example preorder decomposition of $T$ into $V_1,V_2,\ldots,V_7$ with  $\lambda = 8$. Edges within any $F_i$ are depicted bold, and edges belonging to no $F_i$ are depicted dashed.}
        \label{fig:preorder}
    \end{subfigure}\qquad
    \begin{subfigure}[b]{0.40\textwidth}
        \centering
        \includegraphics[scale=0.065]{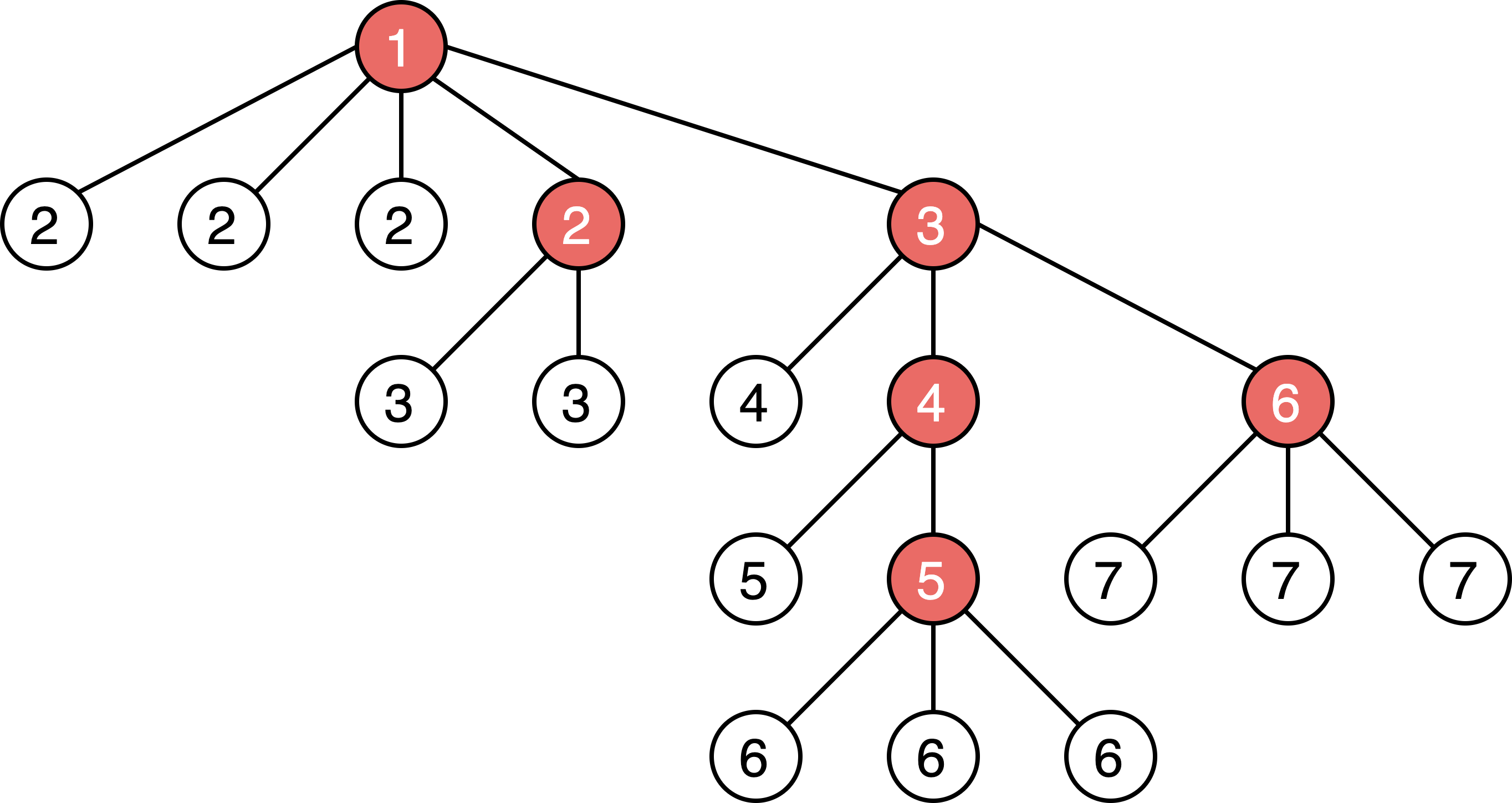}
        \caption{Dependency tree $T'$, created by contracting connected components of every $F_i$. Each red vertex represents a dependent component, and each white vertex represents an independent component.}
        \label{fig:dependency}
    \end{subfigure}
    \caption{In Subfigure (\subref{fig:preorder}), a preorder decomposition of a given tree $T$ is demonstrated. Based on this preorder decomposition, we define a dependency tree $T'$ so that each connected component $S$ in each forest $F_i$ is contracted into a single vertex $S'$. This dependency tree $T'$ is demonstrated in Subfigure (\subref{fig:dependency}). It is easy to observe that the contracted components are maximal components which are connected using bold edges in $T$, and each edge in $T'$ corresponds to a dashed edge in $T$.} 
    \label{fig:preorderdecomp}
\end{figure}

\section{Constant-round Tree Contractions in AMPC}\label{sec:contributions}

The main results of this paper are two new algorithms. The first algorithm applies $\alpha$-tree-contraction-\emph{like} methods in order to solve problems on trees where the degrees are bounded by $n^\epsilon$. Though this algorithm is similar in inspiration to the notion of tree contractions, it is not a true $\alpha$-tree-contraction method.

\boundedmain*

This algorithm provides us with two benefits: (1) it is a standalone result that is quite powerful in its own right and (2) it is leveraged in our main algorithm for Theorem~\ref{thm:main}. The only differences between this result and our main result for generalized tree contractions is that we require $\deg(v)\leq n^\epsilon$, but it runs in $O(1/\epsilon^2)$ rounds, as opposed to $O(1/\epsilon^3)$ rounds. Thus, if the input tree has degree bounded by $n^\epsilon$, then clearly the precondition is satisfied. Additionally, if the tree can be decomposed into a tree with bounded degree such that we can still solve the problem on the decomposed tree, this result applies as well.

Our general results are quite similar, with a slightly worse round complexity, but with the ability to solve the problem on all trees. Notably, it is a true $\alpha$-tree-contraction algorithm.

\main*

In this section, we introduce both algorithms and prove both theorems.

\subsection{Contractions on Degree-Bounded Trees}\label{sec:boundeddegree}
We now provide an $O(1/\epsilon^2)$-round AMPC algorithm with local space $O(n^\epsilon)$  for solving any problem $P$ on a degree-weighted tree $T = (V, E, W)$ with bounded degree $\deg(v) \leq n^\epsilon$ for all $v\in V$ given a \emph{connected contracting function} for $P$. The method, which we call \textsf{BoundedTreeContract}, can be seen in Algorithm~\ref{alg:boundedtreecontract}.

Much like an $\alpha$-tree-contraction algorithm, it can be divided into a \compress{} and \rake{} stage. In the \compress{} stage, instead of compressing the whole maximal components that consist of low-degree vertices as required for $\alpha$-tree-contractions, we partition the vertices into groups using a preorder decomposition and bounding the group size by $n^\epsilon$. In the \rake{} stage, since the degree is bounded by $n^\epsilon$, all leaves who are children of the same vertex can fit on one machine. Thus each sibling contraction that must occur can be computed entirely locally. If we include the parent of the siblings, we can simply apply \compress{}'s connected contracting function on the children. This is why we do not need a sibling contracting function.

Let $T_0=T$ be the input tree. For every iteration $i\in[O(1/\epsilon)]$: (1) find a preorder decomposition $V_1,\ldots V_k$ of $T_{i-1}$, (2) contract each connected component in the preorder decomposition, and (3) put each maximal set of leaf-siblings (i.e., leaves that share a parent) in one machine and contract them into their parent. We sometimes refer to these maximal sets of leaf-siblings by \emph{leaf-stars}. After sufficiently many iterations, this should reduce the problem to a single vertex, and we can simply solve the problem on the vertex.

\begin{algorithm}[ht]
\SetAlgoLined
\KwData{Degree-weighted tree $T = (V, E, W)$ with degree bounded by $n^\epsilon$ and a connected contracting function $\mathcal{C}$.}
\KwResult{The problem output $P(T)$.}
 $T_0  \leftarrow T$\;
 \For{$i\leftarrow 1$ \KwTo $l = O(1/\epsilon)$}{
    Let $\mathcal{O}$ be a preordering of $V_{i-1}$\;
    Find a \emph{preorder decomposition} $V_1,V_2,\ldots,V_k$ of $T_{i-1}$ with $\lambda=n^\epsilon$ using $\mathcal{O}$\;
    Let $\mathcal{S}_{i-1}$ be the set of all connected components in $V_i$ for all $i\in[k]$\;
    Let $T_{i-1}'$ be the result of contracting $\C(S_{i-1,j})$ for all $S_{i-1,j}\in \mathcal{S}_{i-1}$\;
    Let $\mathcal{L}_{i-1}$ be the set of all maximal leaf-stars (containing their parent) in $T_{i-1}'$\;
    Let $T_i$ be the result of contracting $\C(L_{i-1,j})$ for all $L_{i-1,j}\in \mathcal{L}_{i-1}$\;
  }
  \KwRet{$\C(T_l)$}\;
 \caption{\textsf{BoundedTreeContract} \\(Computing the solution $P(T)$ of a problem $P$ on degree-weighted tree $T$ with max degree $n^\epsilon$ using connected contracting function $\mathcal{C}$)}
\label{alg:boundedtreecontract}
\end{algorithm}

Notice that we can view the first and second steps as the \compress{} stage except that we limit each component such that the sum of the degrees in each component is at most $n^\epsilon$. Since the size of the vector $W(v)$ is $w(v) = \widetilde{O}(\deg(v)) = \widetilde{O}(n^\epsilon)$, we can store an entire component (in its current, compressed state) in a single machine, thus making the second step distributable. The third step can be viewed as a \rake{} function which, as we stated, can be handled on one machine per contraction using the connected contracting function.

In order to get $O(1/\epsilon^2)$ rounds, we first would like to show that the number of phases is bounded by $O(1/\epsilon)$. To prove this, we show that there will be at most one non-leaf node after we contract the components in each group. In other words, the dependency tree resulting from the preorder decomposition has at most one non-leaf node per group in the decomposition. This is a necessary property of decomposing the tree based off the preorder traversal. To see why this is true, consider a connected component in a partition. If it is \textit{not} the last connected component (i.e., it does not contain the partition's last vertex according to the preorder numbering), then after contracting, it cannot have any children. 

\begin{lemma}\label{lemma:preorder-decomposition}
The dependency tree $T'=(V',E')$ of a preorder decomposition $V_1, V_2, \ldots, V_k$ of tree $T=(V,E)$ contains at most $1$ non-leaf vertex per group for a total of at most $k$ non-leaf vertices. In other words, there are at most $k$ dependent connected components in $\cup_{i\in[k]}{F_i}$.
\end{lemma}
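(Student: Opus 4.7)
The plan is to prove that in each group $V_i$ of the preorder decomposition, at most one of its maximal connected components is dependent (i.e., contracts to a non-leaf in $T'$); summing over $i \in [k]$ then yields the bound $k$. To do this, I would introduce the ``witness'' vertex $v_i^* \in V_i$ defined as the vertex whose preorder index equals $l_i$, i.e., the last vertex of $V_i$ in the preorder traversal. All of the structural work goes into relating every dependent component in $V_i$ to this single witness.

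First I would establish the following intermediate claim: if $C_{i,j} \subseteq V_i$ is a dependent component, then the root $r_{i,j}$ of $C_{i,j}$ (the unique vertex of $C_{i,j}$ closest to the root of $T$) is an ancestor of $v_i^*$ in $T$. The argument uses the standard preorder property that the subtree of any vertex $r$ occupies a contiguous interval $[\text{preorder}(r), \text{preorder}(r) + |\text{subtree}(r)| - 1]$. Dependence of $C_{i,j}$ means some $u \in C_{i,j}$ has a child $c \notin V_i$; since $c$ is a descendant of $r_{i,j}$ and since every descendant of $r_{i,j}$ with preorder index in $(l_{i-1}, l_i]$ would stay in $V_i$ and thus in $C_{i,j}$, the child $c$ must have preorder index $> l_i$. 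Hence the interval for the subtree of $r_{i,j}$ extends past $l_i$, and since $r_{i,j} \in V_i$ means $\text{preorder}(r_{i,j}) \le l_i$, the index $l_i$ lies in this interval, so $v_i^*$ is a descendant of $r_{i,j}$.

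Next I would show that the set $A_i$ of ancestors of $v_i^*$ that lie in $V_i$ is entirely contained in a single connected component in the induced subgraph of $V_i$. The ancestors of $v_i^*$ form a root-to-leaf chain $v_0, v_1, \ldots, v_d = v_i^*$ in $T$, and along such a chain the preorder index is strictly increasing. Consequently, $A_i$ consists of a contiguous suffix $v_s, v_{s+1}, \ldots, v_d$ of that chain (those vertices whose preorder index exceeds $l_{i-1}$); every intermediate vertex on the path in $T$ from $v_s$ to $v_d$ has preorder index strictly between $l_{i-1}$ and $l_i$, so it too belongs to $V_i$. Thus $A_i$ is realized as a connected path in the induced subgraph on $V_i$, so all of $A_i$ lies in one maximal connected component of $V_i$.

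Combining the two claims finishes the proof: any two dependent components $C_{i,j}$ and $C_{i,j'}$ in $V_i$ would have roots $r_{i,j}, r_{i,j'} \in A_i$, forcing them into the same component and contradicting $j \ne j'$. Therefore each $V_i$ contributes at most one dependent component, giving at most $k$ dependent components and hence at most $k$ non-leaf vertices in $T'$ in total. The main obstacle is conceptual rather than calculational, namely spotting the right witness $v_i^* = l_i$ and combining the two preorder facts (contiguity of subtrees, strict monotonicity along ancestor chains) in the right order; once the witness is chosen the rest is straightforward tree bookkeeping.
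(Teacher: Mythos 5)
Your proof is correct, and it reaches the same conclusion via a slightly different organization than the paper. The paper argues by contradiction: it supposes some non-last component $C_{i,j}$ (not containing $l_i$) is dependent, picks $v \in C_{i,j}$ with an external child $u$, and shows that every $w \in V_i$ after $v$ in preorder must land in $C_{i,j}$ — because $w$ is sandwiched between $v$ and $u$ in preorder and is therefore a descendant of $v$, and the $v$-to-$w$ path lies in $V_i$ — which forces $C_{i,j}$ to be the last component, a contradiction. You instead argue forward: you show the root $r_{i,j}$ of any dependent component is an ancestor of the witness $v_i^*$ (the vertex with preorder $l_i$), and then that all ancestors of $v_i^*$ inside $V_i$ sit in a single connected component, so there can be at most one dependent component. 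Both arguments rest on exactly the same two preorder facts — contiguity of subtree intervals and strict monotonicity of preorder along ancestor chains — so the mathematical content is the same; your version isolates the witness $v_i^*$ explicitly and avoids the contradiction, at the minor cost of an extra connectivity step (your second claim), whereas the paper's version is a bit shorter because it directly identifies the last component as the only possible dependent one. One small remark: once you have your first claim, you could finish faster by noting that if $r_{i,j}$ is an ancestor of $v_i^*$ and both lie in $V_i$, the $r_{i,j}$-to-$v_i^*$ chain lies entirely in $V_i$ (by preorder monotonicity), so $v_i^*$ itself is in $C_{i,j}$; since only one component contains $v_i^*$, you are done without needing to reason about the full set $A_i$.
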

\begin{proof}
Each group $V_i$ induces a forest $F_i$ on tree $T$, and recall that each $F_i$ is consisted of multiple connected components $C_{i,1}, C_{i,2}, \ldots, C_{i, c_i}$, where $c_i$ is the number of connected components of $F_i$. Assume w.l.o.g. component $C_{i,c_i}$ is the component which contains vertex $l_i$, the vertex with the largest index in $V_i$. We show that every connected component in $F_i$ except $C_{i,c_i}$ is independent, and thus Lemma~\ref{lemma:preorder-decomposition} statement is implied. See in Subfigure~\ref{fig:dependency} that there is at most $1$ dependent component, red vertices in $T'$, for each group $V_i$. Also note that $C_{i,c_i}$, the only possibly dependent component in $F_i$, is always the last component if we sort the components based on their starting index since $l_i \in C_{i,c_i}$ and each $C_{i,j}$ contains a consecutive set of vertices.

Assume for contradiction that there exists $C_{i,j}$ for some $i \in [k], j \in [c_i-1]$, a non-last component in group $V_i$, such that $C_{i,j}$ is a dependent component, or equivalently $C'_{i,j}$ is not a leaf in $T'$. 
Since $C_{i,j}$ is a dependent component, there is a vertex $v \in C_{i,j}$ which has a child outside of $V_i$. Let $u$ be the first such child of $v$ in the pre-order traversal, and thus $u \in V_j$ for some $j > i$. Consider a vertex $w \in V_i$ that comes after $v$ in the pre-order traversal.
Then, since $u$, and thus $V_j$, comes after $v$ and $V_i$ in the pre-order traversal, $u$ must come after $w$ in the pre-order traversal. Since $w$ is between $v$ and $u$ in the pre-order traversal, and $u$ is a child of $v$, the only option is for $w$ to be a descendant of $v$. Then the path from $w$ to $v$ consists of $w$, $w$'s parent $p(w)$, $p(w)$'s parent, and so on until we reach $v$. Since a parent always comes before a child in a pre-order traversal, all the intermediate vertices on the path from $w$ to $v$ come between $w$ and $v$ in the pre-order traversal, so they must all be in $V_i$. This means $w$ is in $C_{i,j}$ since $w$ is connected to $v$ in $F_i$. Since any vertex after $v$ in $V_i$ must be in $C_{i,j}$, $C_{i,j}$ must be the last connected component, i.e., $j=c_i$. This implies that the only possibly dependent connected component of $F_i$ is $C_{i,c_i}$, and all other $C_{i,j}$'s for $j \in [c_i-1]$ are independent.
\end{proof}

Lemma~\ref{lemma:preorder-decomposition} nicely fits with our result from Observation~\ref{obs:groups} to bound the total number of phases \textsf{BoundedContract} requires. In addition, we can show how to implement each phase to bound the complexity of our algorithm. Note that we are assuming that our component contracting function is defined to always yield a degree-weighted tree. We only need to show that the degrees stay bounded throughout the algorithm.

\begin{proof}[Proof of Theorem~\ref{thm:boundedmain}]
In each phase of this algorithm, the only modifications to the graph that occur are applications of the connected contracting functions to connected components of the tree. Since these are assumed to preserve $P(T)$ and we simply solve $P(T_l)$ for the final tree $T_l$, correctness of the output is obvious.  

An important invariant in this algorithm is the $O(n^\epsilon)$ bound on the degree of vertices throughout the algorithm. At the beginning, we know that the degrees are bounded according as it is promised in the input. We show that this bound on the maximum degree of the tree is invariant by proving the degree of vertices are still bounded after a single contraction. 

Recall that we use preorder decomposition with $\lambda=n^\epsilon$ to find the connected components we need to contract in the \compress{} stage. According to definition, the total degree of each group in our decomposition is bounded by $\lambda$. After we contract a component $S$, the degree of the contracted vertex $v^S$ never exceeds the sum of the degree of all vertices in $S$ since every child of $v^S$ is a child of exactly one of the vertices in $S$. Thus, the degree of $v^S$ is bounded by $\lambda = n^\epsilon$. In \rake{} stage, we contract a number of sibling leaves into their common parent. In this case, the degree of the parent only decreases and the bound still holds.

We now focus on round and space complexities. A preordering can be computed using the preorder traversal algorithm from Behnezhad et al.~\cite{behnezhad2019massively}, which executed in $O(1/\epsilon)$ rounds with $O(n^\epsilon)$ local space and $\widetilde{O}(n)$ total space w.h.p.\footnote{This means with probability at least $1/poly(n)$} 
This completes step 1. In steps 2 and 3, the contracting functions are applied in parallel for a total of $O(1)$ rounds (based off our assumption about any given contracting functions) within the same space constraints. Thus, all phases require $O(1)$ rounds except the first, which is $O(1/\epsilon)$ rounds, and satisfy the space constraints of our theorem.

Now we must count the phases. Lemma~\ref{lemma:preorder-decomposition} tells us that for every group, we only have one non-leaf component in the dependency graph after each step 2. In step 3, we then ``\rake{}'' all leaves into their parents. This means that the remaining number of vertices after step 3 is equal to the number of non-leaf vertices in the dependency graph after step 2, which is $k=n^\epsilon$. Observation~\ref{obs:groups} tells us that the resulting graph size is then $O(n/n^\epsilon) = O(n^{1-\epsilon})$. Therefore, in order to get a graph where $|T_l|=1$, we require $O(1/\epsilon)$ phases. Combining this with the complexity of each phase yields the desired result.
\end{proof}

\subsection{Generalized $\alpha$-Tree-Contractions}
\label{sec:main}


In the rest of this section we prove our main result: a generalized tree contraction algorithm, Algorithm~\ref{alg:main}. Building on top of Theorem~\ref{thm:boundedmain}, we can create a natural extension of tree contractions. Recall from \S\ref{sec:preliminaries} that in the \compress{} stage, we must contract maximal connected components containing only vertices $v$ with degree $d(v) < \alpha$. Conveniently, by  Theorem~\ref{thm:boundedmain}, Algorithm~\ref{alg:boundedtreecontract} achieves precisely this. Therefore, to implement tree contractions, we simply need to:

\begin{enumerate}
\item Identify maximal connected components of low degree (Algorithm~\ref{alg:main}, line \ref{line:connectivity}), which can be done in $O(1/\epsilon)$ rounds by Behnezhad et al.~\cite{behnezhad2020parallel}.
\item Use our previous algorithm to execute the \compress{} stage on each component (Algorithm~\ref{alg:main}, line \ref{line:boundedtreecontract}), which can be done by Algorithm~\ref{alg:boundedtreecontract} in $O(1/\epsilon^2)$ rounds.
\item Apply a function that can execute the \rake{} stage (Algorithm~\ref{alg:main}, lines \ref{line:rake-start} through \ref{line:rake-end}).
\end{enumerate}

To satisfy the third step, we use  a \textit{sibling contracting function} (Definition~\ref{def:contractingfunction}), which can contract leaf-siblings of the same parent into a single leaf. Since a vertex might have up to $n$ children, to do this in parallel, we may have to group siblings into $n^\epsilon$-sized groups and repeatedly contract until we reach one leaf. Assuming sibling contractions are locally performed inside machines, this will then take $O(1/\epsilon)$ AMPC rounds.

\begin{algorithm}[ht]
\SetAlgoLined
\KwData{Degree-weighted tree $T = (V, E, W)$, a connected contracting function $\C$, and a sibling contracting function $\R$.}
\KwResult{The problem output $P(T)$.}
 $T_0 \leftarrow T$\;
 \For{$i\leftarrow 1$ \KwTo $l = O(1/\epsilon)$}{ 
    Let $\mathcal{S}_{i-1} \gets \textsf{Connectivity}(T_{i-1}\setminus\{v\in V:\degr(v)>n^\epsilon)$\;\label{line:connectivity}
    Let $\mathcal{K}_{i-1} \gets$ Components in $\mathcal{S}_{i-1,j}$ which represent a leaf in $T'_{i-1}$\;
    Contract each $S_{i-1,j} \in \mathcal{K}_{i-1}$ into $S'_{i-1,j}$ by applying $\textsf{BoundedTreeContract}(S_j, \C )$\;\label{line:boundedtreecontract}
    Let $\mathcal{L}_{i-1}$ be the set of all maximal leaf-stars (excluding their parent) in $T'_{i-1}$\;
    \For{$L_{i-1,0} = \{ v_1, v_2, \ldots, v_k \}\in \mathcal{L}_{i-1}$}{\label{line:rake-start}
        \For{$j\leftarrow 1$ \KwTo $1/\epsilon$}{
            Split $L_{i-1,j-1}$ into $k/n^{j\epsilon}$ parts $L_{i-1,j,1}, \ldots, L_{i-1, j,k/n^{j\epsilon}}$ each of size $n^\epsilon$\;
            Contract each $L_{i-1,j,z}$ into $L'_{i-1,j,z}$ by applying $\R(L_{i-1,j, z})$\;
            Let $L_{i-1,j} \gets \{ L'_{i-1,j,1}, \ldots, L'_{i-1,j,k/n^{j\epsilon}} \}$\;
        }
        Contract $L_{i-1,1/\epsilon}$ by applying $\R(L_{i-1,1/\epsilon})$\;
    }\label{line:rake-end}
    Let $T_i\gets T_{i-1}'$\;
  }
  \KwRet{$\C(T_l)$}\;
 \caption{\textsf{TreeContract} \\
 (Computing the solution $P(T)$ of a problem $P$ on degree-weighted tree $T$ using a connected contracting function $\mathcal{C}$ and a sibling contracting function $\mathcal{R}$)}
\label{alg:main}
\end{algorithm}

We can show that this requires $O(1/\epsilon)$ phases to execute, and  each phase takes $O(1/\epsilon^2)$ rounds to compute due to Theorem~\ref{thm:boundedmain} and our previous argument for \rake{} by sibling contraction. Thus we achieve the following result:

\main*

Recall the definition of $\alpha$-tree-contraction (Definition~\ref{def:alpha-tc}) from \S\ref{subsec:contracting}. First, we prove Lemma~\ref{lemma:rounds} to bound the number of phases in $\alpha$-tree-contraction.


\begin{restatable}{lemma}{alpharounds}\label{lemma:rounds}
For any $\alpha \geq 2$, the number of $\alpha$-tree-contraction phases until we have a constant number of vertices is bounded by $O(\log_{\alpha}(n))$.
\end{restatable}


To show Lemma~\ref{lemma:rounds} we will introduce a few definitions. The first definition we use is a useful way to represent the resulting tree after each \compress{} stage. Before stating the definition, recall the \emph{Dependency Tree} $T'$ of a tree $T$ from Definition~\ref{def:dependency-tree}.

\begin{definition}
An \textbf{$\alpha$-Big-Small Tree} $T'$ is the \emph{dependency tree} of a tree $T$ with weighted vertices if it is a minor of $T$ constructed by contracting all components of $T$ made up of low vertices $v$ with $deg(v) < \alpha$ (i.e., the connected components of $T$ if we were to simply remove all vertices $u$ with $deg(v) \geq \alpha$) into a single node.

We call a node $v$ in $T'$ with $deg(v) > \alpha$ in $T$ a \textbf{big node}. All other nodes in $T'$, which really represent contracted components of small vertices in $T$, are called \textbf{small components}. 
\end{definition}

Note that a small component may not be small in itself, but it can be broken down into smaller vertices in $T$. It is not hard to see the following simple property. This simply comes from the fact that maximal components of small degree vertices are compressed into a single small component, thus no two small components can be adjacent.

\begin{observation}\label{obs:bigsmall}
No small component in an \bigsmall{} can be the parent of another small component.
\end{observation}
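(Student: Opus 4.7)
The plan is to argue by contradiction using the maximality built into the definition of small components. Suppose toward a contradiction that there is a small component $S_1$ in the $\alpha$-Big-Small tree $T'$ that is the parent of another small component $S_2$. By definition of $T'$, having a parent-child edge $(S_1, S_2)$ in $T'$ means that in the original tree $T$ there is some vertex $u \in S_1$ and some vertex $v \in S_2$ such that $u$ is the parent of $v$ in $T$.

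Next I would invoke the definition of small components directly: since $S_1$ and $S_2$ are small components, every vertex in $S_1 \cup S_2$ has degree strictly less than $\alpha$ in $T$. In particular, both $u$ and $v$ are low-degree vertices. But then in the subgraph of $T$ induced by low-degree vertices (the one whose connected components get contracted to form the small components of $T'$), the edge $(u,v)$ is present, so $u$ and $v$ lie in the same connected component.

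Therefore $u$ and $v$ must belong to the same small component after contraction, i.e., $S_1 = S_2$. This contradicts the assumption that $S_1$ is the parent of a \emph{distinct} small component $S_2$, and hence no small component can be the parent of another small component. I do not anticipate a technical obstacle here; the only subtlety to articulate carefully is that the contraction rule groups together \emph{all} low-degree vertices reachable via low-degree vertices, so any two adjacent low-degree vertices are forced to coexist in one component — which is exactly what rules out the assumed parent-child configuration between two small components.
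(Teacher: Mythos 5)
Your proof is correct and follows essentially the same reasoning as the paper, which simply notes that maximal components of low-degree vertices are contracted together, so two adjacent small components would already be one component. You have merely spelled out this maximality argument as an explicit contradiction, which is a fine and faithful elaboration.
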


Consider our dependency tree $T'$ based off a tree $T$ that has been compressed. Obviously, $T'$ is a minor of $T$ constructed as described for $\alpha$-Big-Small Tree because the weight of a vertex equals its number of children (by the assumption of Lemma~\ref{lemma:rounds}). Note a small component refers to the compressed components, and a big node refers to nodes that were left uncompressed. 

To show Lemma~\ref{lemma:rounds}, we start by proving that the ratio of leaves to nodes in $T'$ is large. Since \rake{} removes all of these leaves, this shows that $T$ gets significantly smaller at each step. Showing that the graph shrinks sufficiently at each phase will ultimately give us that the algorithm terminates in a small number of phases.

\begin{lemma}\label{lemma:reduceverts}
Let $T_i'$ be the tree at the end of phase $i$. Then the fraction of nodes that are leaves in $T_i'$ is at least $\alpha/(\alpha+4)$ as long as $w(v)$ is equal to the number of children of $v$ for all $v\in T_i'$ and $\alpha \geq 2$.
\end{lemma}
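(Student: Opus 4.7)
The plan is to partition the vertices of $T_i'$ into \emph{big nodes} $B$ (those with $\deg_{T_{i-1}}(v) > \alpha$) and \emph{small components} $S$, count the internal vertices of each type, and combine these counts with a double-count of the $n-1$ edges of $T_i'$. First I would argue that every big node is internal in $T_i'$. If $v$ is big, then $v$ itself is not inside any small component, so $v$ survives the compress stage; moreover, each of $v$'s $> \alpha \geq 2$ children in $T_{i-1}$ either remains as a big child of $v$ in $T_i'$, or is the root of a distinct maximal small component that becomes a small-component child of $v$ in $T_i'$ (two children of the big vertex $v$ cannot lie in a single maximal small component, since any path joining them in $T_{i-1}$ would have to pass through the big vertex $v$). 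Hence $\deg_{T_i'}(v) = \deg_{T_{i-1}}(v) > \alpha \geq 2$, so $v$ is not a leaf. Writing $I_B$, $I_S$ for the internal big and internal small counts and $L$ for the number of leaves, we conclude $L + I_B + I_S = n$ and $L_B = 0$.

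Next I would double-count the edges of $T_i'$. By the preceding Observation, no small component is the parent of another small component, so every edge has one of three types $E_{BB}$, $E_{BS}$, $E_{SB}$ according to the (parent, child) categories. Every big node contributes at least $\alpha$ child-edges, giving $\alpha I_B \leq E_{BB} + E_{BS}$. Every big node has at most one parent and $L_B = 0$, so $E_{BB} + E_{SB} \leq I_B$, which in particular gives $E_{SB} \leq I_B$. Each internal small component has at least one child, which must be a big node (again by the Observation), so $I_S \leq E_{SB} \leq I_B$. Summing the three edge-type counts and inserting $\alpha I_B \leq E_{BB} + E_{BS}$ yields
\[
n-1 \;=\; E_{BB} + E_{BS} + E_{SB} \;\geq\; \alpha I_B + I_S,
\]
so we have the two constraints $\alpha I_B + I_S \leq n-1$ and $I_S \leq I_B$.

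An elementary optimization over these two constraints shows that $I_B + I_S$ is maximized at the corner $I_S = I_B$, $(\alpha+1)I_B = n-1$, giving $I_B + I_S \leq 2(n-1)/(\alpha+1)$ and therefore $L/n \geq (\alpha-1)/(\alpha+1)$. The bound stated in the lemma then follows from the elementary inequality $(\alpha-1)/(\alpha+1) \geq \alpha/(\alpha+4)$, which cross-multiplies to $2\alpha \geq 4$ and holds for every $\alpha \geq 2$. The step I expect to require the most care is justifying that the compress stage preserves the exact degree of every big node---so that $L_B = 0$---since the rest of the argument is a routine edge double-count that merely exploits the structural constraint supplied by the Observation.
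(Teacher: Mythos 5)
Your proof is correct, and it takes a genuinely different route from the paper's. The paper proves the lemma by induction on the number of big nodes: it locates a big node $v$ with no big descendants (whose children must therefore all be small-component leaves by Observation~\ref{obs:bigsmall}), detaches the subtree at $v$, applies the inductive hypothesis to the remainder, and reattaches, tracking how the leaf count changes. You instead do a single global edge double-count. After establishing that no big node can be a leaf, you classify the $n-1$ edges by the (parent, child) types, observe that big parents contribute at least $\alpha I_B$ child-edges, that internal small components each need at least one (necessarily big) child so $I_S \leq E_{SB} \leq I_B$, and solve a two-constraint LP to bound $I_B + I_S \leq 2(n-1)/(\alpha+1)$, giving $L/n \geq (\alpha-1)/(\alpha+1) \geq \alpha/(\alpha+4)$. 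Two comparative observations: first, your bound $(\alpha-1)/(\alpha+1)$ is sharper than the paper's $\alpha/(\alpha+4)$ for $\alpha > 2$ (they coincide at $\alpha = 2$), and if you had used the true degree bound $\geq \alpha+1$ rather than the slack $\geq \alpha$, you'd get the even tighter $\alpha/(\alpha+2)$, which is attained up to lower-order terms by a chain of big nodes each with $\alpha$ leaf children and one internal small-component child. Second, both proofs silently or explicitly rely on the fact that a big node has $> \alpha$ children in $T_i'$, not merely in $T_i$; the paper asserts ``since $v$ is a big node, it must have weight $w(v) > \alpha$'' without justification, whereas you supply the missing argument that \compress{} preserves big-node degrees because distinct children of a big vertex cannot belong to the same maximal small component (any tree path between them passes through the big vertex). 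That's a real sharpening of the exposition, and you are right to flag it as the step demanding care. The one cosmetic slip is writing $T_{i-1}$ where the paper's notation for the pre-compress tree is $T_i$; this does not affect the argument.
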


\begin{proof}
For our tree $T_i'$, we will call the number of nodes $n$, the number of leaves $\ell$, and the number of big nodes $b$. We want to show that $\ell > n\alpha/(\alpha+4)$. We induct on $b$. When $b=0$, we can have one small component in our tree, but no others can be added by Observation~\ref{obs:bigsmall}. Then $n=\ell=1$, so $\ell> n\alpha/(\alpha+4)$. 

Now consider $T_i'$ has some arbitrary $b$ number of big nodes. Since $T_i'$ is a tree, there must be some big node $v$ that has no big node descendants. Since all of its children must be small components and they cannot have big node descendants transitively, then Observation~\ref{obs:bigsmall} tells us each child of $v$ is a leaf. Note that since $v$ is a big node, it must have weight $w(v) > \alpha$, which also means it must have at least $\alpha$ children (who are all leaves) by the assumption that $w(v)$ is equal to the number of children.

Consider trimming $T_i'$ on the edge just above $v$. The size of this new graph is now $n^* = n - w(v) - 1$. It also has exactly one less big node than $T_i'$.  Therefore, inductively, we know the number of leaves in this new graph is at least $\ell^* \geq \frac{\alpha}{\alpha+4}n^* = \frac{\alpha}{\alpha+4}(n-w(v)-1)$. Compare this to the original tree $T_i'$. When we replace $v$ in the graph, we remove up to one leaf (the parent $p$ of $v$, if $p$ was a leaf when we cut $v$), but we add $w(v)$ new leaves. This means the number of leaves in $T_i'$ is:
\begin{align*}
\ell =& \ell^* - 1 + w(v)
\\=& \frac{\alpha}{\alpha+4}(n-w(v)-1) - 1 + w(v)
\\=& \frac{\alpha}{\alpha+4}n - \frac{\alpha}{\alpha+4}w(v) - \frac{\alpha}{\alpha+4} - 1 + w(v)
\\=& \frac{\alpha}{\alpha+4}n + \frac{4}{\alpha+4}w(v) - \frac{2\alpha+4}{\alpha+4}
\\>& \frac{\alpha}{\alpha+4}n + \frac{4}{\alpha+4}\alpha - \frac{2\alpha+4}{\alpha+4}\tag{1}\label{line:big}
\\\geq& \frac{\alpha}{\alpha+4}n\tag{2}\label{line:alpha}
\end{align*}

Where in line~(\ref{line:big}) we use that $w(v) > \alpha$ and in line~(\ref{line:alpha}) we use that $\alpha \geq 2$.
\end{proof}

Now we can prove our lemma.

\begin{proof}[Proof of Lemma~\ref{lemma:rounds}]
To show this, we will prove that the number of nodes from the start of one \compress{} to the next is reduced significantly. Consider $T_i$ as the tree before the $i$th \compress{} and $T_i'$ as the tree just after.  Let $T_{i+1}$ be the tree just before the $i+1$st \compress, and let $n_i$ be the number of nodes in $T_i$, $n_i'$ be the number of nodes in $T_i'$, and $n_{i+1}$ be the number of nodes in $T_{i+1}$. Since $T_i'$ is a minor of $T_i$, it must have at most the same number of vertices as $T_i$, so $n_i'\leq n_i$. Since $T_{i+1}$ is formed by applying \rake{} to $T_i'$, then it must have the number of nodes in $T_i'$ minus the number of leaves in $T_i'$ ($\ell_i'$). Therefore:

\[n_{i+1} = n_i' - \ell_i' \leq n_i' - \frac{\alpha}{\alpha+4}n_i' = \frac4{\alpha+4}n_i' \leq \frac4{\alpha+4}n_i\]

Where we apply both Lemma~\ref{lemma:reduceverts} that says $\ell_i' \geq \frac\alpha{\alpha+4}n_i'$ and the fact that we just showed that $n_i'\leq n_i$. This shows that from the start of one compress phase to another, the number of vertices reduces by a factor of $\frac4{\alpha+4}$. Therefore, to get to a constant number of vertices, we require $\log_{\frac{\alpha+4}{4}}(n) = O(\log_\alpha(n))$ phases.

\end{proof}

Now we are ready to prove our main theorem.

\main*

\begin{proof}
We will show that our Algorithm~\ref{alg:main}
achieves this result. Lemma~\ref{lemma:rounds} shows that there will be only at most $O(1/\epsilon)$ phases. In each phase $i$, we start by running a connectivity algorithm to find maximally connected components of bounded degree, which takes $O(1/\epsilon)$ time.  Let $\mathcal{K}_{i-1}$ be the set of connected components which are leaves in $T'_{i-1}$.  Then for each component $S_{i-1,j} \in \mathcal{K}_{i-1}$, we run \textsf{BoundedTreeContract}  (Algorithm~\ref{alg:boundedtreecontract}) in parallel
using only our connected contracting function $\C$.
Since the total degree of vertices over all members of $\mathcal{K}_{i-1}$ is not larger than $|T_{i-1}|$ and the amount of memory required for storing a degree-weighted trees is not larger than the total degree, the total number of machines is bounded above by $O(n^{1-\epsilon})$. By definition, the maximum degree of any $S_{i-1,j}$ is $n^\epsilon$. By Theorem~\ref{thm:boundedmain}, each instance of \textsf{BoundedTreeContract} requires $O(1/\epsilon^2)$ rounds, $O(|S_{i-1,j}|^\epsilon)$ local memory and $\widetilde{O}(|S_{i-1,j}|)$ total memory. As $|S_{i-1,j}| \leq |T_{i-1}|$ (we know $|T_0| = n$ and it only decreases over time), we only require at most $O(n^\epsilon)$ memory per machine. Since Since the total degree of vertices over all members of $\mathcal{K}_{i-1}$ is not larger than $|T_{i-1}|$, the total memory required is only $\widetilde{O}(|T_{i-1}|)) = \widetilde{O}(n)$. This is within the desired total memory constraints.

Finally, $\mathcal{R}$ is given to us as a sibling contractor. Consider the \rake{} stage in our algorithm. We distribute machines across maximal leaf-stars. For any leaf-star with $n^\epsilon \leq \degr(v) \leq kn^\epsilon$ for some (possibly not constant) $k$, we will allocate $k$ machines to that vertex. Since again the number of vertices is bounded above by $n$, this requires only $O(n^{1-\epsilon})$ machines. On each machine, we allocate up to $n^\epsilon$ leaf-children to contract into each other. We can then contract siblings into single vertices using $\mathcal{R}$. Since there are at most $n$ children for a single vertex, it takes at most $O(1/\epsilon)$ rounds to contract all siblings into each other. Then, finally, we can use $\mathcal{C}$ to compress the single child into its parent, which takes constant time.

Therefore, we have $O(1/\epsilon)$ phases which require $O(1/\epsilon^2)$ rounds each, so the total number of rounds is at most $O(1/\epsilon^3)$. We have also showed that throughout the algorithm, we maintain $O(n^\epsilon)$ memory per machine and $\widetilde{O}(n)$ total memory. This concludes the proof.
\end{proof}


\subsection{Simulating $2$-tree-contraction in $O(1)$ AMPC rounds} \label{sec:2contract}
Due to Theorem~\ref{thm:main}, we can compute any $P(T)$ on trees as long as we are provided with a connected contracting function and a sibling contracting function with respect to $P$. A natural question that arises is the following: for which class of problem $P$ there exists black-box contracting functions? We argue that many problems $P$ for which we have a $2$-tree-contraction algorithm can also be computed in $O(1/\epsilon^3)$ AMPC rounds using $n^\epsilon$-tree-contraction.

In many problems which are efficiently implementable in the Miller and Reif~\cite{miller1989parallel} \emph{Tree Contraction} framework, we are given $\C$ and $\R$ contracting functions, for \blackbox{Compress} and \blackbox{Rake}  stages respectively, which contract only one node: either a leaf in case of \blackbox{Rake} or a vertex with only one child in case of \blackbox{Compress}. Let us call this kind of contracting functions \emph{unary contracting functions} and denote them by $\C^1$ and $\R^1$. This is a key point of original variants of Tree Contraction which  contract odd-indexed vertices, or contract a maximal independent set of randomly selected vertices. Working efficiently regardless of using only \emph{unary} contracting functions is the reason Tree Contraction was considered a fundamental framework for designing parallel algorithms on trees in more restricted models such as PRAM. For example, in the EREW variant of PRAM, an $O(\log(n))$ rounds tree contraction requires to use only \emph{unary} contracting functions $\R^1$ and $\C^1$. More generally, we define $i$-ary contracting functions as follows.

\begin{definition}
An \emph{``$i$-ary contracting function''}, denoted by $\C^i$ or $\R^i$, is a \emph{contracting function} which admits a subset $S = \{ v_1, v_2, \ldots, v_k \}$ of at most $i+1$ vertices at a time such that $\sum_{j=1}^k{deg(v_j) = O(i)}$. A special case of $i$-ary contracting functions, are \emph{``unary contracting functions''}, denoted by $\C^1$ or $\R^1$, which contract only one vertex at a time.
\end{definition}

However, in the the AMPC model, we can contract the chains more efficiently, and thus we are allowed to utilize more relaxed variants of \blackbox{Compress} stage. Furthermore, as we show in Theorem~\ref{thm:dptwo}, designing \emph{unary} contracting functions $\C^1$ and $\R^1$ is not easier than designing \emph{$i$-ary} contracting functions $\C^i$ and $\R^i$ in the AMPC model. We show this by reducing $\C^i$ and $\R^i$ to $\C^1$ and $\R^1$ in $O(1)$ rounds for any $i = O(n^\epsilon)$. In other words, the restrictions of PRAM model, which requires $\C^1$ and $\R^1$ exclusively, enables us to directly translate a vast literature of problems solved using tree contraction to efficient AMPC algorithms for the same problem given $\C^1$ and $\R^1$. 

As we have shown in Theorem~\ref{thm:main}, it is possible to solve any problem $P(T)$ in $O(1/\epsilon^3)$ AMPC rounds given a connected contracting function $\C$ and a sibling contracting function $\R$, where both are $n^\epsilon$-ary contracting with respect to $P$. In what follows, we demonstrate the construction of $n^\epsilon$-ary contracting functions given a unary connected contracting function $\C^1$ and a sibling contracting function $\R^1$. 

\begin{restatable}{theorem}{dptwo}
\label{thm:dptwo}
Given a  \emph{unary connected contracting function} $\C^1$ and a \emph{unary sibling contracting function} $\R^1$ with respect to a problem $P$ defined on trees, one can build an \emph{$i$-ary connected contracting function} $\C^i$ and an \emph{$i$-ary sibling contracting function} $\R^i$ with respect to $P$ and both $\C^i$ and $\R^i$ run in one AMPC rounds as long as $i = O(n^\epsilon)$.
\end{restatable}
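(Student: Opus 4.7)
The plan is to exploit the fact that any input admissible to $\C^i$ or $\R^i$, having at most $i+1$ vertices with total degree $O(i) = O(n^\epsilon)$, fits in its entirety inside the local memory of a single AMPC machine (since weight vectors have total size $\widetilde{O}(n^\epsilon)$). So I will route each invocation of $\C^i$ (resp.\ $\R^i$) to one machine, load the entire subset $S$ onto it, and simulate the desired $i$-ary operation by a sequence of calls to the unary primitives $\C^1$ and $\R^1$ carried out in local memory. Because the simulation is entirely internal to one machine, the whole procedure consumes exactly one AMPC round, matching the conclusion of the theorem.

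For the sibling case $\R^i$ on a set of leaf siblings $\{v_1,\dots,v_k\}$ sharing parent $p$, I would load $\{v_1,\dots,v_k\}$ together with $p$ onto one machine and apply $\R^1$ to $v_1, v_2, \dots, v_{k-1}$ one at a time. Each $\R^1$ step contracts one leaf and updates $p$'s weight vector; after $k-1$ such applications all information has been aggregated. To match the output specification of $\R^i$ (a single surviving leaf child of $p$ rather than absorption into $p$), I introduce a bookkeeping auxiliary: reserve $v_k$ as the surviving sibling and accumulate into $v_k$'s weight vector (equivalent to treating $v_k$ as a temporary ``collector'' parent whose final state is re-exposed as a leaf). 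Since $\R^1$ is guaranteed to preserve $P$, the composition does as well.

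For the connected case $\C^i$ on a connected subgraph $S$, I would view $S$ as a rooted subtree and reduce it bottom-up inside the machine. At each local step I either find a leaf $v$ of the current subtree $S$ whose only remaining children lie inside $S$ and apply $\R^1$ to contract it into its $S$-parent, or find a vertex with a single child in $S$ and apply $\C^1$ to contract a chain link. Any non-trivial rooted subtree has a leaf, so such a vertex always exists; each step reduces $|S|$ by one, so after at most $|S|-1\le i$ local operations the entire component has been collapsed into the root of $S$. External edges (to children of $S$ that lie outside $S$) are simply relabelled to point to the surviving vertex and do not interact with the internal simulation.

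The main obstacle is convincing oneself that the sequential composition of $\C^1$ and $\R^1$, applied in an order different from the one prescribed by the Miller--Reif framework, still yields the same aggregated state that a single hypothetical $\C^i$ or $\R^i$ call would produce. This reduces to the problem-preservation guarantee built into the definition of a contracting function (Definition~\ref{def:contractingfunction}): any contracting function preserves $P(T)$, so any sequence of unary contractions applied to a fixed subset $S$ preserves $P(T)$ as well, and the final aggregated vertex is a legitimate output of an $i$-ary contractor with respect to $P$. Combining this correctness argument with the one-round local-simulation argument gives the theorem for any $i = O(n^\epsilon)$.
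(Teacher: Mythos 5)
Your $\R^i$ construction, after the ``collector'' fix, essentially reproduces the paper's argument (iterate $\R^1$ over pairs, accumulating into one distinguished sibling), so that half is fine. The gap is in $\C^i$, and it is the heart of why the paper's proof has the shape it does. You claim that $S$ can be fully collapsed to its root by a sequence of $\C^1$/$\R^1$ steps, each removing one vertex, on the grounds that ``any non-trivial rooted subtree has a leaf.'' But $\R^1$ is only defined on genuine leaves of $T$, and $\C^1$ only on genuine chain vertices of $T$. A vertex of $S$ that is a leaf \emph{of the induced subtree} yet has children outside $S$ --- what the paper calls an \emph{unknown leaf} --- cannot be raked, because its aggregated value depends on descendants whose data is not yet available; likewise a vertex with one child in $S$ but further children outside $S$ is not a chain vertex. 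Your own careful qualifiers (``whose only remaining children lie inside $S$'') rule such vertices out, but then the existence claim fails: the local process reaches a fixed point with $\Theta(l)$ vertices remaining, where $l$ is the number of external children of $S$, not a single vertex. Any example in which the root of $S$ has two children each of which has two external children already gets stuck at step zero.

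The paper resolves this by \emph{not} collapsing $S$ all the way. It runs a local $2$-tree-contraction (rake only \emph{known} leaves, compress only true chain vertices) to the fixed point, obtaining a reduced tree $T^{\C}(v_1)$ with no raking-eligible leaves and no degree-one internal vertices, and stores that \emph{entire tree} as the weight vector $W(v_1)$ of the new contracted vertex. This is exactly what lets the construction discharge the degree-weighted constraint: a tree with no internal degree-one vertices has size proportional to its leaf count, every remaining leaf is unknown and contributes at least one external child, so $|T^{\C}(v_1)| = O(l) = O(\deg_{T'}(v_1))$, and since $\C^1,\R^1$ are unary the per-vertex data in $T^{\C}(v_1)$ stays $\widetilde{O}(1)$, giving $W(v_1)\in\{0,1\}^{\widetilde{O}(\deg_{T'}(v_1))}$ as required by Definition~\ref{def:contractingfunction}. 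Your proposal never addresses this size constraint at all, and the ``relabel external edges to the surviving vertex'' step quietly discards the unresolved dependence on external children. Without the reduced-tree-as-weight-vector device you can neither carry the pending dependencies forward nor meet the degree-weighted invariant, so the proposed $\C^i$ is not a valid connected contracting function.
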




\begin{proof}
First, we present an algorithm for $\C^i$. We are given a connected subtree induced by $S = \{ v_1, v_2, \ldots, v_k \}$ of $T$ so that $\sum_{j=1}^k{deg_{T}(v_j)} = O(i)$. Since $i = O(n^\epsilon)$, the whole subtree fits into the memory of a single machine. Some of the leaves of this subtree are \emph{known}, meaning that they are a leaf also in $T$, and others are \emph{unknown}, meaning that they have children outside $S$. Let $\mathcal{U} = \{ u_1, u_2, \ldots, u_l \}$ be the set of the children of \emph{unknown} leaves as well as the children of non-leaf nodes  which are outside of $S$. Ultimately, we want to compress the data already stored on the vertices of $S$ into a memory of $\widetilde{O}(l)$ as the degree of $v_1$ in the contracted tree $T'$ will be $l + 1$, and thus $deg_{T'}(v_1) = l + 1$. 

The $W(v_1)$ of each contracted vertex $v_1$ is a weighted-degreee tree structure $T^{\C}(v_1)$ whose leaves are the children of $v_1$ in $T'$, and there is no vertex with exactly one child in $T^{\C}(v_1)$. Thus, the number of vertices in $T^{\C}(v_1)$ is bounded by $O(l) = O(deg_{T'}(v_1))$. In addition, we are guaranteed that the total size of vectors on each vertex of $T^{\C}(v_1)$ is bounded by $|T^{\C}(v_1)|$  since $\C^1$ and $\R^1$ are unary contracting functions. 
Therefore, we assume each $W(v_j)$ for each vertex $v_j \in S$  has stored a tree structure of size $\widetilde{O}(deg_T(v_j))$. We concatenate all these trees to get an initial $T^{\C}(v_1)$ whose size is bounded by $\sum_{j=1}^k{\widetilde{O}(deg_T(v_j))} = \widetilde{O}(n^\epsilon)$.

We run a $2$-tree-contraction-\emph{like} algorithm locally on $T^{\C}(v_1)$ using $\C^1$ and $\R^1$. Note that we can only rake the known leaves since the data of unknown leaves depend on their children. We repeating \compress{} and \rake{} stages until there is no known leaf or a vertex with one child remain in $T^{\C}(v_1)$. Then, according to Lemma~\ref{lemma:rounds} for $\alpha=2$, the number of remaining vertices in $T^{\C}(v_1)$ is bounded by $O(l)$. We store the final $T^{\C}(v_1)$ in $T'$ which requires a memory of $\widetilde{O}(l) = \widetilde{O}(w_{T'}(v_1))$. Hence, $\C^i$ satisfies the size-constraint on the weight vectors of the resulting weighted-degree tree.

Finally, we present an algorithm for $\R^i$ which is more straight-forward compared to that of $\C^i$. We are given a leaf-star $S = \{ v_1, v_2, \ldots, v_k \}$ of $T$ so that $\sum_{j=1}^k{deg_T(v_j)} = O(i)$. This implies that there are at most $O(n^\epsilon)$ vertices in $S$ as long as $i = O(n^\epsilon)$, and we can fit the whole $S$ into a memory of a single machine. To simulate $\R^i$, we only need to $k-1$ times apply $\R^1$ on $S_j = \{ v_1, v_{j+1} \}$ at the $j$-th iteration. Note that every $v_j \in S$ is a  leaf in $T$, so the data stored in $W(v_j)$ is just $\widetilde{O}(1)$ bits and not a tree structure. Theorem~\ref{thm:dptwo} statement is implied.
\end{proof}

\subsection{Reconstructing the Tree for Linear-sized Output Problems} \label{sec:reconstruct}

Consider a problem $P(T)$ whose output size is also linear in the size of input $n$. For instance, in maximum weighted matching (which we thoroughly study in \S\ref{sec:maximummatching}) we need to find the matching itself. Up to this point, in all of our algorithms, we assume the output of function $P(T)$ is of constant-sized. We simply contract the tree through some iterations until it collapses into a single vertex, and we do not need to remember anything about a vertex which is contracted as a member of a connected component or as a member of a leaf-star. 

In this section, we present a general approach for retrieving the linear-sized solution in a natural scenario, where we need to retrieve a recursively-defined weight vector $P(v)$ of constant size for each vertex $v \in V$. In the special case of maximum weighted matching which can be formulated as a dynamic programming problem, $P(v)$ contains the final value of different DP values with respect to the subtree rooted at $v$\footnote{Note that retrieving $P(v)$ for each vertex $v$ still does not give us the optimum matching and a problem-specific post-processing step is required to retrieve the actual matching}. 

Roughly speaking, our reconstruction algorithm is based on storing the information about components we contracted throughout the algorithm in an auxiliary memory of size $O(n)$. It is easy to observe that if we store the degree-weighted subtree of every connected component or leaf-star that we contract during the algorithm we need at most $O(n)$ addition total memory. Note that during each application of black-box contracting functions, we remove at least one vertex from the tree and each vertex except root is removed exactly once when the algorithm terminates. Namely, for every phase $i$ we need to store $\mathcal{S}_i$ and $\mathcal{L}_i$ in Algorithm~\ref{alg:boundedtreecontract}, and $\mathcal{K}_i$ and every $L_{i,j,z}$  in Algorithm~\ref{alg:main} (In addition to the data stored by each black-box application of Algorithm~\ref{alg:boundedtreecontract}). Since we have adaptive access to these subsets in AMPC, it is sufficient to index them by the id of the surviving vertex of each subset.

The full reconstruction algorithm starts after the main contraction algorithm finishes. We only need to store the information about contracted subsets during the running time of the contraction algorithm. Next, we iterate over the phases of the algorithm in reverse order, i.e., $i = \{ 1/\epsilon, 1/\epsilon - 1, \ldots, 1 \}$, and undo the contractions that were performed during phase $i$. Let $C_v$ be a connected contracted component rooted at $v$, and $\{ w_1, w_2, \ldots, w_k \}$ be children of $v$ post-contraction. 

Whenever we undo a connected contraction like $C_v$, we replace $v$ with the whole structure of $C_v$ including $W(u)$ for every $u \in C_v, u \neq v$. Then we populate the $P(u)$ for every $u \in C_v, u \neq v$. During the contraction algorithm $P(w_j)$ is not known for any $j$. However, during the reconstruction we know $P(w_j)$ for every $1 \leq j \leq k$ since these vertices are contracted in a later phase than the phase we contract $C_v$. Hence, we have already populated $P(w_j)$ and we can use these values to locally populate $P(u)$ for every $u \in C_v$. Undoing the sibling contracting functions in much simpler since their values do not depend on other vertices nor the value of other vertices depend on their value. We populate $P(u)$ for every $u \in L$, where $L$ is a leaf-star, based on the already constant-sized weight vectors $W(u)$. 



\section{Applications}\label{sec:applications}

In this section, we show that our tree contraction algorithm (Algorithm~\ref{alg:main}) as well as our bounded contraction algorithm (Algorithm~\ref{alg:boundedtreecontract}) extend to a number of applications addressed by both Miller and Reif~\cite{miller1985parallel} and Behnezhad et al.~\cite{bateni2018massively}. While this covers many important examples of solving problems in AMPC using tree contractions, this is only a subset of the problems we can solve.

\applications*

This section has multiple subsections that show different strategies to solve problems using Algorithms~\ref{alg:boundedtreecontract} and~\ref{alg:main}. Across the sections, we prove a lemmas for each problem in Theorem~\ref{thm:applications} that ultimately prove the theorem. In Section~\ref{sec:maximummatching}, we apply Algorithm~\ref{alg:main} to solve maximum weighted matching and maximum weighted independent set in $O(1/\epsilon^3)$ rounds. In Section~\ref{sec:maximalmatching}, we apply Algorithm~\ref{alg:boundedtreecontract} to solve maximal matching and maximal independent set in $O(1/\epsilon^2)$ rounds. Finally, in Section~\ref{sec:treeiso}, we use Algorithm~\ref{alg:boundedtreecontract} to solve expression evaluation in $O(1/\epsilon^2)$ rounds and show how this extends to a similarly efficient algorithm for tree isomorphism testing.

\subsection{Maximum Weighted Matching and Independent Set}\label{sec:maximummatching}

In this section, we show how to solve maximum weighted matching and independent set on trees efficiently in AMPC. While these results are explained in the context of maximum weighted matching, the same strategies translate to maximum weighted independent set. For maximum weighted matching, or MWM, we prove:

\begin{lemma}\label{lem:mwm}
MWM on trees can be solved in $O(1/\epsilon^3)$ AMPC rounds with $O(n^\epsilon)$ local memory and $\widetilde{O}(n)$ total memory.
\end{lemma}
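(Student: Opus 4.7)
The plan is to instantiate Theorem~\ref{thm:main} by designing a connected contracting function $\C$ and a sibling contracting function $\R$ for MWM on rooted trees. The standard DP associates with each vertex $v$ two values: $M_0(v)$, the weight of the MWM in the subtree of $v$ in which $v$ is unmatched, and $M_1(v)$, the same quantity with $v$ allowed to be matched to one of its children. These satisfy
\begin{align*}
M_0(v) &= \sum_{c\in \children(v)} \max(M_0(c), M_1(c)), \\
M_1(v) &= \max\Big(M_0(v),\; \max_{c\in \children(v)} \Big[w(v,c) + M_0(c) + \sum_{c'\neq c}\max(M_0(c'), M_1(c'))\Big]\Big),
\end{align*}
and the MWM value is $\max(M_0(\text{root}), M_1(\text{root}))$. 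The task is to encode, in the weight vector $W(v')$ of each contracted vertex, enough information so that $(M_0(v'), M_1(v'))$ can be computed locally once its children's DP values are known.

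The sibling contracting function is straightforward. Given a parent $p$ with leaf children $\ell_1,\ldots,\ell_k$ and edge weights $w_1,\ldots,w_k$, we have $M_0(\ell_i)=M_1(\ell_i)=0$, so the only quantity $p$'s recurrence extracts from the leaf-star is $\max_i w_i$. Hence $\R$ replaces the star by a single leaf $\ell'$ with edge weight $\max_i w_i$ to $p$, storing $\arg\max_i w_i$ inside $W(\ell')$ so that reconstruction can later recover the matched leaf. Correctness is immediate: after substitution, $p$'s DP values are identical.

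For $\C$, let $S$ be a connected component rooted at $r_S$ with external children $c_1,\ldots,c_k$; these become the children of the contracted vertex $v'$, so $\deg(v')=k$ and $W(v')$ must have size $\widetilde{O}(k)$. Following the second technique outlined after Theorem~\ref{thm:main}, $\C$ stores a \emph{minimal tree representation} of $S$: collapse every maximal chain of internal degree-$1$ vertices into a single edge decorated with a constant-sized summary describing how $(M_0,M_1)$ propagate across that chain. Because the minimal tree has no internal degree-$1$ vertex and its leaves are precisely $c_1,\ldots,c_k$, it has only $O(k)$ nodes, and the total description has size $\widetilde{O}(k)=\widetilde{O}(\deg(v'))$, so the degree-weighted tree invariant is preserved. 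When $(M_0(c_i),M_1(c_i))$ eventually become known, we load $W(v')$ onto a single machine and evaluate the DP on the minimal tree sequentially in $O(k)$ time, yielding $(M_0(v'),M_1(v')) = (M_0(r_S),M_1(r_S))$, so $P(T)=P(T')$.

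The main obstacle will be verifying the chain summary. For an internal degree-$1$ vertex $v$ above its single child $u$, one checks that the transformation $(M_0(u),M_1(u))\mapsto(M_0(v),M_1(v))$ is a $2\times 2$ max-plus linear map whose entries depend only on the edge weight $w(v,u)$. Composition of such maps along a chain of length $t$ yields another $2\times 2$ max-plus matrix that can be computed bottom-up in $O(t)$ sequential time and stored in $O(1)$ space per collapsed edge; this is what the contracting function records. With $\C$ and $\R$ in hand, Theorem~\ref{thm:main} yields an $O(1/\epsilon^3)$-round AMPC algorithm for the MWM value with $O(n^\epsilon)$ local and $\widetilde{O}(n)$ total memory. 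To recover the matching itself (a linear-sized output), we invoke the procedure of Section~\ref{sec:reconstruct}: undo the contractions in reverse order, propagating each vertex's matching decision (matched or not, and, if matched, to which child) down into the expanded minimal-tree representations and the recorded leaf-star argmaxes until a decision is attached to every original vertex.
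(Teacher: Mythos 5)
Your sibling contracting function is correct and in fact cleaner than the paper's (which introduces a dummy intermediate parent): since $M_0(\ell_i)=M_1(\ell_i)=0$, the leaf-star contributes nothing to $M_0(p)$ and only $\max_i w_i$ to the matching option in $M_1(p)$, so replacing the star with a single leaf of weight $\max_i w_i$ preserves $p$'s recurrence exactly.

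The connected contracting function $\C$ has a genuine gap. You collapse maximal chains of internal degree-$1$ vertices and then assert that the resulting minimal tree has no internal degree-$1$ vertex \emph{and its leaves are precisely} $c_1,\ldots,c_k$, so it has $O(k)$ nodes. The second half of this claim is false: a connected component $S$ may contain vertices that are leaves of $T$ itself (degree $0$ in $S$, with no external children). Chain collapsing does nothing to them, so they remain as leaves of the minimal tree even though they are not among the $c_i$. In the extreme case, take $S$ to be a star rooted at $r_S$ with $t$ leaf children and no external children: then $k=0$ and $\deg(v')=0$, yet your minimal tree still has $t+1$ nodes, violating the degree-weighted invariant $\dim W(v')=\widetilde O(\deg v')$. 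The paper's Connected Contraction Process (Definition~\ref{def:contract}) explicitly performs a second operation you omit: it prunes internal leaves and folds their contributions into per-vertex constants. This is also why the paper augments the DP to $D(v)=(c_v,c_v',a_v,b_v)$ rather than using the plain pair $(M_0,M_1)$: removing a leaf $\ell$ of $v$ costs nothing in $M_0(v)$ but deletes the option ``match $v$ to $\ell$,'' worth $w(v,\ell)$ over $M_0(v)$, and there is nowhere in $(M_0(v),M_1(v))$ to record this once $\ell$ is gone. You would need to add an $a_v$-like constant (``best raked-leaf edge weight at $v$'') and then note that your $2\times 2$ max-plus transition at a chain vertex depends not only on $w(v,u)$ but also on $v$'s accumulated constant. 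With both steps in place, the size bound $O(\deg(v'))$ does hold and Theorem~\ref{thm:main} applies; without the leaf-raking step the contraction can blow past the memory budget. The reconstruction sketch at the end is fine in outline and matches the paper's Section~\ref{sec:reconstruct} and Lemma~\ref{lem:reconstruct}.
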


Consider a weighted tree $T = (V,E, \omega)$ where $\omega$ is the edge weight function. We will let $T_v$ for any vertex $v\in V$ be the maximal subtree rooted at $v$ in $T$. We first explain how a standard bottom-up dynamic program would solve this problem. Let $D$ be the dynamic program table where $D(v)$ for any $v\in V$ is the stored data corresponding to $T_v$. This dynamic program gets solved inductively based off vertex height starting at the leaves. For each vertex $v\in V$, we want to find two main values and store them in $D(v)$: (1) $c_v$, the value of the MWM on $T_v$, and (2) $c_v'$, the same value except we do \emph{not} allow $v$ to be matched. If $Ch_v$ is the children of $v$, we can write $c_v$ and $c_v'$ in terms of that of its children:

\begin{align*}
c_v =& \max\left\{\max_{u\in Ch_v} \left(\omega(u,v) + c_u' + \sum_{x\in Ch_v\setminus\{u\}} c_x\right), \sum_{u\in Ch_v} c_u\right\}\tag{1}\label{cv}
\\c_v' =& \sum_{u\in Ch_v} c_u\tag{2}\label{cv'}
\end{align*}

For the simpler $c_v'$, we know that the value of the MWM of $T_v$ assuming $v$ is not matched is the sum of the MWMs on $T_u$ for all $u\in Ch_v$. For $c_v$, this is one possibility, but we could also consider matching $v$ to some $u\in Ch_v$. In that case, we get the added edge weight $\omega(u,v)$, but we require $u$ to not be matched when we consider the MWM of $T_u$, so we use $c_u'$ instead of $c_u$. Then $c_v$ is just the maximum of all these options. We can simplify these equations by pulling out the summations:

\begin{align*}
c_v =& \max\left\{\max_{u\in Ch_v} \left(\omega(u,v) + c_u' - c_u\right), 0\right\}\tag{3} + \sum_{u\in Ch_v} c_u\label{cv2}
\end{align*}

This is the standard dynamic program to solve MWM on trees. The data computed at each vertex $v$ is $D(v) = (c_v, c_v')$. This is passed up to its parent $u$ so that $D(u) = (c_u, c_u')$ can be solved and so on until we are able to compute $c_r$ where $r$ is the root of $T$. This will be our final solution that computes the value of the maximum weighted matching.

Before we describe our contraction process, we formally define our dynamic program. Note that this will include some notation not introduced in the description above. Afterwards, we will describe why the additional pieces of information are necessary.

\begin{definition}\label{def:dp}
Let $D$ be the MWM dynamic table on a given tree $T = (V,E)$ that acts on both vertices and edges. Then $D(v)= (c_v, c_v', a_v, b_v)$ and $D(e) = (\omega_1(e),\omega_2(e),\omega_3(e),\omega_4(e))$ where:
\begin{itemize}
\item $c_v$ is a function that computes the MWM on $T_v$
\item $c_v'$ is a function that computes the MWM on $T_v$ assuming $v$ is not matched
\item $a_v$ and $b_v$ are constants
\item $\omega_i(e)$ for $i\in[4]$ are four different constant weights on edge $e$
\end{itemize}
and more specifically:
\begin{align*}
c_v =& \max\{\max_{u\in Ch_v} \left(\omega_1(u,v) + c_u' - c_u\right),
\max_{u\in Ch_v} \left(\omega_2(u,v)\right),
\max_{u\in Ch_v} \left(\omega_3(u,v) + c_u' - c_u\right),
\\&\qquad\,\,\max_{u\in Ch_v} \left(\omega_4(u,v)\right),
a_v, 0\} + \sum_{u\in Ch_v} c_{v,u} + b_v
\\c_v' =& \sum_{u\in Ch_v} c_{v,u} + b_v
\end{align*}
for $c_{v,u} = \max\{\omega_3(u,v) + c_u', \omega_4(u,v) + c_u\}$.
\end{definition}

It is not too hard to see that this is a generalization of the MWM dynamic program. If we let $\omega(e) = (\omega(v), -\infty, -\infty, 0)$ for all $e\in E$ and $a_v=b_v=0$ for all $v\in V$, then this directly \textit{becomes} the dynamic program we mentioned before. Therefore:

\begin{proposition}\label{prop:dp}
Consider a tree $T = (V,E)$. The dynamic program $D$ from Definition~\ref{def:dp} solves MWM when $\omega(e) = (\omega(v), -\infty, -\infty, 0)$ for all $e\in E$ and $a_v=b_v=0$ for all $v\in V$. In other words, for all $v\in V$, $c_v$ from $D(v)$ is the value of the MWM on $T_v$.
\end{proposition}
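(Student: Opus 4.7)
The plan is to prove Proposition~\ref{prop:dp} by direct substitution followed by induction on vertex height. That is, I plug $\omega_1(e)=\omega(e)$, $\omega_2(e)=\omega_3(e)=-\infty$, $\omega_4(e)=0$, and $a_v=b_v=0$ into the generalized recurrences of Definition~\ref{def:dp}, observe that the resulting expressions for $c_v$ and $c_v'$ coincide with the standard MWM recurrences (\ref{cv'}) and (\ref{cv2}), and then invoke a routine bottom-up induction that is already known to compute MWM on trees. The argument is essentially a sanity check that the generalization is a strict extension of the standard DP, so no new algorithmic content is required.

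First I would simplify the auxiliary quantity $c_{v,u} = \max\{\omega_3(u,v) + c_u',\ \omega_4(u,v) + c_u\}$. Under the prescribed assignment this becomes $\max\{-\infty + c_u',\ 0 + c_u\} = c_u$, where the $-\infty$ summand is interpreted as eliminated from the $\max$. Substituting into the formula for $c_v'$ in Definition~\ref{def:dp} yields
\[
c_v' \;=\; \sum_{u \in Ch_v} c_{v,u} + b_v \;=\; \sum_{u\in Ch_v} c_u,
\]
which is exactly equation~(\ref{cv'}).

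Next I would verify the recurrence for $c_v$. Three of the four inner max terms degenerate: $\max_{u} \omega_2(u,v) = -\infty$, $\max_{u}(\omega_3(u,v) + c_u' - c_u) = -\infty$, and $\max_{u}\omega_4(u,v) = 0$. Together with $a_v = 0$ and the explicit trailing $0$, the outer $\max$ collapses to $\max\bigl\{\max_{u\in Ch_v}\bigl(\omega(u,v) + c_u' - c_u\bigr),\ 0\bigr\}$. Combined with $\sum_{u \in Ch_v} c_{v,u} + b_v = \sum_{u \in Ch_v} c_u$ from the previous step, this gives
\[
c_v \;=\; \max\!\left\{\max_{u\in Ch_v}\bigl(\omega(u,v) + c_u' - c_u\bigr),\ 0\right\} + \sum_{u\in Ch_v} c_u,
\]
which is exactly equation~(\ref{cv2}).

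Since both recurrences reduce to the standard MWM dynamic program described before Definition~\ref{def:dp}, the proof concludes by a standard induction on the height of $v$ in $T$: at a leaf $v$ we have $Ch_v = \emptyset$, so both sums and maxima are empty and $c_v = c_v' = 0$, matching the MWM value on the single-vertex tree $T_v$; at an internal vertex $v$, the inductive hypothesis ensures $c_u$ and $c_u'$ are the correct MWM values on $T_u$ for every child $u$, and the recurrences above produce the correct values for $T_v$ by the textbook argument for tree MWM (choosing whether or not to match $v$ to one of its children). The only subtlety worth flagging is the convention that $-\infty$ summands are discarded by the surrounding $\max$; this is notational and presents no real obstacle.
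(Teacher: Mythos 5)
Your proposal is correct and follows the same line of reasoning the paper intends: the paper simply asserts that this substitution ``directly becomes'' the standard MWM dynamic program (equations~(\ref{cv'}) and~(\ref{cv2})) without writing out the algebra, and you have filled in exactly that substitution plus the routine bottom-up induction. Nothing in your argument deviates from or contradicts the paper's (implicit) proof.
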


The reason why we require this additional data in $D$ to implement our algorithm is that it cannot compute this dynamic program simply from bottom up. In intermediate steps, it will have to contract arbitrary connected components $C$ rooted at $v$ into a single vertex. Since $C$ should represent the value of the MWM of the maximal subtree containing all of the component $C$, we let $c_C = c_v$ and $c_C'=c_v'$. That way, the maximal subtree in question, $T_v$, contains the entire connected component $C$.



In order to solve the dynamic program on the tree after contracting $C$, we need to be able to compute $c_v$ and $c_v'$ as functions of the data of $C$'s children, $\{D(u)\}_{u\in Ch_C}$, after $C$ is contracted. This is precisely the information we need to encode into $C$: how to compute $c_C = c_v$ and $c_C'= c_v'$ in terms of $\{c_u, c_u'\}_{u\in Ch_C}$. Obviously, this could be easily done by recalling the entire structure of $C$ and using that to continue the dynamic program from $C$'s children up through the internal nodes of $C$ until we compute $v$.  However, this is inefficient, because this is as large as $|C|$. Recall that our algorithm must store only $\widetilde{O}(\deg(C))$ bits, or alternatively, $O(\deg(C))$ values. Instead, we show how to contract the component $C$ to create a smaller component $C'$ where $|C'| = O(\deg(C))$ such that we can still solve the dynamic program on $C$.

First, we observe that $|C'| = O(\deg(C))$ if it contains no internal leaves (i.e., vertices that have no children in $T$) and no internal nodes that only have one child in $T$. We show how to contract such vertices to reduce the component size. Consider a set of leaves $\mathcal{L}\in C$ that share a parent $p$. Since $C$ is connected, $p\in C$. Since any $\ell\in \mathcal{L}$ has no children in or out of $C$, we know $c_\ell$ and $c_\ell'$ have no dependencies, meaning they are constants. Consider rewriting the equation for $c_p$ and $c_p'$ from Definition~\ref{def:dp} by simply separating the leaf and nonleaf children:

\begin{align*}
c_p =& \max\{\max_{u\in Ch_p\setminus\mathcal{L}} \left(\omega_1(u,p) + c_u' - c_u\right),
\max_{u\in Ch_p\setminus\mathcal{L}} \left(\omega_2(u,p)\right),
\\&\qquad\,\,\max_{u\in Ch_p\setminus\mathcal{L}} \left(\omega_3(u,p) + c_u' - c_u\right),
\max_{u\in Ch_p\setminus\mathcal{L}} \left(\omega_4(u,p)\right),
\\&\qquad\,\,\max\{\max_{\ell\in\mathcal{L}} \left(\omega_1(u,p) + c_u' - c_u\right), \max_{u\in Ch_p\setminus\mathcal{L}} \left(\omega_2(u,p)\right),
\\&\qquad\qquad\max_{u\in Ch_p\setminus\mathcal{L}} \left(\omega_3(u,p) + c_u' - c_u\right),
\max_{u\in Ch_p\setminus\mathcal{L}} \left(\omega_4(u,p)\right),a_p\}, 0\} \\&+ \sum_{u\in Ch_p\setminus\mathcal{L}} c_{p,u} + \sum_{\ell\in \mathcal{L}}  c_{p,\ell} + b_p
\\c_p' =& \sum_{u\in Ch_p\setminus\mathcal{L}} c_{p,u}+ \sum_{\ell\in \mathcal{L}} c_{p,\ell}  + b_p.
\end{align*}

Note that many of these terms, only consist of constants. Specifically the nested maximization term in $c_p$ and $\sum_{\ell\in \mathcal{L}} c_{p,\ell}  + b_p$, which appears in both $c_p$ and $c_p'$, are constants. When we contract, we can compute these two constant values and combine them with the constant values associated with $D(p)$. This is how we compute $a_p$ and $b_p$ from Definition~\ref{def:dp}. Note that these are running values that change over time. Specifically, $a_p$ and $b_p$ start at zero (as in the standard MWM problem), and as a vertex $\ell$ that is a child of $p$ gets trimmed:
\begin{align*}
a_p \gets& \max\{\max_{\ell\in\mathcal{L}} \left(\omega_1(u,p) + c_u' - c_u\right), \max_{u\in Ch_p\setminus\mathcal{L}} \left(\omega_2(u,p)\right),
\max_{u\in Ch_p\setminus\mathcal{L}} \left(\omega_3(u,p) + c_u' - c_u\right),
\\&\quad\,\,\,\,\max_{u\in Ch_p\setminus\mathcal{L}} \left(\omega_4(u,p)\right),a_p\}
\\b_p \gets& \sum_{\ell\in \mathcal{L}} c_\ell + b_p
\end{align*}

Now $c_p$ and $c_p'$ are no longer functions of $\mathcal{L}$. Therefore, we can safely trim all leaves. This is the first step in simplifying $C$ to make $C'$.

Next, we consider vertices with one child. More generally, let a \textit{maximal chain} in $C$ be a maximal path of edges from parents to children where the children have one child, and we additionally include the descending edge from the final child. Consider some maximal chain $P = (e_1,\ldots, e_k)$. We will contract this into a single edge $e_P$. Recall that $D$ stores a four-tuple $D(e_p) =  (\omega_1(e_P),\omega_2(e_P),\omega_3(e_P),\omega_4(e_P))$, which is necessary for the maximal chain contractions. Here, each value represents the value of the MWM along the path under the following restrictions: $\omega_1(e_P)$ represents that $e_1$ and $e_k$ are matched, $\omega_2(e_P)$ represents $e_1$ is matched but $e_k$ is not matched, $\omega_3(e_P)$ represents $e_1$ is not matched but $e_k$ is matched, and $\omega_4(e_P)$ represents neither $e_1$ nor $e_k$ are matched. For example, consider when our path is just one edge $e_1$. Matching across $e_1$ yields $\omega(e_1)$ value, so $\omega_1(e_P) = \omega(e_1)$. The parent cannot match without the child being matched, so $\omega_2(e_P) = -\infty$, and similarly $\omega_3(e_P) = -\infty$. If neither parent nor child are allowed to match, the MWM yields 0 value. Thus $D(e_P) = (\omega(e_1), -\infty, -\infty, 0)$. This is why setting the edge values in the dynamic program in this way reduces the problem to MWM.

On a larger path, however, these values might become more general. In our chain, let a child, parent, and grandparent vertices be $c, p$ and $g$ respectively, with edges $e_k=(c,p)$ and $e_{k-1}=(p,g)$. Since we are contracting this into an edge $e = (g,c)$, the indices need to represent: matching both $g$ and $c$ along the path, matching just $g$ along the path, matching just $c$ along the path, and matching neither $g$ nor $c$ along the path. We can do this in three ways: (1) by matching both $c$ and $p$ along $e_k$ and just matching $g$ along $e_{k-1}$, (2) by matching just $c$ along $e_k$ and matching both $p$ and $g$ along $e_{k-1}$, or (3) by matching just $c$ along $e_k$ and just $g$ along $e_{k-1}$. In (1) and (2), $p$ is matched along one of the paths, thus it cannot match to any of its other children that may have existed in an earlier iteration of the tree. Thus we can only aggregate $c_{p\setminus c}'$ into this maximum matching, where $c_{p\setminus c}'$ (resp. $c_{p\setminus c}$) is the same as $c_p'$ (resp. $c_{p\setminus c}$) but assuming we simply cut $p'$ from its child $c$. In (3), however, $p$ is free to match with another of its children, thus we can use $c_p$. This shows how to compute part of $\omega_1(e_P)$. We can compute the other three values in a similar way. By using this process, we find:

\begin{align*}
\omega_1(e_{k-1},e_k) =& \max(\omega_1(e_k)+\omega_2(e_{k-1})+c'_{p\setminus c}, \omega_3(e_k)+\omega_1(e_{k-1})+c'_{p\setminus c}, \\&\qquad\,\,\,\omega_3(e_k)+\omega_2(e_{k-1})+c_{p\setminus c})
\\\omega_2(e_{k-1},e_k) =& \max(\omega_2(e_k)+\omega_2(e_{k-1})+c'_{p\setminus c}, \omega_4(e_k)+\omega_1(e_{k-1})+c'_{p\setminus c}, \\&\qquad\,\,\,\omega_4(e_k)+\omega_2(e_{k-1})+c_{p\setminus c})
\\\omega_3(e_{k-1},e_k) =& \max(\omega_1(e_k)+\omega_4(e_{k-1})+c'_{p\setminus c}, \omega_3(e_k)+\omega_3(e_{k-1})+c'_{p\setminus c},\\&\qquad\,\,\, \omega_3(e_k)+\omega_4(e_{k-1})+c_{p\setminus c})
\\\omega_4(e_{k-1},e_k) =& \max(\omega_2(e_k)+\omega_4(e_{k-1})+c'_{p\setminus c}, \omega_4(e_k)+\omega_3(e_{k-1})+c'_{p\setminus c}, \\&\qquad\,\,\,\omega_4(e_k)+\omega_4(e_{k-1})+c_{p\setminus c})
\end{align*}

Notice that all these values are constant since all edge weights are known and $c_{p\setminus c}$ and $c_{p\setminus c}'$ must be known since $c$ is $p$'s only child. Repeatedly applying this to the bottom two adjacent edges eventually contracts all edges and leaves us with $\omega_i(e_k)$ for all $i\in[4]$. It is not hard to see that, assuming inductively that all computed weights and $c(v)$ and $c'(v)$ values are correct throughout this entire process, then the final weight tuple $\omega(e_p)$ is correct after contracting the maximal chain. After contracting all maximal chains to form $C'$ and then trimming leaves with the aforementioned process, $C'$ becomes a tree with no degree 1 vertices and no internal leaves. Therefore, if $\mathcal{L}_{C'}$ is the leaves of $C'$, $|C'| = O(|\mathcal{L}_{C'}|)$. Since they are not internal leaves, they must each have at least one child outside of $C'$. Therefore, this is a lower bound on $\deg(C')$. Thus $|C'| = O(\deg(C')) = O(\deg(C))$.

This shows how to contract $C$ into $C'$ such that $|C'| = O(\deg(C))$ where we still have the information to compute $c_v$ and $c_v'$. Note that this is how we will store $c_C$ and $c_C'$ in $D(C)$: as a component $C'$ with $|C'| = O(\deg(C))$ such that the values of $c_v$ and $c_v'$ in terms of $Ch_C$ are the same as they were in the original component $C$. Since we will refer to this process, we will create a formal definition for it:

\begin{definition}\label{def:contract}
We call the process defined above, for replacing a component $C$ with a component $C'$, the \textbf{Connected Contraction Process}.
\end{definition}

As we have shown above:

\begin{proposition}\label{prop:contractprocess}
The Connected Contraction Process replaces any connected component $C$ with a component $C'$ such that $|C'| = O(\deg(C))$ and $c_v$ and $c_v'$ remain the same in $C$ and $C'$.
\end{proposition}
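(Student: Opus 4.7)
The plan is to verify the two sub-operations of the Connected Contraction Process separately, and then combine their structural effects to bound $|C'|$. Throughout, I will maintain the following invariant as a loop invariant: after each sub-step, for every remaining vertex $v \in C'$, the values $c_v$ and $c_v'$ (viewed as functions of the data stored on $v$'s current children in $C'$ and the unknown dynamic-program values of $C$'s external children below $v$) equal the corresponding values in the original component $C$. Correctness of the whole process then follows since $v$ is left with the same semantics after contraction, and in particular $c_C = c_v$ and $c_C' = c_v'$ as required.

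For the leaf-trimming step, I would directly verify the invariant algebraically. Split $Ch_p$ into the leaf children $\mathcal{L}$ (to be trimmed) and non-leaf children $Ch_p \setminus \mathcal{L}$. Since each $\ell \in \mathcal{L}$ is a true leaf in $T$, the quantities $c_\ell$, $c_\ell'$ and $c_{p,\ell} = \max\{\omega_3(\ell,p)+c_\ell',\,\omega_4(\ell,p)+c_\ell\}$ are constants. The formulas for $c_p$ and $c_p'$ in Definition~\ref{def:dp} split additively over these two subsets, and the leaf contributions depend only on constants. Hence absorbing $\sum_{\ell\in\mathcal{L}} c_{p,\ell}$ into the running constant $b_p$, and the ``leaves'' branch of the outer maximum into the running constant $a_p$, leaves $c_p$ and $c_p'$ unchanged as functions of the remaining (non-leaf) children. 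This preserves the invariant.

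For maximal-chain contraction, I would proceed by induction on the length of the chain. The base case (a single edge $e_1$ in $T$) is handled by the initialization $(\omega_1, \omega_2, \omega_3, \omega_4) = (\omega(e_1), -\infty, -\infty, 0)$, which exactly encodes the four combinations of ``top endpoint matched / unmatched'' and ``bottom endpoint matched / unmatched'' along a single edge. For the inductive step, consider composing two consecutive edges $e_{k-1}$ and $e_k$ meeting at an interior vertex $p$ of degree $2$ (one parent $g$, one child $c$). The composite 4-tuple must be maximized over all ways of (a) matching or not matching $p$ along $e_{k-1}$, (b) matching or not matching $p$ along $e_k$, and (c) using the best MWM on $p$'s subtree with $p$ either free or forced unmatched — the latter being captured by $c_{p\setminus c}$ or $c_{p\setminus c}'$ respectively (with $c$ excluded because its contribution is already represented by the chain weights on $e_k$). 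Enumerating the three consistent cases for each of the four boundary conditions gives exactly the four equations stated in the excerpt for $\omega_1(e_{k-1},e_k),\ldots,\omega_4(e_{k-1},e_k)$. The main obstacle is this case analysis: one must check that the ``$p$ matched on both sides'' configuration is correctly excluded and that every legal configuration is attained by one of the three displayed alternatives; this is tedious but mechanical. Inductively repeating this composition down the chain preserves the invariant for the composite edge.

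Finally, for the size bound, I would argue purely structurally about $C'$. After leaf-trimming, no vertex in $C'$ is an internal leaf of $T$: every leaf of $C'$ has at least one child outside $C$ and hence contributes at least one to $\deg(C)$. After maximal-chain contraction, no internal vertex of $C'$ has exactly one child, since any such vertex would lie on a contracted chain. Thus every internal vertex of $C'$ has at least two children, and a standard induction on trees gives $|C'| \leq 2|\mathcal{L}_{C'}| - 1$. Combined with the injection from leaves of $C'$ into the external children of $C$, this yields $|C'| = O(|\mathcal{L}_{C'}|) = O(\deg(C'))=O(\deg(C))$, completing the proof.
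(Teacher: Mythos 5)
Your proof is correct and mirrors the paper's reasoning: the paper presents this same argument as inline discussion immediately preceding the proposition (absorbing constant leaf contributions into the running constants $a_p$, $b_p$, composing the edge 4-tuple along a chain via exactly the three-case analysis you describe, and bounding $|C'|$ by the number of leaves of $C'$, which injects into the external children of $C$). Your explicit loop-invariant framing and the sharper statement $|C'| \le 2|\mathcal{L}_{C'}|-1$ are slightly more formal than the paper's informal treatment but introduce no new ideas.
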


Now we can introduce the version of MWM on degree-weighted trees using the dynamic program formalization. The weight vector $W$ will simply be the dynamic program information. Note that the dynamic program stores data on edges as well, however, we can simply store this on the child vertex of each edge. Therefore, for all $v\in V$ with parent $p\in V$, $W(v) = (D(v), D(v,p))$. To show this satisfies the degree-weighted property, we will simply need to show that $\dim(W(v)) =\widetilde{O}(\deg(v))$.

\begin{definition}
Consider a \degwei{} tree $T = (V, E, W)$ where we have $W(v) = (D(v), D(v,p))$ for all $v\in V$ with parent $p\in V$ and $W(r) = (D(r))$ for root $r$, where all $D$ values are stored as explained in the Connected Contraction Process. Then the \textbf{degree weighted maximum weighted matching} on $T$ is equivalent to the problem of solving $D(v)$ for all $v\in V$.
\end{definition}

Then by extension of Proposition~\ref{prop:dp}, solving this MWM problem on \degwei{} trees can be used to solve MWM on standard trees with the aforementioned input $W$. Now we are ready to apply our main algorithms.

\begin{lemma}\label{lem:maxmatch}
Given a \degwei{} tree $T=(V,E,W)$, there exists an $O(1/\epsilon^3)$ round AMPC algorithm for finding the value of the maximum weighted matching in $T$. The memory per machine is $O(n^\epsilon)$, and the total memory is $\widetilde{O}(n)$.
\end{lemma}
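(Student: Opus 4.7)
The plan is to instantiate Theorem~\ref{thm:main} on the input degree-weighted tree $T$. It suffices to exhibit (i) a connected contracting function and (ii) a sibling contracting function for the MWM dynamic program of Definition~\ref{def:dp}, each runnable locally on a single machine with $O(n^\epsilon)$ memory and each preserving the degree-weighted invariant. Correctness of the final answer then follows from Proposition~\ref{prop:dp}: after the contractions complete, only one vertex $r$ remains, and its $c_r$ entry is exactly the MWM value on $T$.

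For the connected contracting function I would take the Connected Contraction Process of Definition~\ref{def:contract} verbatim. Given a connected component $C$ of size $O(n^\epsilon)$, the process first absorbs every internal leaf into its parent (updating $a_p,b_p$) and then contracts every maximal chain edge-by-edge into a single edge carrying the four weights $\omega_1,\ldots,\omega_4$; both updates are entirely local and run sequentially on one machine. Proposition~\ref{prop:contractprocess} already guarantees that the output $C'$ has $|C'|=O(\deg(C))$ and that the functional dependence of $c_v,c'_v$ on the children of $C$ in the contracted tree is preserved. Since every surviving vertex (resp.\ edge) of $C'$ carries only a constant-sized $D(v)$ (resp.\ $D(e)$), the total storage is $\widetilde O(\deg(v^C))$, so the output remains degree-weighted.

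For the sibling contracting function, given a set $\mathcal{L}=\{\ell_1,\ldots,\ell_k\}$ of leaf siblings with common parent $p$, each $c_{\ell_i}$ and $c'_{\ell_i}$ is already a known constant, so every summand $c_{p,\ell_i}=\max\{\omega_3(\ell_i,p)+c'_{\ell_i},\,\omega_4(\ell_i,p)+c_{\ell_i}\}$ and every expression of the form $\omega_j(\ell_i,p)+c'_{\ell_i}-c_{\ell_i}$ or $\omega_j(\ell_i,p)$ that appears inside the outer max of $c_p$ is a precomputable constant. I would replace $\mathcal{L}$ by a single synthetic leaf $\ell^\star$ whose four edge weights $\omega_j(\ell^\star,p)$ store the four componentwise maxima $\max_{\ell_i\in\mathcal{L}}(\cdot)$, and whose $D(\ell^\star)$ encodes the additive constant $\sum_{\ell_i}c_{p,\ell_i}$. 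A line-by-line substitution into Definition~\ref{def:dp} shows that $c_p$ and $c'_p$ are unchanged after the replacement, so $P(T)$ is preserved; $\ell^\star$ carries only constant-sized data, trivially meeting the degree-weighted bound.

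The main obstacle is the algebraic bookkeeping for the four max-branches: because the DP simultaneously tracks four scenarios through each edge (both endpoints matched above, either endpoint matched alone, or neither), one must check that the aggregated constants on $\ell^\star$, and analogously the updated $(a_p,b_p)$ when leaves are absorbed into their parent inside the Connected Contraction Process, correctly reproduce each of the four branches of $c_p$ rather than only the overall maximum. Once this bookkeeping is verified, the two contracting functions satisfy the hypotheses of Theorem~\ref{thm:main}, which yields an algorithm with round complexity $O(1/\epsilon^3)$, local memory $O(n^\epsilon)$, and total memory $\widetilde O(n)$; reading off $c_r$ at the surviving vertex completes the proof.
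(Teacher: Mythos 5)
Your connected contracting function is identical to the paper's: apply the Connected Contraction Process verbatim and invoke Proposition~\ref{prop:contractprocess}. For the sibling contracting function, however, you take a genuinely different route. The paper inserts a \emph{dummy vertex} $d$ between the common parent $p$ and the leaf set $\mathcal{L}$, tunes $D(d,p)$ so that matching across $(d,p)$ is equivalent to matching $p$ to one of the original leaves, and then reduces to the already-verified Connected Contraction Process applied to $\{d\}\cup\mathcal{L}$; this avoids re-deriving anything and simply reuses Proposition~\ref{prop:contractprocess}. You instead construct a synthetic leaf $\ell^\star$ directly, by solving for $\omega_1(\ell^\star,p),\ldots,\omega_4(\ell^\star,p)$, $c_{\ell^\star}$, $c'_{\ell^\star}$ so that (a) each of the four max-terms that $\ell^\star$ contributes to $c_p$ in Definition~\ref{def:dp} equals the corresponding $\max_{\ell_i\in\mathcal{L}}(\cdot)$ over the original leaves, and (b) $c_{p,\ell^\star}$ equals the aggregated sum $\sum_i c_{p,\ell_i}$. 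This system is solvable (e.g.\ set $c'_{\ell^\star}=c_{\ell^\star}=S-\max\{M_3,M_4\}$ with $S=\sum_i c_{p,\ell_i}$ and $M_j$ the four maxima, then read off $\omega_j$), so your construction works; but note that your phrasing slightly oversimplifies: $\omega_1,\omega_3$ are not literally the raw componentwise maxima unless you also normalize $c_{\ell^\star}'-c_{\ell^\star}=0$, and ``$D(\ell^\star)$ encodes the additive constant'' means a shifted value, not $S$ itself. The paper's dummy-vertex reduction buys you correctness for free from the already-proven lemma; your direct construction is more self-contained but shifts the verification burden onto exactly the algebraic bookkeeping you flagged. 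Both are correct and both yield the same $O(1/\epsilon^3)$ bound via Theorem~\ref{thm:main}.
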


\begin{proof}
First, we must describe our connected contracting function $\C$. Consider some connected component $C \subseteq V$. We simply apply the Connected Contraction Process to contract the component and define the new $W(C)$ weight vector. By Proposition~\ref{prop:contractprocess}, this ensures that $\dim(W(C)) = \widetilde{O}(\deg(v))$, thus this is a valid weight vector. It is also a valid connected contracting function, since $c_v$ and $c_v'$ as a function of $Ch_C$ does not change.

Next, we describe our sibling contracting function. This will also be quite similar to the Connected Contraction Process. One small nuance is we contract the sibling leaves into a single leaf, instead of contracting a contiguous component, such as the sibling leaves with its parent. It is not too hard to see that this can be done by putting a dummy vertex between the sibling leaves and its parent as an intermediate parent node, so that the parent is now the grandparent (note: this requires us to ensure matching with the dummy means that the parent will be matched, but this can be done with our four-tuple edge weights, $D(e)$). Then we contract the siblings into this dummy parent to create the new vertex. Again, by Proposition~\ref{prop:contractprocess}, we have a valid sibling contracting function.

Therefore, by Theorem~\ref{thm:main}, we can solve MWM on \degwei{} trees in $O(1/\epsilon^3)$ AMPC rounds with $O(n^\epsilon)$ memory per machine and $\widetilde{O}(n)$ total memory.
\end{proof}

Obviously then, since MWM on trees is a subproblem of MWM on \degwei{} trees, we can extend this solution to standard trees. Thus far, we have only shown how to compute $D(v)$ for each $v \in V$. In part, this gives us the value of the MWM of the maximal subtrees rooted at each vertex. We now show how to reconstruct the actual matching. Note to achieve this result, each vertex $v\in V$ must keep track of a pointer from it to its child that it selects in $c_v$, (i.e., the single child that is used in the maximization component in the computation of $c_v$, or no pointer if no child is selected). Therefore, assume each vertex is given such a pointer or possibly no pointer at all. We will refer to these pointers as child-match pointers.

\begin{lemma}\label{lem:reconstruct}
Given a tree $T = (V,E)$ along with $c_v$, $c_v'$, and the (possibly null) child-match pointer for each $v\in V$, we can find a MWM on $T$ in $O(1/\epsilon^3)$ AMPC rounds with $O(n^\epsilon)$ local space and $\widetilde{O}(n)$ total space.
\end{lemma}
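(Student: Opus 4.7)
The plan is to recast the reconstruction task as a top-down propagation on $T$ and solve it by a second invocation of the tree-contraction framework, combined with the reverse-contraction machinery of Section~\ref{sec:reconstruct} that recovers constant-size per-vertex values. For each $v$, introduce a \emph{state} bit $s_v \in \{0,1\}$, where $s_v = 1$ signifies that $v$ has been matched upward by its parent. The states obey the recursion $s_r = 0$ at the root, and for any non-root $v$ with parent $p$, $s_v = 1$ iff $s_p = 0$ and the child-match pointer at $p$ equals $v$. Once every $s_v$ is known, the matching is simply
\[
M = \{(v,\,\text{pointer}(v)) : s_v = 0 \text{ and } \text{pointer}(v) \neq \text{null}\},
\]
which can be emitted in $O(1)$ rounds by a read-and-filter pass over the vertices.

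To compute $\{s_v\}_{v \in V}$, I would set up a \degwei{} tree $T'$ with $W(v)$ encoding $\text{pointer}(v)$ together with the constants $c_v$ and $c_v'$, and define problem-specific contracting functions. The sibling contracting function is immediate: sibling leaves have no descendants, so each leaf's state depends only on its parent's state and pointer, and the sibling group is summarized by a constant-size pointer-identity record. The connected contracting function acts on a component $C$ with top vertex $r_C$ by computing, for each external child $u$ of $C$, the two-bit input-output map $s_{r_C} \mapsto s_u$ via a linear-time local sweep of $C$ that propagates the recursion through all internal vertices. This summary has size $O(\deg_{T'}(r_C))$, preserving the degree-weighted invariant, and correctly captures all information needed by the rest of the tree. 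Invoking Theorem~\ref{thm:main} followed by the reverse-contraction procedure of Section~\ref{sec:reconstruct} then yields every $s_v$ in $O(1/\epsilon^3)$ AMPC rounds with $O(n^\epsilon)$ memory per machine and $\widetilde{O}(n)$ total memory.

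The main obstacle I expect is verifying that the connected contracting function's output indeed fits within the degree-weighted budget, since one might fear that a component with many internal branching points forces a large encoding. The saving observation is that the only coupling between $C$'s interior and the outside world is the single-bit state at $r_C$: once that bit is fixed, every internal state (and hence every external child's state) is determined. Therefore the output map compresses to a mere $2$ bits per external child, totaling $O(\deg(v_C))$ bits. With this invariant established, all hypotheses of Theorem~\ref{thm:main} are met, and outputting $M$ from the computed state bits is an immediate post-processing step that does not affect the round complexity.
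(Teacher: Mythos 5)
The proposal is correct, and the recursion $s_v = 1 \Leftrightarrow (s_p = 0 \text{ and } \mathrm{pointer}(p) = v)$ is exactly the top-down decision that the paper also uses, just phrased vertex-by-vertex rather than path-by-path. A quick inductive check confirms it: when $s_v = 0$ the matching restricted to $T_v$ realizes $c_v$, and when $s_v = 1$ it realizes $c_v'$, so starting from $s_r = 0$ the constructed $M$ attains $c_r$.

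What differs is the machinery used to push these bits down the tree. The paper's proof first observes that the pointer edges alone form a collection of vertex-disjoint paths, then segments each path via the preorder decomposition into $O(n^\epsilon)$-length pieces, precomputes for each segment the two outcomes ``root matched up'' vs.\ ``root free,'' contracts, and repeats; this yields the matching in only $O(1/\epsilon)$ \emph{additional} rounds on top of the DP. You instead treat $\{s_v\}$ as a new recursively-defined per-vertex quantity and hand it to the full $n^\epsilon$-tree-contraction pipeline (Theorem~\ref{thm:main}) plus the reverse-contraction procedure of Section~\ref{sec:reconstruct}, with a connected contracting function that records, per external child $u$, the $2$-bit map $s_{r_C} \mapsto s_u$, and a sibling contracting function that records which (if any) of a leaf-star's members is the parent's pointee. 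Your budget argument — that the component couples to the outside only through the single bit $s_{r_C}$, so the stored map is $O(\deg(v_C))$ bits — is the right justification that the contracting function respects the degree-weighted constraint. This is heavier than what the paper does (it spends the full $O(1/\epsilon^3)$ of Theorem~\ref{thm:main} plus reconstruction where the paper needs only $O(1/\epsilon)$ extra rounds), but both satisfy the stated $O(1/\epsilon^3)$ bound, and your route has the advantage of being a direct instantiation of the paper's general framework rather than a special-purpose path argument. One point worth making explicit if you write this up: Section~\ref{sec:reconstruct} is presented in the paper around a bottom-up recurrence, while your $s_v$ recurrence is top-down (it needs $s_p$, not the children's values); the reverse-contraction still works because the root's state is fixed and each component's top-vertex state is determined before that component is undone, but this requires a sentence of justification rather than a bare citation.
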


\begin{proof}
Consider the tree where the edge set is reduced to just the edges along pointers. Since every vertex has at most one child, this must be a graph of disjoint paths. We know this must contain the MWM because for any $v$, the pointer from $v$ designates the child that maximizes $c_v$. In other words, it points to the child it must match to (if at all) such that the maximal subtree rooted at $v$ achieves a MWM. Therefore, finding a MWM on this disjoint paths graph will yield a maximum matching. Note that the root of each path (if viewed as a subforest of the original tree) tells us if the MWM on that component must match the root to its child. Therefore, a simple sequential algorithm would iterate over each path from top to bottom, recursively checking if we should match the root to its child or not (i.e., we do not match if and only if it matched to its parent already or if it has no child-match pointer). This would find the MWM, but since paths can be $O(n)$ in size, this is not directly distributable.

If a path is too long, simply segment it using the pre-order decomposition into segments of size $n^\epsilon$. For each segment, use this top-down algorithm to determine the MWM if the segment's root is not matched to its parent (i.e., the top-down algorithm as described), or if the root is matched to its parent (i.e., the top-down algorithm without the root vertex). Additionally, for both matches, store whether or not the last vertex in the segment gets matched. Then we simply contract the segments and repeat. Note that when we repeat on a path where nodes represent contracted segments, a super vertex could theoretically be matched above and below and have the matching be valid or possibly require it not be matched to either edge. This is accounted for by remembering if the last vertex is matched in the super vertex (for if it is, the descending edge cannot be selected) given we match the top edge. If we recursively apply this algorithm and then reverse the process, we will eventually achieve a maximum matching in $O(1/\epsilon)$ additional rounds for a total of $O(1/\epsilon^3)$ rounds.
\end{proof}

This is sufficient to prove Lemma~\ref{lem:mwm}.

\begin{proof}[Proof of Lemma~\ref{lem:mwm}]
Encode MWM on a standard tree $T = (V, E)$ as a \degwei{} tree by letting $W(v) = (c_v, c_v', 0, 0, w(v,p), -\infty, -\infty, 0)$ for every $v\in V$ with parent $p\in V$, or $W(r) = (c_r, c_r', 0, 0)$ for root $r$. We know this is equivalent to the standard MWM problem on $T$. By Lemma~\ref{lem:maxmatch}, we can evaluate $D$ at each vertex $v\in V$ in the required time and space. Then applying Lemma~\ref{lem:reconstruct} completes the proof.
\end{proof}

Again, this nicely translates into a maximum weighted independent set algorithm. In this case, instead of storing a 4-tuple of weights on edges based off of different ways to match along a path, we store such a weight that represents maximum independent sets along paths. This should yield an extremely similar connected contraction process to solve maximum weighted independent set.

\begin{lemma}\label{lem:mis}
Maximum weighted independent set on trees can be solved in $O(1/\epsilon^3)$ AMPC rounds with $O(n^\epsilon)$ local memory and $\widetilde{O}(n)$ total memory.
\end{lemma}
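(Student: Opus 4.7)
The plan is to mirror the maximum weighted matching argument in Section~\ref{sec:maximummatching}, replacing the matching dynamic program with the analogous one for independent sets. First I would set up the natural bottom-up DP: for each vertex $v$ with vertex weight $w(v)$, let $i_v$ be the weight of the MWIS on $T_v$ and $i_v'$ be the same quantity under the added constraint $v \notin S$. Then $i_v' = \sum_{u \in Ch_v} i_u$ and $i_v = \max\{\,w(v) + \sum_{u \in Ch_v} i_u',\ \sum_{u \in Ch_v} i_u\,\}$. Solving this at the root yields the value of the MWIS on $T$.

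Next, by analogy with Definition~\ref{def:dp}, I would augment the DP so that contracted pieces can be encoded compactly. Each vertex $v$ carries constants $a_v, b_v$ (recording respectively the running max-offset and running sum inherited from leaves already folded into $v$), and each edge $e=(p,c)$ carries a 4-tuple $(\sigma_1(e), \sigma_2(e), \sigma_3(e), \sigma_4(e))$ encoding the optimum weight of an MWIS on the internal vertices of the chain that $e$ currently represents under the four boundary conditions on endpoint inclusion: both in, top in only, bottom in only, both out. The generalized recurrences take the form $i_{v,u} = \max\{\sigma_1(v,u)+i_u', \sigma_2(v,u)+i_u', \sigma_3(v,u)+i_u, \sigma_4(v,u)+i_u\}$ with $i_v' = \sum_{u} i_{v,u} + b_v$ and $i_v = \max\{w(v) + (\text{analogous sum with } i_u'), \ i_v', \ a_v\}$, and the base setting $\sigma_1 = 0$, $\sigma_2 = \sigma_3 = -\infty$, $\sigma_4 = 0$, $a_v = b_v = 0$ recovers the standard problem, exactly as in Proposition~\ref{prop:dp}.

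Then I would define the Connected Contraction Process for MWIS mirroring Definition~\ref{def:contract}: (a) \emph{leaf trimming}, where for a set $\mathcal{L}$ of leaf children of $p$ inside the component, the $i_\ell, i_\ell'$ are constants, so their contribution to $i_p$ and $i_p'$ can be absorbed by updating $a_p \leftarrow \max\{a_p, \max_{\ell \in \mathcal{L}}(\text{the in-or-out offset at } \ell)\}$ and $b_p \leftarrow b_p + \sum_{\ell \in \mathcal{L}} i_\ell$; and (b) \emph{chain contraction}, where for a maximal chain with interior vertex $p$ having a unique chain child $c$, both $i_p$ and $i_{p\setminus c}, i_{p\setminus c}'$ are constants, and one can compose the 4-tuples of the two adjacent edges $e_{k-1}, e_k$ into a single 4-tuple for the composite edge by enumerating the nine consistent assignments of inclusion at $p$ (analogous to the case analysis producing $\omega_1,\dots,\omega_4$ for MWM). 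After exhaustive trimming and chain contraction the resulting component $C'$ has no leaves inside and no degree-one internal vertices, hence $|C'| = O(\deg(C))$, matching the degree-weighted budget from Proposition~\ref{prop:contractprocess}.

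Finally I would invoke Theorem~\ref{thm:main}: $\C$ is the Connected Contraction Process above, and $\R$ is the sibling contractor obtained as in the proof of Lemma~\ref{lem:maxmatch} by inserting a dummy parent between the sibling leaves and their true parent and then applying $\C$. Both functions preserve the size bound and the DP invariants, so Theorem~\ref{thm:main} delivers the stated complexity of $O(1/\epsilon^3)$ AMPC rounds with $O(n^\epsilon)$ memory per machine and $\widetilde{O}(n)$ total. To reconstruct the independent set itself (not just its weight), I would follow the strategy of Lemma~\ref{lem:reconstruct}: at each vertex $v$ record the single bit indicating whether the optimum at $v$ selects $v$ into the set, and then resolve consistency along any residual long paths using the same preorder-decomposition and segment-composition trick. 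The main obstacle I anticipate is the careful accounting inside chain contraction, in particular ensuring that the $i_p, i_{p\setminus c}$ contributions of the interior vertex are counted exactly once under each of the four boundary conditions so that repeated two-edge compositions remain associative and correct.
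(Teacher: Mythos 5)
Your overall plan is exactly what the paper gestures at in its one-sentence ``proof'' of this lemma: reuse the MWM connected-contraction machinery, replacing the edge $4$-tuple of chain-matching values by a $4$-tuple of MWIS values along the chain, with the same leaf-trimming and chain-contraction moves, the dummy-parent sibling contractor, and an appeal to Theorem~\ref{thm:main}. So the architecture matches.

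The bookkeeping you write down, however, does not verify. With your stated semantics ($\sigma_1,\ldots,\sigma_4$ = MWIS weight on the chain interior given both endpoints in, top in only, bottom in only, both out), the base tuple for a single edge $(p,c)$ must be $(-\infty,\,0,\,0,\,0)$: ``both in'' is infeasible because $p$ and $c$ are adjacent, whereas one or zero endpoints in is feasible and the empty interior contributes $0$. You wrote $(0,-\infty,-\infty,0)$, a mechanical transliteration of the MWM base $(\omega(e),-\infty,-\infty,0)$; as stated it would allow two adjacent vertices to both enter the set and forbid choosing exactly one of them. The composite recurrence $i_{v,u}=\max\{\sigma_1+i_u',\,\sigma_2+i_u',\,\sigma_3+i_u,\,\sigma_4+i_u\}$ is also internally inconsistent: $\sigma_1$ and $\sigma_3$ encode ``bottom endpoint $u$ is in the set,'' yet you pair them with $i_u'$ and the unconstrained $i_u$; $\sigma_4$ encodes ``$u$ out'' yet is paired with $i_u$. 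More substantively, the pair $(i_u,i_u')$ that works in MWM does not directly transfer: in MWM a chain constraint only ever \emph{forbids} $u$ from being matched in its subtree (hence $c_u'$) or leaves it free (hence $c_u$), but in MWIS a boundary condition can \emph{force} $u$ into the set, which requires the quantity $i_u^{\mathrm{in}}$ (MWIS of $T_u$ with $u\in S$), and $i_u^{\mathrm{in}}$ is not recoverable from $(i_u,i_u')$ when $i_u=i_u'$. Either store $(i_u^{\mathrm{in}},i_u')$ as the DP pair, or re-index the $\sigma$ tuple by the membership of the chain-interior boundary vertices rather than $v$ and $u$ themselves, in which case $(i_u,i_u')$ does suffice but your stated semantics and base constants must change to match. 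You flagged the chain accounting as the main risk, and indeed that is where the attempt, as written, fails.
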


\subsection{Maximal Matching and Independent Set}\label{sec:maximalmatching}

In this section, we show how to solve maximal matching and maximal independent set, or MIS, on trees in $O(1/\epsilon^2)$ rounds using Algorithm~\ref{alg:boundedtreecontract}. This is a surprising result because Algorithm~\ref{alg:boundedtreecontract} only works on trees with degree bounded by $n^\epsilon$. We do this by transforming instances of MIS on general trees into instances of a related problem, which we call \emph{maximal independent set with bypass vertices}, on trees of degree bounded by $n^\epsilon$. Then we solve this problem using Algorithm~\ref{alg:boundedtreecontract}. This section will be discussed in terms of MIS however the methods for maximal matching are similar.

\begin{lemma}\label{lem:ampcmis}
MIS on trees can be solved in $O(1/\epsilon^2)$ AMPC rounds with $O(n^\epsilon)$ local memory and $\widetilde{O}(n)$ total memory.
\end{lemma}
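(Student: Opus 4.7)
The plan is to reduce maximal MIS on an arbitrary tree $T$ to a modified problem, which we call \emph{maximal MIS with bypass vertices}, on a tree $T'$ whose maximum degree is at most $n^\epsilon$, and then invoke Theorem~\ref{thm:boundedmain}. First I would construct $T'$: for every vertex $v$ of $T$ with $\deg(v) > n^\epsilon$, replace the star from $v$ to its children by a balanced tree rooted at $v$ whose internal nodes are new \emph{bypass} vertices, each of degree at most $n^\epsilon$. The depth of each such gadget is $O(\log_{n^\epsilon}\deg(v)) = O(1/\epsilon)$, and $|T'| = O(|T|)$. A bypass vertex is marked so that (i) it is never placed in the independent set and (ii) it imposes no maximality obligation of its own; it only forwards the ``a child is selected'' signal upward. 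Building $T'$ is a local operation that can be implemented in $O(1/\epsilon)$ AMPC rounds by allocating each high-degree vertex its own contiguous range in the hash table and rebuilding adjacencies.

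Next I would formulate a degree-weighted problem on $T'$ whose solution restricted to real vertices is a maximal independent set of $T$. The per-vertex state is a constant-size triple recording (a) whether the vertex is real or bypass, (b) whether it is in the MIS (only meaningful for real vertices), and (c) whether it has any real descendant in the MIS that is within the nearest real ancestor's view, which is all that is required to decide a real ancestor's inclusion. Concretely, the bottom-up rule is: a real vertex joins the MIS iff none of the real vertices along the real-child paths through bypass gadgets beneath it are in the MIS; a bypass vertex simply ORs the ``selected real child'' flags of its children. Because the state has size $O(1)$, the degree-weighted condition $\dim W(v) = \widetilde{O}(\deg(v))$ is trivially satisfied on $T'$.

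Now I would design a connected contracting function $\C$. Given a connected component $C$ of $T'$ with root $r$ and external children $u_1,\dots,u_k$, each $u_i$ is joined to a unique internal parent $p_i\in C$; evaluating the rules above locally reduces the problem at $r$ to a Boolean combination of the yet-unknown ``selected real descendant'' flags coming from $u_1,\dots,u_k$. For each $u_i$ I would store (i) the real/bypass type of $p_i$, (ii) the fixed contribution of the internal descendants of $p_i$ already resolved inside $C$, and (iii) the formula, of constant size per $u_i$, that combines $u_i$'s future flag with the root's state. This gives $W(v_C)$ of size $\widetilde{O}(k)=\widetilde{O}(\deg(v_C))$, as required. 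Since the original vertex has degree $\le n^\epsilon$ in $T'$, the entire component $C$ that we ever need to contract has total degree $\le n^\epsilon$ and fits in local memory, so $\C$ runs on a single machine.

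With $\C$ in hand, Theorem~\ref{thm:boundedmain} applied to $T'$ solves the modified problem in $O(1/\epsilon^2)$ AMPC rounds with the stated memory, and the actual MIS of $T$ is extracted by reading off, for each real vertex, the in/out bit using the reconstruction method from \S\ref{sec:reconstruct}, which costs another $O(1/\epsilon^2)$ rounds and is absorbed into the bound. The analogous argument for maximal matching is essentially identical: the per-vertex state encodes whether the vertex is already matched and to whom (parent, child in $C$, or child via bypass gadget), and $\C$ summarizes how an external child can close an unmatched internal chain. The main obstacle is the bookkeeping of the bypass gadget so that the maximality condition is correctly enforced for real vertices while not being spuriously triggered by bypass intermediaries; checking that the constant-size state faithfully captures this across one \compress/\rake{} phase is the key verification step.
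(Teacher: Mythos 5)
Your overall strategy matches the paper's: replace each high-degree vertex by a balanced $n^\epsilon$-ary gadget of bypass vertices, reformulate the problem so that bypass vertices only relay a ``some real descendant is selected'' signal, and then hand the resulting bounded-degree tree to Theorem~\ref{thm:boundedmain} with a connected contracting function. However, there is a real gap in your description of the contracting function, and it is precisely the point the paper is careful about.

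You claim that for a contracted component $C$ with external children $u_1,\dots,u_k$ you can store, for each $u_i$, a ``formula, of constant size per $u_i$, that combines $u_i$'s future flag with the root's state.'' This presumes that the root's membership bit decomposes as some per-child aggregate of independent contributions, but it does not. Already for a tiny example (a standard internal node $a$ with two external children $u_1,u_2$, and a standard root $r$ with children $a$ and $u_3$) the root's bit is
\[
c_r \;=\; \bigl(1-(1-c_{u_1})(1-c_{u_2})\bigr)\,(1-c_{u_3}),
\]
which is not a product or sum of per-$u_i$ terms. In general the dependence of $c_r$ on the external flags is a \emph{tree-structured} Boolean formula whose shape mirrors the internal structure of $C$, and flattening it to ``constant size per $u_i$'' loses information. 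What the paper actually stores is a \emph{compressed copy of the component itself}: it defines a simplified Connected Contraction Process that (a) folds internal leaves into per-vertex constants $a_v$, and (b) contracts maximal chains, summarizing each chain by a constant two-tuple of edge weights $(\omega_1(e),\omega_2(e))$ that record the greedy bit of the chain's top endpoint conditioned on the chain's bottom endpoint. After both steps the surviving tree $C'$ has no internal leaves and no degree-one internal vertices, so $|C'| = O(\deg(C))$, and this tree (with the tuples) is what gets written into $W(v_C)$. Your proposal skips the chain contraction entirely and replaces the tree structure with a per-child scalar; to make your argument go through, you would need to either reintroduce the compressed-tree representation with an explicit chain-contraction rule, or justify why the root formula can be linearized for MISB (it cannot). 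Everything else -- the bypass gadget, its $O(1/\epsilon)$ construction, the appeal to Theorem~\ref{thm:boundedmain}, the reconstruction step, and the remark that maximal matching is analogous -- is in line with the paper.
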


Recall that  Algorithm~\ref{alg:boundedtreecontract} requires an input that is a \degwei{} tree with degree bounded by $n^\epsilon$. To start, we reformulate MIS on trees as to a problem on \degwei{} trees with bounded degree. Our first goal is to reduce the degree of a tree and still be able to solve MIS. We define the following problem:

\begin{definition}
Consider a tree $T = (V, E)$ where some vertices are ``standard'' vertices and some vertices are ``bypass'' vertices. Let $S\subseteq V$ be a set of vertices such that every standard vertex with a child in $S$ cannot be in $S$ and every bypass vertex is in $S$ if and only if it has a child in $S$. If there exists no vertex set $S'\subseteq V\setminus S$ with at least one standard vertex such that $S\cup S'$ satisfies these two properties, then $S$ is a \textbf{maximal independent set with bypasses}. We denote this problem MISB.
\end{definition}

Viewing this simply as a tree with two types of vertices, this is quite similar to the MIS problem. If we had no bypass vertices, this would be, in fact, MIS. A bypass vertex $b$ in a sense represents its children. If any of its children are in the set and $b$'s parent is standard, then $b$'s parent cannot be in the set. If $b$'s parent is a bypass vertex, then it simply passes this property onto its parent. Additionally, note we require the addition of standard vertices to the set to show it is not independent. Therefore, we do not care about how many bypass vertices are in the set. We will say that the size of an independent set with bypasses is the number of standard vertices in it.

Next, we show that MIS on general trees can be altered to work on trees with degree bounded by $n^\epsilon$ by considering bypass vertices. Note in this theorem when we say ``almost complete $n^\epsilon$-ary tree on $x$ children'', we mean the resulting tree if you greedily filled an $n^\epsilon$-ary tree in a breadth-first manner until it had $x$ children.

For notation, on a tree $T$ with standard vertices $S_T$, we say that $B_A$ for a set of vertices $A \subseteq S_T$ is the set of bypass vertices $b$ such that there exists a descending path from $b$ to some $a\in A$ such that $a$ is the only standard vertex on the path. 

\begin{lemma}\label{lem:MISBtransform}
Consider a tree $T$. There exists a tree $T'=(V',E')$ with max degree bounded by $n^\epsilon$ and set of standard vertices $S_{T'}\subseteq V'$ such that $I'\subseteq V'$ is an MISB on $T'$ implies $I$ is an MIS on $T$ where $I' = I\cup B_I$. Additionally, $|T'| = O(n)$ and $T'$ can be constructed in $O(1/\epsilon)$ rounds with $O(n^\epsilon)$ local memory and $\widetilde{O}(n)$ total space.
\end{lemma}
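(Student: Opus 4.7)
The plan is to replace every high-degree star in $T$ by a bounded-degree bypass gadget. For each $v \in V$ with $\deg_T(v) > n^\epsilon$, I enumerate $v$'s children $c_1, \ldots, c_k$ in preorder, delete the star edges $(v, c_j)$, and insert an almost-complete $n^\epsilon$-ary tree of depth $d = \lceil \log_{n^\epsilon} k \rceil = O(1/\epsilon)$ whose root is $v$ and whose $k$ leaves are the $c_j$'s. Every internal node of the gadget other than $v$ becomes a fresh \emph{bypass} vertex; every inherited vertex stays \emph{standard}. Low-degree vertices are left untouched. Immediately the out-degree of $T'$ is at most $n^\epsilon$, and since each gadget contributes $O(\deg_T(v)/n^\epsilon)$ bypass vertices the total size is $|V'| \leq n + O(n/n^\epsilon) = O(n)$.

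The correctness argument, for any MISB $I'$ on $T'$ with $I = I' \cap V$, proceeds in three short steps. First, iterating the bypass rule ``$b \in I'$ iff $b$ has a child in $I'$'' upward through each gadget shows that a bypass vertex lies in $I'$ precisely when some descending bypass chain beneath it terminates at a standard vertex of $I$; this yields $I' = I \cup B_I$. Second, for independence of $I$, I argue that if $u \in I$ were the parent of $v \in I$ in $T$, then the $u$-to-$v$ path in $T'$---either a direct edge or a chain through $u$'s gadget---would, by the bypass propagation just established, force $u$'s child in $T'$ to be in $I'$, contradicting the standard-vertex rule applied at $u$. Third, for maximality of $I$, I show that if some $w \in V \setminus I$ could be added to $I$ in $T$, then $S' = \{w\} \cup B^*$---where $B^*$ is the bypass chain from $w$ up to $w$'s parent in $T$ (empty if the parent has low degree)---witnesses a violation of MISB-maximality of $I'$: $w$'s parent in $T$ is not in $I$ by assumption, so the added bypass chain is benign, and $w$'s children in $T'$ that are bypass lead only to standard descendants outside $I$, so they are not in $I' \cup S'$.

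The main obstacle is realizing this construction in $O(1/\epsilon)$ AMPC rounds within the memory budget. The plan is to first compute a preorder numbering of $T$ in $O(1/\epsilon)$ rounds using the algorithm of Behnezhad et al.~\cite{behnezhad2019massively}; this gives each vertex's children a canonical index $j$. Each bypass vertex is then named implicitly by a triple $(v, i, \ell)$, where $i \in \{1, \ldots, d-1\}$ is its level in $v$'s gadget and $\ell \in [n^{i\epsilon}]$ encodes its position as a length-$i$ base-$n^\epsilon$ string. Parent-child relations inside a gadget are closed form: the parent of $(v, i, \ell)$ is $(v, i-1, \lceil \ell/n^\epsilon \rceil)$ (or $v$ when $i = 1$), and the parent of $c_j$ in $T'$ is $(v, d-1, \lceil j/n^\epsilon \rceil)$. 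Because each standard or bypass child can compute its parent ID locally from $O(1)$ data, the edge set of $T'$ can be emitted in a single additional AMPC round after children are indexed, with every machine handling only $O(n^\epsilon)$ children at a time. This keeps local memory at $O(n^\epsilon)$ and total memory at $\widetilde{O}(n)$, and the round count is dominated by the initial preordering.
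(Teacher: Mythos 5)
Your construction, correctness decomposition (bypass propagation, independence, maximality), and $O(1/\epsilon)$-round implementation match the paper's proof essentially step for step, so the approach is the same. One small improvement worth noting: in the maximality step the paper exhibits only $\{v\}$ as the witness that $I'$ was not maximal, which does not by itself satisfy the bypass closure rule when $v$'s parent in $T'$ is a bypass node; your explicit witness $S' = \{w\} \cup B^*$ including the full bypass chain is the technically correct formulation.
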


\begin{proof}
Root $T$ arbitrarily. Transform $T$ into $T'=(V',E')$ as follows: for every $v\in T$ with set of children $Ch_v\subset V$ where $|Ch_v| > n^\epsilon$, replace $v$ with an almost complete $n^\epsilon$-ary tree with $v$ at the root, bypass vertices $\{b_i\}_{i\in[k-1]}$ that are the rest of the internal vertices, and $Ch_v$ as the leaves. Let $I'\subseteq V'$ be any MISB on $T'$ and let $I=I'\cap S_{T'}$ where $S_{T'}$ is the set of standard vertices in $T'$. Obviously, $T'$ is a \degwei{} tree with degree bound $n^\epsilon$, $S_{T'} = V$, and $I' = I\cup B_I$.

Since the size of a tree is within a factor of 2 of its number of leaves and $T'$'s creation clearly creates no leaves, $|T'| = O(|T|) = O(n)$. Additionally, note that the height of each tree replacing high degree vertices is at most $\log_{n^\epsilon}(n) = 1/\epsilon$. In order to implement this transformation, for each node that must be expanded into a tree, partition its children into groups of at most $n^\epsilon$ and put each group on a machine. In each machine, create a bypass node as the parent of all children in the group. Recurse on the newly created bypass nodes until all nodes can fit on one machine, at which point we can link them directly to parent $v$. This requires $O(1/\epsilon)$ rounds, as this is the the height of the tree. Obviously, it satisfies the space constraints.

Assume $I'$ is MISB on $T'$. We show $I$ is an independent set on $T$. Consider any two vertices $u,v\in V$ where $u$ is the child of $v$ and  $u\in I$ (and thus, $u\in I'$). By the construction of $T'$, $u,v\in S_{T'}$, $v$ is an ancestor of $u$ in $T'$, and there is a set of bypass nodes $\{b_i\}_{i\in[k]}$ for some $k$ such that there is a path from $u$ to $v$, $P = (u,b_1,\ldots,b_k,v)$. Since $u\in I'$ and $b_1$ is a bypass vertex, $b_1\in I'$. We can continue this line of reasoning to show $b_k\in I'$. So $v\notin I'$ since $I'$ is independent (with bypass nodes). Therefore, $I$ is an independent set on $T$.

Next, we show $I$ is maximal on $T$. Assume for contradiction there is some $v\in T\setminus I$ such that $I\cup\{v\}$ is an independent set on $T$. Then $v\in S_{T'}$, and $v\notin I'$. Let $T_v$ be the tree $v$ was expanded into, or just the tree of $v$ and its children if it was not expanded. $T_v$'s leaves are $Ch_v$, which are not in $I$ but are all in $S_{T'}$, and thus they cannot be in $I'$. We now show that all bypass vertices $b\in T_v$ are not in $I'$ by inducting on their height. At a height of 1 (i.e., with only leaf-children), $b$'s children must all be in $Ch_v$, and therefore are not in $I'$. Thus, by the rules of bypass vertices in the independent set, $b\notin I'$. For any higher up $b$, given any lower down bypass vertex is not in $I'$, then all of $b$'s children must also not be in $I'$, so $b\notin I'$. This proves that all vertices in $T_v\setminus\{v\}$ are not in $I'$.

Finally, if $v$ has a parent $p$ in $T$, then we know either $p$ is $v$'s parent in $T'$, in which case since $p\notin I$ and $p\in S_{T'}$ then $p\notin I'$, or $v$ is a leaf in the expanded tree of $p$, in which case $v$'s parent is a bypass node. In either case, $p$ cannot interfere with $v$ being put into $I'$. Thus, we have shown that $I'\cup\{v\} \neq I'$ is independent. Thus, $I'$ was not maximal. This is a contradiction, meaning $I$ must be maximal on $T$. 
\end{proof}

Next, we define a dynamic program to solve MISB. Call this dynamic program $D$. It will work bottom-up. For each $v\in V$, $D$ will have one bit to determine if $v$ is in the MISB or not. For each standard leaf, include it in the MISB. For each bypass leaf, do not include it in the MISB. For each standard internal node $v$ whose children have been evaluated, put $v$ in the MISB if none of its children are in the MISB. For each bypass internal node $B$ whose children have been evaluated, put $b$ in the MISB if any of its children are in the MISB. This is a very simple dynamic program that clearly solves MISB.

However, we must now translate this to \degwei{} trees such that we can contract connected components and still solve this problem. We will use a modified dynamic program much like the one from the previous section, simplified, and with bypass vertices accounted for. We will use the function $B(v)$ which is 1 if $v$ is bypass and 0 if $v$ is standard.

\begin{definition}\label{def:dp2}
Let $D$ be the MISB dynamic table on a given tree $T = (V,E)$ that acts on both vertices and edges. Then $D(v)= (c_v, a_v)$ and $D(e) = (\omega_1(e),\omega_2(e))$ where:
\begin{itemize}
\item $c_v$ is a function that computes whether or not $v$ is in the MISB
\item $a_v$ is a constant value
\item $\omega_1(e)$ and $\omega_2(e)$ are binary values on edge $e$
\end{itemize}
and more specifically:
\begin{align*}
c_v =& B(v)a_v\prod_{u\in Ch_v}c_{v,u}+ (1-B(v))\left(1-a_v\prod_{u\in Ch_v} c_{v,u}\right)
\end{align*}
for $c_{v,u} = c_u\omega_1(u,v) + (1-c_u)\omega_2(u,v)$.
\end{definition}

It is not too hard to see that if $a_v = 1$ for all $v\in V$ and $\omega_1(e) = 1$ and $ \omega_2(e) = 0$ for all $e\in E$, this reduces to an implementation of our greedy algorithm.

\begin{proposition}
Consider a tree $T = (V,E)$. The dynamic program $D$ from Definition~\ref{def:dp2} finds the bottom-up greedy MISB when $\omega_1(e) = 1$ and $\omega_2(e)=0$ for all $e\in E$ and $a_v = 1$ for all $v\in V$. In other words, for all $v\in V$, $c_v$ from $D(v)$ indicates whether or not $v$ is in the MISB on $T_v$.
\end{proposition}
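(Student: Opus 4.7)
The plan is to prove the proposition by structural induction on the height of $T_v$, showing that under the specified parameter settings the values $c_v$ produced by the recurrence of Definition~\ref{def:dp2} agree with the bottom-up greedy construction of an MISB.

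First I would substitute $\omega_1(e) = 1$, $\omega_2(e) = 0$, and $a_v = 1$ into the recurrence. This makes each edge-combining term collapse: $c_{v,u} = c_u \cdot 1 + (1-c_u) \cdot 0 = c_u$. Consequently the vertex recurrence simplifies to
\[
c_v \;=\; B(v)\prod_{u \in Ch_v} c_u \;+\; (1 - B(v))\Bigl(1 - \prod_{u \in Ch_v} c_u\Bigr).
\]
So for a bypass vertex $c_v$ equals the product of its children's $c$-values, and for a standard vertex $c_v$ equals one minus that product.

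Next I would pin down the indicator convention by checking leaves. With the empty product interpreted as $1$, a standard leaf ($B(v) = 0$) yields $c_v = 1 - 1 = 0$, whereas a bypass leaf ($B(v) = 1$) yields $c_v = 1$. Matching these outputs against the greedy rules --- standard leaves are in the MISB, bypass leaves are not --- fixes the convention: $c_v = 0$ means $v \in \mathrm{MISB}$ and $c_v = 1$ means $v \notin \mathrm{MISB}$. This convention becomes the invariant maintained by the induction.

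For the inductive step, I fix an internal vertex $v$ and assume the invariant holds for each child $u \in Ch_v$. If $v$ is standard, the greedy rule places $v$ in the MISB iff no child is in the MISB, i.e.\ iff $c_u = 1$ for every child; then $\prod c_u = 1$ and $c_v = 1 - 1 = 0$, matching the invariant. Otherwise some $c_u = 0$, so $\prod c_u = 0$ and $c_v = 1$, again correct. If $v$ is a bypass vertex, it belongs to the MISB iff some child does, i.e.\ some $c_u = 0$; then $\prod c_u = 0$ and $c_v = 0$. Otherwise every $c_u = 1$, the product is $1$, and $c_v = 1$. Both sub-cases preserve the invariant. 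The only real subtlety is keeping the indicator convention straight, since the recurrence encodes ``in the MISB'' by $c_v = 0$, which reads slightly against intuition; once the base case pins this down, the inductive case split on $B(v)$ is routine and the proposition follows.
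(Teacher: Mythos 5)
Your proof is correct, and the paper offers no proof of its own here (it says only ``it is not too hard to see''), so there is nothing to compare against approach-wise; your induction on height is the natural argument. Worth flagging: you correctly deduce, from substituting $\omega_1 = 1$, $\omega_2 = 0$, $a_v = 1$ into the recurrence of Definition~\ref{def:dp2} and checking leaves, that the encoding must be $c_v = 0$ for ``$v \in$ MISB'' and $c_v = 1$ for ``$v \notin$ MISB''. This is the opposite of what the surrounding prose (``$D$ will have one bit to determine if $v$ is in the MISB or not'') would lead a reader to expect, and the paper never states the convention explicitly. Your base-case check pinning it down is exactly the right move, and the case split on $B(v)$ in the inductive step then verifies all four combinations cleanly. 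The only thing you might add, if you wished to be exhaustive, is an explicit remark that the ``empty product equals $1$'' convention is what makes the leaf case fall out of the same formula as the internal case, so no separate clause is needed; you do rely on it but only state it parenthetically. Overall the argument is complete, and the observation about the inverted encoding is a genuinely useful clarification of (or possible erratum to) the paper's presentation.
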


Notice, however, that in the greedy algorithm, if $v$ is in the MISB of $T_v$, then it is also in the MISB of $T$. Therefore, our final output has our entire MISB solution, which is in contrast to the MWM solution, where we still had to compute the matching itself.

Consider a connected component $C$ with root $v$. To contract this, we will do a simplified version of the Connected Contraction Process from Definition~\ref{def:contract}. Note in MIS, instead of handling $c_v$ and $c_v'$ we just have a single bit denoting if $v$ is in the MIS or not. For consistency, we call this bit $c_v$. We need to compress $C$ into $C'$ in the same way (i.e., removing internal leaves and contracting maximal chains) and show how to update the values of $D$.

When removing internal leaves $\mathcal{L}$, we can rewrite $c_v$ as:

\[c_v = B(v)a_v\prod_{u\in Ch_v\setminus\mathcal{L}}c_{v,u}\cdot \prod_{\ell\in \mathcal{L}}\ell_{v,u} + (1-B(v))\left(1-a_v\prod_{u\in Ch_v\setminus\mathcal{L}} c_{v,u}\cdot \prod_{\ell\in\mathcal{L}} \ell_{v,u}\right)\]

As in MWM, $\prod_{\ell\in\mathcal{L}} \ell_{v,u}$ is a constant. So we can start with $a_v =1$ and update it as $a_v \gets a_v \cdot \prod_{\ell\in\mathcal{L}} \ell_{v,u}$.

Next, we consider contracting maximal chains. Again, we need to label edges with a tuple-based weight, this time a two-tuple: $D(e) = (\omega_1(e), \omega_2(e))$. Consider our maximal chain $P = (e_1,\ldots, e_k)$ with corresponding vertices $(v_1,\ldots,v_{k+1})$ which we would like to replace with $e_p$. In this problem, $\omega_1(e_P)$ indicates whether or not $v_2$ is in the bottom-up MISB \textit{given} $v_{k+1}$ is in the MISB. On the other hand, $\omega_2(e_P)$ indicates whether or not $v_2$ is in the bottom-up MISB given $v_{k+1}$ is \emph{not} in the MISB. This will help us decide whether or not $v_1$ can be in the MISB. As before, consider child, parent, and grandparent vertices $c$, $p$, and $g$ respectively, such that $c=v_{k+1}$, $p=v_k$, and $g=v_{k-1}$. Assume $c$ is in the greedy MISB. Then $p$ is in the MISB if and only if $w_1(e_k) = 1$ and $p$ is able to be in the MIS according to the rest of the computation on any other children it may have had, which is $c_{p\setminus c}$ as similarly denoted in the previous section. Using this and similar logic, we can update edge weights as follows:

\begin{align*}
\omega_1(e_{k-1},e_k) =& \omega_1(e_k)c_{p\setminus c}
\\\omega_2(e_{k-1},e_k) =& \omega_2(e_k)c_{p\setminus c}
\end{align*}

As before, these are all known constants, and we can repeatedly apply this to find $\omega_1(e_P)$ and $\omega_2(e_P)$. Thus, we have shown as before that we can contract $C$ into a tree $C'$ such that $|C'| = O(\deg(C))$ using  this simplified Connected Contraction Process. As before:

\begin{proposition}\label{prop:dp2}
This simplified Connected Contraction Process replaces any connected component $C$ with a component $C'$ such that $|C'| = O(\deg(C))$ and $c_v$ remains the same in $C$ and $C'$.
\end{proposition}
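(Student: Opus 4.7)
The plan is to mirror the proof of Proposition~\ref{prop:contractprocess}, but adapted to the simpler dynamic program for MISB in Definition~\ref{def:dp2}. The claim has two parts: correctness (the value of $c_v$ at the root $v$ of $C$, viewed symbolically as a function of the yet-unknown values $\{c_u\}_{u \in Ch_C}$, is preserved) and size ($|C'| = O(\deg(C))$). I would handle them in sequence.

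For correctness, I would analyze the two sub-processes separately and then compose. In the leaf-folding step, if $\ell$ is an internal leaf of $C$ with parent $p$, then $c_\ell$ is already determined by the dynamic program (it depends only on whether $\ell$ is standard or bypass) and the edge weights $\omega_1(\ell,p), \omega_2(\ell,p)$ are known, so $c_{p,\ell}$ is a known constant. Pulling it out of the product in the formula for $c_p$ and absorbing it via $a_p \gets a_p \cdot c_{p,\ell}$ yields an identical symbolic expression for $c_p$ in terms of $p$'s remaining children. For chain contraction, I would induct on the number of composed edges. The single-edge base case is immediate from the definitions of $\omega_1, \omega_2$. For the inductive step, the update rules $\omega_1(e_{k-1},e_k) = \omega_1(e_k)\, c_{p\setminus c}$ and $\omega_2(e_{k-1},e_k) = \omega_2(e_k)\, c_{p\setminus c}$ use only known constants (since chain-interior $p$ has a single child $c$ inside the chain, so $c_{p\setminus c}$ is determined by $p$'s constants and already-processed siblings of $c$), and they encode exactly the composed indicator of whether the top endpoint of the partial chain belongs to the MISB given the status of the bottom endpoint. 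Telescoping along the chain yields an edge $e_P$ whose weights encode the correct functional relationship between $v_{k+1}$ and $v_2$. Composing both sub-processes over all of $C$ therefore preserves $c_v$.

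For the size bound, I would observe that once both sub-processes have been iterated to a fixed point, $C'$ contains no internal leaves (any such leaf would have been folded) and no internal vertex with exactly one child (any such vertex would have been swallowed into a maximal chain). In a rooted tree where every internal vertex has at least two children, the number of internal vertices is strictly less than the number of leaves, so $|C'| = O(|\mathcal{L}_{C'}|)$. Every leaf $\ell \in \mathcal{L}_{C'}$ must have at least one child in $T$ lying outside $C'$ (otherwise it would have been an internal leaf of $C$ and hence folded), and these external children are precisely the children of $C'$ as a contracted vertex. Therefore $|\mathcal{L}_{C'}| \leq \deg(C') = \deg(C)$, and the size bound follows.

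The main obstacle I anticipate is making precise that the two sub-processes can be interleaved and iterated to a fixed point without disrupting the invariants used above. A chain contraction can expose a new internal leaf at the old chain's top (if its only surviving descendant in $C$ was the chain), and leaf folding can, in turn, create a new degree-one internal vertex. I would resolve this by observing that each application of either sub-process strictly decreases $|C|$, so the process terminates in at most $|C|$ steps, and at the fixed point both structural properties hold simultaneously; only then do I invoke the counting argument for the size bound. A secondary nuance is that, unlike MWM, the MISB dynamic program treats bypass vertices and standard vertices asymmetrically, so I would check that both cases of the formula for $c_v$ are linear in the product $\prod_{u} c_{v,u}$ and hence commute with leaf folding in the same way.
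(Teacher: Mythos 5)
Your approach matches the paper's: the paper's treatment of Proposition~\ref{prop:dp2} is essentially the exposition preceding it together with ``as before,'' referring to the analogous MWM argument (Proposition~\ref{prop:contractprocess}), and you reconstruct exactly that argument while also spelling out the iterate-to-a-fixed-point issue that the paper glosses over. There is, however, one step in the size bound that does not quite go through as stated, in your proposal and in the paper alike. You assert that at the fixed point $C'$ has ``no internal vertex with exactly one child,'' and then apply the standard count for rooted trees whose internal vertices each have $\geq 2$ children. But chain contraction can only eliminate a vertex $p$ whose single child is its \emph{only child in $T$}: if $p$ has one child inside $C$ and one or more children outside $C$, then $c_{p\setminus c}$ depends on the yet-unknown DP values of those external children and is not a constant, so $p$ cannot be folded into a chain edge. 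Consequently $C'$ may contain internal vertices with exactly one $C'$-child, and the premise of your leaf-versus-internal-node count fails. The conclusion $|C'| = O(\deg(C))$ is still true, but needs a slightly more careful argument: any such unary-in-$C'$ internal vertex has at least one external child, and distinct vertices of $C'$ have disjoint sets of external children, so there are at most $\deg(C')$ of them; combining this with $|\mathcal{L}_{C'}| \leq \deg(C')$ (every leaf of $C'$ would otherwise be an internal leaf and hence folded) and the usual bound that branching nodes number fewer than leaves gives $|C'| \leq 3\deg(C') = O(\deg(C))$. Your proof is thus no less rigorous than the source, but the patched count is worth recording.
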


Now we introduce the degree-weighted problem. As before, we let edge weights be stored by the associated child endpoint.

\begin{definition}\label{def:dwmisb}
Consider a \degwei{} tree $T = (V,E,W)$ where we have $W(v) = (D(v), D(v,p))$ for all $v\in V$ with parent $p\in V$ and $W(r) = (D(r))$ for root $r$, where all $D$ values are stored as explained in the simplified Connected Contraction Process. Then \textbf{maximum independent set with bypass vertices} on $T$ is equivalent to the problem of solving $D(v)$ for all $v\in V$.
\end{definition}

By Proposition~\ref{prop:dp2} and Lemma~\ref{lem:MISBtransform}, solving MISB on \degwei{} trees can be used to solve MIS on standard trees with the aforementioned input $W$. Now we apply the main algorithms.

\begin{proof}[Proof of Lemma~\ref{lem:ampcmis}]
All we need to do is introduce a connected contracting function $\C$. For any component $C\subseteq V$, to compute $\C(C)$, apply the simplified Connected Contraction Process to contract the component and get $W(C)$. By Proposition~\ref{prop:dp2}, this ensures that $\dim(W(C)) = \widetilde{O}(\deg(v))$, thus this is a valid weight vector. It is also a valid connected contracting function since $c_v$ does not change as a function of $Ch_C$. Therefore, by Theorem~\ref{thm:boundedmain}, we can solve MISB on \degwei{} trees in $O(1/\epsilon^2)$ AMPC rounds with $O(n^\epsilon)$ memory per machine and $\widetilde{O}(n)$ total memory. By the efficient transformation from Lemma~\ref{lem:MISBtransform}, this can be used to solve MIS on normal trees with the same complexities.
\end{proof}

A few simple modifications yields the same result for maximal matching.

\begin{lemma}\label{lem:maxlmatch}
Maximal matching on trees can be solved in $O(1/\epsilon^2)$ AMPC rounds with $O(n^\epsilon)$ local memory and $\widetilde{O}(n)$ total memory.
\end{lemma}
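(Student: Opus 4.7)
My plan is to mirror the proof of Lemma~\ref{lem:ampcmis} for MIS, replacing independent set constructs with matching constructs throughout. I will first introduce a degree-reduction transformation analogous to Lemma~\ref{lem:MISBtransform}: each high-degree vertex of $T$ is expanded into an almost complete $n^\epsilon$-ary tree of bypass vertices, producing a tree $T'$ with maximum degree $n^\epsilon$ and size $O(n)$, constructible in $O(1/\epsilon)$ AMPC rounds. I will then define \emph{maximal matching with bypasses} (MMB) on $(T', S_{T'})$ as a selection of edges such that every standard vertex is incident to at most one selected edge, while every bypass vertex is incident to either zero or exactly two selected edges (one parent edge and one child edge, forming a pass-through), and such that no additional standard-to-standard virtual edge can be added without violating these constraints. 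An MMB on $T'$ corresponds bijectively to a maximal matching on $T$ by identifying each maximal bypass path of selected edges in $T'$ with the unique original edge it represents in $T$.

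Next, I will formalize a bottom-up greedy DP for MMB analogous to Definition~\ref{def:dp2}: for each vertex $v$ we store $D(v)=(m_v,a_v)$ with a single bit $m_v$ that records whether $v$ is matched (for standard vertices) or currently on an active virtual path (for bypass vertices), together with a constant $a_v$ capturing the aggregated state from contracted subtrees. Each edge $e$ carries a tuple $D(e)=(\omega_1(e),\omega_2(e))$ of binary indicators representing whether the edge is selected when its lower endpoint is matched, respectively unmatched. For standard $v$ the rule is the greedy ``match to the first available child''; for bypass $b$ the rule sets $m_b=1$ iff exactly one child is available to be extended upward. Initializing $a_v=0$, $\omega_1(e)=1$, $\omega_2(e)=0$ reduces to the standard bottom-up greedy maximal matching, and applying this greedy on $T'$ produces a maximal matching on $T$.

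I will then construct a connected contracting function by adapting the simplified Connected Contraction Process from the MIS case. For internal leaves $\mathcal L$ sharing parent $p$, the contribution of each $\ell\in\mathcal L$ to $m_p$ factors as a constant once $\ell$'s state is fixed, so we fold it into $a_p$ and remove $\ell$. For maximal chains $P=(e_1,\ldots,e_k)$, combining adjacent edges $e_{k-1},e_k$ with middle vertex $p$ having only child $c$ yields a new tuple $(\omega_1(e_{k-1},e_k),\omega_2(e_{k-1},e_k))$ computed by case-analyzing whether the match extends past $p$ into the grandparent and using the resolved bit $m_c$ (as $c_{p\setminus c}$ and $c$ itself are known constants once we are inside a single chain). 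This yields $C'$ with $|C'|=O(\deg(C))$ and preserves $m_v$ at the root as a function of the children's data, giving a connected contracting function of the required size. Applying Theorem~\ref{thm:boundedmain} to $T'$ gives an $O(1/\epsilon^2)$-round AMPC algorithm within the stated memory bounds, and composing with the $O(1/\epsilon)$-round transformation from $T$ preserves the asymptotic complexity.

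The main obstacle I anticipate is getting the bypass semantics right. Unlike MIS, where a bypass vertex is simply an OR of its children, matching requires that \emph{exactly one} incident bypass edge can be extended up, and that a bypass vertex's ``available to be matched up'' status correctly models the greedy's behavior when several virtual edges compete through the same expanded subtree. Verifying that the chain-contraction and leaf-folding preserve this invariant, and that maximality is actually enforced across repeated contractions (rather than merely matching being valid), will require a careful inductive argument analogous to Propositions~\ref{prop:dp} and~\ref{prop:dp2} but tailored to the transit semantics.
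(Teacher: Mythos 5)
The paper does not actually give a proof of Lemma~\ref{lem:maxlmatch}; it only asserts that the result follows from ``a few simple modifications'' to the MIS argument of Lemma~\ref{lem:ampcmis}. Your plan---degree-reduce via bypass vertices as in Lemma~\ref{lem:MISBtransform}, define a greedy bottom-up DP, build a connected contracting function by folding leaves and contracting chains, then invoke Theorem~\ref{thm:boundedmain}---is exactly the intended modification, and at the structural level it is sound.

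There is, however, a genuine error in your DP rule for bypass vertices, and it is in exactly the place you flagged as your main worry. You write that ``for bypass $b$ the rule sets $m_b=1$ iff \emph{exactly one} child is available to be extended upward,'' but the bottom-up state of a bypass vertex should be the bit ``\emph{some} match could be extended up through $b$,'' which is an OR over the children---the same rule as for bypass vertices in MISB. The ``exactly one incident child edge / exactly two incident edges'' picture is correct only for the \emph{final selected paths} after a top-down reconstruction pass; it is not the bottom-up invariant. With ``exactly one,'' the DP is simply wrong: if a high-degree vertex $v$ is expanded and some bypass node $b$ in its expansion has two available descendants, you set $m_b=0$, $v$ sees no available child, and $v$ stays unmatched even though the original tree clearly admits a match at $v$, breaking both maximality and the correspondence to the greedy on $T$. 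Once you replace ``exactly one'' with ``at least one,'' you will notice the recurrence for ``$v$ was available when processed'' coincides term-for-term with the MISB recurrence for ``$v$ is in the set'' (a standard vertex is available iff no child is available; a bypass vertex is available iff some child is), so the leaf-folding and chain-contraction of the simplified Connected Contraction Process transfer essentially verbatim and yield $|C'|=O(\deg(C))$ as required; the matching itself is then read off with one additional top-down pass that picks, for each matched standard vertex, one available descendant to realize the match, which costs only $O(1/\epsilon)$ extra rounds and stays within the stated $O(1/\epsilon^2)$ bound.
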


\subsection{Dynamic Expressions and Tree Isomorphism}\label{sec:treeiso}
Miller and Reif~\cite{miller1991parallel} discuss how tree contractions can be used to probabilistically solve tree isomorphism. Let $T$ and $T'$ be rooted trees on $n$ nodes with tree height $h$ (it is not hard to see that height can be computed by tree contractions in $O(1/\epsilon^2)$ AMPC rounds by expanding high-degree vertices into $n^\epsilon$-ary trees with dummy nodes that don't add to the height). $T$ and $T'$ are isomorphic if there is a mapping between their vertices $\phi:V\to V'$  such that if $u$ is a  child of $v$ in $T$, then $\phi(u)$ is a child of $\phi(v)$.

A useful way to represent trees when considering tree isomorphism is as their canonically associated polynomials. On a tree $T$ with height $h$, we introduce $h$ variables $x_1,\ldots,x_h$ to define polynomials in $T$. For instance, for any vertex $\ell\in \leaves(T)$, let the polynomial associated with $\ell$ be $Q_\ell=1$. For any internal node $v$ of height $h_v$, we let $Q_v = \Pi_{u\in \children(v)} (x_{h_v} - Q_u)$. For some parameter $\alpha$, then, their algorithm, called \textsf{Randomized Tree Isomorphism}, is as follows:

\begin{enumerate}
\item If we are given a list of primes between $hn^{\alpha+1}$ and $2hn^{\alpha+1}$, then pick a prime in that range from the given list. Otherwise, pick a random integer in the range $(hn^{\alpha+1})^2\leq m \leq 2(hn^{\alpha+1})^2$.
\item For each node $v$ of $T$ or $T'$, assign a polynomial $Q_v$ to $v$. This is left in terms of the polynomials of its children.
\item Assign to each $x_i$ a random value between $1$ and $m$. 
\item Evaluate $Q_T$ and $Q_{T'}$ using dynamic expression evaluation and return $w$ and $w'$, respectively.
\item If $w\neq w'$, output ``not isomorphic'', else, output ``isomorphic''.
\end{enumerate}

They show the following:

\begin{theorem}[Miller \& Reif\cite{miller1991parallel}]
In the PRAM model, \textsf{Randomized Tree Isomorphism} tests tree nonisomorphsim in $O(\log n)$ time with $n/\log n$ processors with probability of error less than or equal to 1/2. If a table of primes is given, then the procedure works with a probability of error at most $1/n^\alpha$.
\end{theorem}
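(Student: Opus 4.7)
My plan is to decompose the theorem into three parts: (i) argue that the assignment $v \mapsto Q_v$ is a canonical invariant, so that $Q_T$ and $Q_{T'}$ coincide as formal polynomials if and only if $T$ and $T'$ are isomorphic; (ii) apply Schwartz--Zippel to bound the probability that the random evaluations $w, w'$ coincide when the polynomials differ; (iii) reduce the actual computation of $w$ and $w'$ to dynamic expression evaluation, which Miller and Reif solve via tree contraction in $O(\log n)$ PRAM time with $n/\log n$ processors.

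For (i), I would induct on height. Every leaf has $Q=1$, and at height $h_v$ the polynomial $Q_v = \prod_{u \in \Gamma(v)}(x_{h_v} - Q_u)$ is invariant under permutations of the factors, matching the freedom of isomorphism to reorder siblings. Conversely, viewing $Q_v$ as a univariate polynomial in $x_{h_v}$ over the coefficient ring $\mathbb{Z}[x_1,\ldots,x_{h_v-1}]$, unique factorization recovers the multiset $\{Q_u : u \in \Gamma(v)\}$ from $Q_v$; by the inductive hypothesis, each $Q_u$ determines its subtree up to isomorphism, so the entire rooted subtree at $v$ is determined.

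For (ii), set $D = Q_T - Q_{T'}$; then $D$ has total degree at most $n$ in at most $h$ variables and is nonzero exactly when the trees are nonisomorphic. Sampling each $x_i$ uniformly from $\{1,\ldots,m\}$ and reducing all arithmetic modulo the chosen $m$, Schwartz--Zippel yields probability at most $n/m$ of a false collision. Choosing $m$ to be a prime drawn from the supplied table of primes in $[hn^{\alpha+1}, 2hn^{\alpha+1}]$ gives failure probability at most $1/n^{\alpha}$; when only a random integer in $[(hn^{\alpha+1})^2, 2(hn^{\alpha+1})^2]$ is available, one combines Schwartz--Zippel with the standard density bound that a random integer of this size has a prime factor exceeding $hn^{\alpha+1}$ with constant probability, which brings the total error below $1/2$. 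The main obstacle is that a naive integer evaluation of $Q_v$ doubles in bit length at each level of the tree (because each $Q_v$ is a product of $\deg(v)$ linear terms whose magnitudes compound), so reducing mod $m$ at every vertex is essential to keep per-vertex values at $O(\log n)$ bits; only with this reduction in place do the PRAM processors do $O(1)$ work per gate.

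Step (iii) is then essentially immediate: after the random sample is fixed and arithmetic is performed modulo $m$, each $Q_v$ becomes a fixed arithmetic expression of size $O(\deg(v))$ (a product of $\deg(v)$ subtractions involving its children's values), and the whole tree becomes an expression tree of size $O(n)$. Miller and Reif's tree contraction evaluates such an expression in $O(\log n)$ PRAM time using $n / \log n$ processors, yielding $w$ and $w'$; the algorithm then declares nonisomorphism iff $w \neq w'$, achieving exactly the stated time, work, and error bounds.
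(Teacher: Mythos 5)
The paper does not contain a proof of this theorem: it is quoted verbatim from Miller and Reif~\cite{miller1991parallel}, with the surrounding text saying only ``They show the following,'' and it is then used as a black box in the proof of Lemma~\ref{lem:treeiso}. So there is no in-paper argument to compare yours against. Evaluating your reconstruction on its own terms, it does track the standard Miller--Reif route: (i) show $T \mapsto Q_T$ is a complete isomorphism invariant via unique factorization of the monic polynomial $Q_v$ into linear factors in $x_{h_v}$ over the coefficient ring, (ii) apply a Schwartz--Zippel bound over the chosen modulus, and (iii) evaluate the resulting expression tree by parallel tree contraction, which is the part that delivers $O(\log n)$ time with $n/\log n$ processors. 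Your observation that the per-gate reduction modulo $m$ is what keeps intermediate values at $O(\log n)$ bits --- and hence per-processor work constant --- is exactly the load-bearing implementation detail here, and it is good that you called it out explicitly rather than sweeping it into ``evaluate the expression.''

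One step that you left implicit and should make explicit: to invoke Schwartz--Zippel over $\mathbb{F}_p$ in the random-$m$ branch (and indeed over $\mathbb{F}_m$ in the prime branch) you need $D = Q_T - Q_{T'}$ to remain a nonzero polynomial \emph{after} reduction modulo the prime, which does not automatically follow from $D \neq 0$ over $\mathbb{Z}$. The clean fix is to notice that your unique-factorization argument in (i) does not actually use anything special about $\mathbb{Z}$: the coefficient ring $\mathbb{F}_p[x_1,\ldots,x_{h_v-1}]$ is also a UFD, every $Q_v$ is still monic in $x_{h_v}$ there, and the leaf base case $Q_\ell = 1$ survives reduction, so the same induction shows $Q_T \equiv Q_{T'} \pmod p$ iff $T \cong T'$, for every prime $p$. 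Phrasing (i) purely over $\mathbb{Z}$, as you did, leaves this small but real gap; stating the invariance claim over an arbitrary field (or at least over every $\mathbb{F}_p$) closes it and makes step (ii) airtight in both branches.
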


Their work also extends to canonical labelings for trees. We could extend our work in a similar way, but for the sake of only highlighting main results, we restrict our focus to tree isomorphism. We will implement this algorithm in AMPC and show:

\begin{lemma}\label{lem:treeiso}
Given two trees $T$ and $T'$ on $n$ nodes with height $h$, there exists an $O(1/\epsilon^2)$ round low-memory AMPC algorithm for determining nonisomorphism between $T$ and $T'$ with error probability less than $1/n^\alpha$ for some $\alpha$ when given access to a table of primes, and an error probability less than $1/2$ otherwise. The memory per machine is $O(n^\epsilon)$, and the total memory is $\widetilde{O}(n)$.
\end{lemma}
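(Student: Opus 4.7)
The plan is to execute Miller and Reif's \textsf{Randomized Tree Isomorphism} algorithm directly in the AMPC model, delegating the nontrivial work to our bounded-degree tree contraction framework. Steps 1, 3, and 5 (prime or integer selection, sampling values for the $x_i$'s, and the final comparison of $w$ and $w'$) each take only $O(1)$ rounds and need no more than trivial communication. For Step 2, I need the height $h_v$ of every vertex in order to attach the correct variable $x_{h_v}$ to each $v$; as the excerpt already observes, heights can be computed in $O(1/\epsilon^2)$ AMPC rounds by a straightforward ``max plus one'' dynamic program after expanding each high-degree vertex into a shallow $n^\epsilon$-ary skeleton of dummy nodes that contribute nothing to the height.

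The bulk of the argument is Step 4: dynamic evaluation of $Q_T$ and $Q_{T'}$ once the $x_i$'s have been fixed to scalars in $[1,m]$. My plan is to first rewrite each tree as a binary expression tree of subtraction and multiplication gates, replacing every internal vertex $v$ by a balanced binary tree of multiplication gates whose leaf-inputs are the linear forms $x_{h_v} - Q_u$, one per child $u$ of $v$. The resulting expression tree has maximum degree bounded by $2$, which is at most $n^\epsilon$, total size $O(n)$, and still evaluates to $Q_v$ at the root of each gadget. On this binary expression tree I would design a unary sibling contracting function $\R^1$ that absorbs a leaf whose value is already a scalar into the constant part of its parent gate, leaving the parent as a degree-$1$ gate implementing a linear map $y \mapsto ay + b$ of its remaining unknown input, and a unary connected contracting function $\C^1$ that composes two such linear maps along a single edge. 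Since composition of linear maps is closed, each partially-evaluated gate stores only an $O(1)$-sized pair $(a,b)$ throughout.

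With $\C^1$ and $\R^1$ in hand, Theorem~\ref{thm:dptwo} lifts them to an $n^\epsilon$-ary connected contracting function $\C$ (and a matching sibling contracting function), and because the binary expression tree has maximum degree $2$, Theorem~\ref{thm:boundedmain} applies directly and evaluates both $Q_T$ and $Q_{T'}$ in $O(1/\epsilon^2)$ AMPC rounds with $O(n^\epsilon)$ memory per machine and $\widetilde{O}(n)$ total memory. Comparing the two resulting values and declaring ``not isomorphic'' when they differ takes $O(1)$ additional rounds, and the error probability bounds of Miller and Reif's original analysis carry over verbatim, since our arithmetic is exact modulo the chosen prime.

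The main obstacle is controlling the size of a contracted component's stored data: a general component with $k$ external unknowns could in principle represent a multivariate polynomial of size exponential in $k$, because a multiplication gate fed two linear expressions produces a bilinear expression rather than a linear one. This blowup is prevented by the order of operations inside the local $2$-tree-contraction that Theorem~\ref{thm:dptwo} runs on each component: $\R^1$ always fires first, reducing branching vertices to degree-$1$ linear gates before any $\C^1$ composition occurs, so every composition acts on a gate with exactly one live unknown input and therefore stays within the linear form. Consequently the residual expression tree stored in $W(C)$ has size proportional to its number of unknown leaves, which is $O(\deg(C))$, and degree-weightedness is preserved. Verifying this invariant reduces to checking that multiplication of a linear form by a scalar is again a linear form, which is immediate.
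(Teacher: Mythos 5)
Your proposal is correct, and at its core it reaches the same place as the paper: reduce tree isomorphism to evaluating the Miller--Reif polynomial via the paper's tree-contraction machinery on a binary arithmetic tree. But you take a genuinely different route on Step~4. The paper's proof serializes $Q_T$ and $Q_{T'}$ as strings and re-parses them via the parenthesis-matching pipeline of Lemma~\ref{lem:expression} (Bar-on et al.\ style) to produce the binary expression tree; you instead build the binary expression tree \emph{directly} from $T$, replacing each internal vertex $v$ by a balanced binary gadget of multiplication gates fed by the affine forms $x_{h_v}-Q_u$. Your construction is more natural for this application since the input already is a tree, and it sidesteps the string representation entirely; the paper's route has the mild advantage of reusing Lemma~\ref{lem:expression} as a black box. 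Both cost $O(1/\epsilon^2)$ rounds overall (height computation, gadget construction, and a run of Theorem~\ref{thm:boundedmain} each fit inside that budget), so neither is asymptotically better.

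One slip worth flagging in your ``obstacle'' paragraph: the claim that ``$\R^1$ always fires first, reducing branching vertices to degree-$1$ linear gates'' is neither true in general nor the real reason the stored data stays small. A multiplication gate whose two children are both unknown is never touched by either $\R^1$ or $\C^1$, so branching vertices can and do persist in the residual component. The correct justification for the $O(\deg(C))$ bound on $W(C)$ is purely combinatorial and is already built into Theorem~\ref{thm:dptwo}'s proof: once the local $2$-tree-contraction has eliminated every known leaf and every degree-$1$ internal vertex, the remaining expression tree has all internal vertices of degree $\geq 2$ and all leaves unknown, and hence has $O(\ell)=O(\deg(C))$ vertices where $\ell$ is the number of unknown leaves. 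Your conclusion is right; just replace the ``order of operations'' argument by this observation. The rest of the proposal --- the design of $\C^1$ composing linear maps, the lift via Theorem~\ref{thm:dptwo}, and the invocation of Theorem~\ref{thm:boundedmain} on the degree-$2$ tree --- is sound and correctly tracks the $O(n^\epsilon)$ per-machine and $\widetilde{O}(n)$ total memory bounds.
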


However, to show that this implementation works, we must describe how to execute dynamic expression evaluation. In this problem we are given a string of length $n$ which may include numbers, operators $+$, $-$, $\times$, $\div$, and $**$ (exponent), and parentheses. It must be a valid arithmetic expression. An example includes ``$2+5-(3+2\times 6)-9$''.

\begin{lemma}\label{lem:expression}
Given a $n$-length string-based representation of an arithmetic sequence involving $+$, $-$, $\times$, and $\div$ operators, there exists an $O(1/\epsilon^2)$ round low-memory AMPC algorithm for evaluating the expression. The memory per machine is $O(n^\epsilon)$, and the total memory is $\widetilde{O}(n)$.
\end{lemma}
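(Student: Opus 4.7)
The plan is to reduce expression evaluation to a tree contraction problem and invoke Theorem~\ref{thm:boundedmain}. I would proceed in two phases: first, build a binary parse tree from the input string; second, evaluate the parse tree via tree contraction with appropriate contracting functions.

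For parsing, I would tokenize the input and construct a binary parse tree whose leaves are numeric constants and whose internal nodes are the operators. Using a parallel prefix-sum over the $\{+1,-1\}$ markers assigned to parentheses, one can compute the parenthesis nesting depth at every position in $O(1/\epsilon)$ AMPC rounds; then sorting tokens lexicographically by (depth, position) matches parentheses, and for each operator one can locate its parent in the parse tree, namely the nearest enclosing operator at the corresponding depth of lower or equal precedence (respecting left-associativity). Each of these primitives admits an $O(1/\epsilon)$-round AMPC implementation within $O(n^\epsilon)$ local memory and $\widetilde{O}(n)$ total memory~\cite{behnezhad2019massively}. The resulting binary parse tree has $O(n)$ vertices and maximum degree $2$, so the hypothesis $\deg(v) \leq n^\epsilon$ of Theorem~\ref{thm:boundedmain} holds trivially.

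For evaluation, I need a connected contracting function. By Theorem~\ref{thm:dptwo}, it suffices to design unary functions $\C^1$ and $\R^1$, which are automatically lifted to the $n^\epsilon$-ary connected contracting function required by Algorithm~\ref{alg:boundedtreecontract}. I would follow the classical Miller--Reif~\cite{miller1989parallel} trick: each vertex $v$ stores a linear-fractional transformation $f_v(x) = (a_v x + b_v)/(c_v x + d_v)$ expressing its current value as a function of its (at most one) remaining unknown descendant's value, or a constant when no unknowns remain. The operations $+, -, \times, \div$ are closed under composition with such a transformation on one side and a constant on the other, so both $\R^1$ (raking a leaf of known value into its parent, absorbing the leaf into the parent's partial function) and $\C^1$ (composing the stored LFTs of two adjacent vertices into one) are $O(1)$-time local updates of the four coefficients. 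This maintains $O(1)$ storage per vertex, well within the \degwei{} tree constraint.

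The main obstacle will be the parsing step: building a parse tree from a string with arbitrary parenthesis nesting and operator precedence is nontrivial in the AMPC model and requires careful use of sorting and prefix-sum primitives to stay within the $O(1/\epsilon)$-round budget. Once the parse tree is in hand, evaluation follows directly from the framework already developed. Combining the $O(1/\epsilon)$ parsing rounds with the $O(1/\epsilon^2)$ tree-contraction rounds of Theorem~\ref{thm:boundedmain} yields the claimed $O(1/\epsilon^2)$ total round complexity, with the memory bounds inherited directly from Theorem~\ref{thm:boundedmain}.
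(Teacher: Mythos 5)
Your evaluation phase is valid and takes a genuinely cleaner route than the paper's. The paper compresses a component by storing its reduced tree structure directly (``$W(v)$ storing the entire component with any solved vertices inputted into their parent''), whereas you define unary $\C^1$, $\R^1$ via Miller--Reif's linear-fractional representation and let Theorem~\ref{thm:dptwo} lift them to the $n^\epsilon$-ary connected contracting function required by Algorithm~\ref{alg:boundedtreecontract}. Since the parse tree is binary, a single LFT suffices for each unary step, and the lift in Theorem~\ref{thm:dptwo} handles the bookkeeping of multiple unknown children by storing the residual $O(\deg)$-sized tree of LFT-decorated edges. This is tidy and arguably more explicit than what the paper writes.

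The parsing phase is where you diverge from the paper and where your sketch has a gap. Computing nesting depth via prefix sums and matching parentheses via a sort by (depth, position) is fine, but the claim that each operator's parent is ``the nearest enclosing operator at the corresponding depth of lower or equal precedence'' is not one of the primitives you listed, and it is not reducible to sorting and prefix sums: resolving precedence within a parenthetical is a nearest-smaller-value problem along the string (and you must additionally decide left vs.\ right using associativity). You assert an $O(1/\epsilon)$-round implementation without giving one. The paper sidesteps this entirely by adopting Bar-on et al.'s normalization pre-pass: insert two parenthesis pairs around every $+$ and $-$, one pair around every $\times$ and $\div$, and double all existing parentheses, after which every operator inside a parenthetical has the same precedence and the subtree shape is a flat left-to-right chain that can be read off locally once parentheses are matched. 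Either adopt that normalization (which then composes with your prefix-sum and matching steps to give the $O(1/\epsilon)$ parsing rounds), or supply an explicit AMPC routine for parallel nearest-smaller-element; as written, the parent-finding step of the proposal is unsubstantiated.
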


\begin{proof}
The string preprocessing methods (i.e., forming a tree structure for expression evaluation) are all heavily inspired by  Bar-on et al.~\cite{baron1985optimal}, leveraging MPC to increase performance. We describe the process introduced by Bar-on et al. here. By extending their proof of correctness and showing an equivalent yet faster implementation in MPC, we can achieve our result. Their algorithm works as follows, modified to work efficiently in MPC:
\begin{enumerate}
\item Reduce the input string into a \textit{simple expression}, where all operations not in parentheses have the same precedence, and all maximal expressions in parentheticals are also simple:
\begin{enumerate}
\item For each $+$ and $-$ operator, insert two left parentheses to its right and two right parentheses to its left.
\item For each $*$ and $\div$ operator, insert one left parenthesis to its right and one right parenthesis to its left. 
\item For each left (resp. right) parenthesis, insert two additional left (resp. right) paretheses to its right (resp. left). Add two left parentheses to the beginning and two right parentheses to the end.
\end{enumerate}
\item Match parentheses (modified for the MPC model):
\begin{enumerate}
\item Partition the string into chunks of size $n^\epsilon$ and allocate one chunk to each machine.
\item On each machine, match all instances of ``$()$'' and remove them iteratively until none exist. The remaining string on any machine must be a sequence of right parentheses followed by left parentheses. Replace each sequence by a single appropriate parenthesis and the number of them.
\item Repeat this process. When matching appropriate right and left parentheses when they have an associated count, simply decrement the number under each parenthesis. At zero, remove the parenthesis.
\end{enumerate}
\item For each left (resp. right) parenthesis, delete itself if there is a left (resp. right) parenthesis to its right (resp. left) and their corresponding matched parentheses are adjacent.
\item Assign a processor to each subexpression (at this point, this is equivalent to assigning one to each pair of matched parentheses). The root is either the last or first operator, so find each and determine which is the root (this depends on operator precedence). Next, assign a processor to each operator. Look to adjacent operators and use the operator precedence to connect operators by a directed edge to denote which operator is the parent of the other.
\end{enumerate}

Step 1 can clearly be done in constant rounds, as each substep requires constantly many local operations. In Step 2 in MPC, we are iteratively reducing partitions of size $n^\epsilon$ to constant size. Thus this requires $O(1/\epsilon)$ rounds. Step 3, much like Step 1, is clearly local and requires $O(1)$ rounds. For Step 4, Bar-on et al. show that this can be done in $O(1)$ depth in PRAM by cleverly using the pointers between matched parentheses. We too can do this, and $O(1)$ depth in PRAM corresponds to $O(1)$ rounds in MPC. This takes a total of $O(1/\epsilon)$ rounds. It results in a binary tree with leaves that are values and internal nodes that are operators that correctly represents the order of operations.

To solve the tree, we use our tree contraction algorithm with data $W(v)$ storing the entire component with any solved vertices inputted into their parent. By extending our results from the previous section, we know that components $C$ that have no internal leaves or vertices with exactly one child are of the proper size to be stored in $W(C)$. This defines our compression function and allows us to utilize Algorithm~\ref{alg:boundedtreecontract}, which works on our tree because it is binary. Thus we can compute dynamic expression evaluation in $O(1/\epsilon^2)$ rounds.

\end{proof}

We can use this for our main lemma of the subsection.

\begin{proof}[Proof of Lemma~\ref{lem:treeiso}]
We show this by implementing \textsf{Randomized Tree Isomorphism} with the desired AMPC complexity. Step 1 can clearly be done in $O(1)$ rounds. Step 2 additionally only requires $O(1/\epsilon)$ rounds, as a vertex simply looks to its children and writes out its polynomial in terms of them. It requires $O(1/\epsilon)$ instead of $O(1)$ as, with $n^\epsilon$ local memory and possibly at most $O(n)$ children, a single polynomial may require $O(n^\epsilon)$ time to write. Step 3 also obviously takes $O(1)$ rounds. Finally, for Step 4, we simply defer to Lemma~\ref{lem:expression} to show this can be done in $O(1/\epsilon^2)$ rounds. Thus this takes a total of $O(1/\epsilon^2)$ rounds.
\end{proof}


\bibliographystyle{alpha}
\bibliography{references}



\end{document}